\newtheorem{Theorem}{Theorem}[section]  
\newtheorem{Lemma}[Theorem]{Lemma}
\newtheorem{Proposition}[Theorem]{Proposition}
\newtheorem{Definition}[Theorem]{Definition}
\newtheorem{Remark}[Theorem]{Remark}
\numberwithin{equation}{section}
\def\cc#1{\mathcal{#1}}
\newcommand{\N}{{\mathbb N}}
\newcommand{\R}{{\mathbb R}}
\newcommand{\DD}{{\mathbb D}}   
\newcommand{\BB}{{\mathbb B}}   
\newcommand{\WW}{{\mathbb W}}   
\newcommand{\Z}{{\mathbb Z}}
\newcommand{\Co}{{\mathbb C}}
\newcommand{\sX}{{\mathsf X}}   
\newcommand{\sY}{{\mathsf Y}}   
\newcommand{\kD}{\mathfrak D}   
\newcommand{\kW}{\mathfrak W}   
\newcommand{\kG}{\mathfrak S}   
\newcommand{\cK}{\mathcal K}
\newcommand{\cL}{\mathcal L}
\newcommand{\cS}{{\mathcal S}}
\newcommand{\cP}{{\mathcal P}}
\newcommand{\cV}{\mathcal V}
\newcommand{\crF}{{\mathscr F}}  
\newcommand{\crD}{{\mathscr D}}  
\newcommand{\crW}{{\mathscr W}}  
\newcommand{\EuM}{\EuScript M}
\newcommand{\tti}{{\mathtt i}}
\newcommand{\tto}{{\mathtt o}}
\newcommand{\ttb}{{\mathtt b}}
\newcommand{\fF}{D}
\newcommand{\I}[2]{I^{\tiny \text{\it #1-ary}}_{#2}}
\newcommand{\uB}{{\underline B}}
\newcommand{\ux}{{\underline x}}
\newcommand{\uz}{\underline z}
\newcommand{\fh}{\phi}
\newcommand{\Fh}{\Phi}
\newcommand{\nn}{\nonumber}
\newcommand{\one}{{\mathbf 1}}
\newcommand{\tW}{\WW}
\newcommand{\Wro}{\WW^{\rot}}
\newcommand{\tI}{\tilde I}
\newcommand{\tO}{\tilde O}
\newcommand{\tg}{\tilde g}
\newcommand{\thh}{\tilde h}
\newcommand{\bh}{\bar h}
\newcommand{\bg}{\bar g}
\newcommand{\tw}{\tilde w}
\newcommand{\muq}{Q}
\newcommand{\tsum}{{\textstyle{\sum}}}
\newcommand{\tprod}{{\textstyle{\prod}}}
\newcommand{\tint}{{\textstyle{\int}}}
\DeclareMathOperator\perm{perm}
\DeclareMathOperator\Det{Det}
\DeclareMathOperator{\free}{free}
\DeclareMathOperator{\rot}{root}
\DeclareMathOperator{\ls}{ls}
\DeclareMathOperator{\capa}{cap}
\DeclareMathOperator{\unif}{unif}
\DeclareMathOperator{\boson}{boson}
\DeclareMathOperator{\ri}{ri}
\DeclareMathOperator{\Tr}{Tr}
\DeclareMathOperator{\st}{ST}
\DeclareMathOperator{\TI}{TI}
\def\quotient#1#2{%
    \raise1ex\hbox{$#1$}/\lower1ex\hbox{$#2$}%
}
\begin{document}

\title{Gaussian random permutation\\ and the boson point process}
\date{}
\author{In\'es Armend\'ariz\thanks{Universidad de Buenos Aires \& IMAS-UBA-CONICET, iarmend@dm.uba.ar},\quad
Pablo A. Ferrari\thanks{Universidad de Buenos Aires \& IMAS-UBA-CONICET, pferrari@dm.uba.ar},\quad
Sergio Yuhjtman\thanks{IMAS-UBA-CONICET, syuhjtma@dm.uba.ar}
}
\maketitle

\begin{abstract}
  We construct an infinite volume spatial random permutation $(\sX,\sigma)$, where $\sX\subset\mathbb R^d$ is locally finite and $\sigma:\sX\to \sX$ is a permutation, associated to the formal Hamiltonian 
  \begin{align}
    \nonumber
    H(\sX,\sigma) = \sum_{x\in \sX} \|x-\sigma(x)\|^2.
  \end{align}
  The measures are parametrized by the point density $\rho$ and the temperature $\alpha$.    
Spatial random permutations are naturally related to boson systems through a representation originally due to Feynman (1953). Let $\rho_c=\rho_c(\alpha)$ be the critical density for Bose-Einstein condensation in Feynman's representation.
Each finite cycle of $\sigma$ induces a loop of points of~$\sX$.
 For $\rho\le \rho_c$ we define $(\sX, \sigma)$ as a Poisson process of finite unrooted loops of a random walk with Gaussian increments that we call Gaussian loop soup, analogous to the Brownian loop soup of Lawler and Werner (2004). We also construct Gaussian random interlacements, a Poisson process of doubly infinite trajectories of random walks with Gaussian increments analogous to the Brownian random interlacements of Sznitman (2010).
  For $d\ge 3$ and $\rho>\rho_c$ we define $(\sX,\sigma)$ as the superposition of independent realizations of the Gaussian loop soup at density $\rho_c$ and the Gaussian random interlacements at density $\rho-\rho_c$. In either case we call  $(\sX, \sigma)$ a Gaussian random permutation at density $\rho$ and temperature $\alpha$. The resulting measure satisfies a Markov property and it is Gibbs for the Hamiltonian $H$.  Its point marginal  $\sX$ has the same distribution as the boson point process introduced by Shirai-Takahashi (2003) in the subcritical case, and by Tamura-Ito (2007) in the supercritical case. 
\smallskip
\smallskip

\noindent {\em Keywords}: Spatial random permutations, Bose gas, boson process, random interlacements, loop soup.

\smallskip

\noindent{\em AMS 2010 Subject Classification}:
82B10, 
82B21, 
82B26, 
60G55, 
60G50, 
60K35, 
\smallskip

\noindent{\em PACS Numbers}:
03.75.Hh, 
05.30.Jp, 
05.70.Fh 

  \end{abstract}

\section{Introduction} 

The free Bose gas has been intensively studied from different perspectives
in mathematical physics. Araki-Woods \cite{AW} and Cannon \cite{Cannon} construct it as an infinite volume quantum model; in these works the free Bose gas 
at density $\rho$ is given by a state $\varphi_\rho$
of an appropriate $C^*$-algebra $\cc A$ of observables.
In a finite volume box $\Lambda \subset \R^d$, the free Bose gas of $N$ particles at temperature $\alpha$ is described as a quantum system on 
the Hilbert space of symmetric functions
$L^2_s(\Lambda^N)$. The Hamiltonian is minus the Laplacian multiplied by a positive constant
that we may choose as $1$.
The behavior of the grand-canonical ensemble 
as $\Lambda \to \R^d$, $N/|\Lambda| = \rho$, where $|\Lambda|$ is the volume of $\Lambda$, was studied
by Einstein by diagonalizing the Hamiltonian, reaching the famous
conclusion that when $d\ge 3$ there is a critical density above which 
the lowest energy level exhibits macroscopic occupation number.
This critical density is
\begin{equation}
\label{zeta}
\rho_c(\alpha) = 
\Bigl(\frac{\alpha}{\pi}\Bigr)^\frac{d}{2} \sum_{k\ge 1}\frac{1}{k^{\frac{d}{2}}}.
\end{equation}
Instead of diagonalizing the Hamiltonian, we may write
the partition function as
\begin{align}
 \;Z_{\Lambda,N}\;&:=\;\Tr_{L^2_s(\Lambda^N)}(e^{-\frac{1}{2\alpha} H}) \;=\; 
\Tr_{L^2(\Lambda^N)}(P_+ e^{-\frac{1}{2\alpha} \Delta})\\[2mm]
&=\frac{(\alpha/\pi)^{Nd/2}}{N!} \sum_{\sigma \in \cS_N} 
\int_{\Lambda^N} \ e^{-\alpha \sum_i \|x_i-x_{\sigma(i)}\|^2}\, d\ux, \label{Zlambda}
\end{align}
where $\Delta$ is the Laplacian, $P_+$ is the symmetrization operator, $\ux=(x_1,\dots,x_N)$  and $\cS_N$ denotes the set of permutations of $\{1,\dots,N\}$. We abuse notation by writing $\sigma(x_i) = x_{\sigma(i)}$. 
The formula can be obtained by integrating the kernel of the integral operator
$P_+ e^{-\frac{1}{2\alpha} \Delta}$ along the diagonal. This partition function coincides with a statistical mechanics model of spatial random permutations on the configuration space 
$\Lambda^N \times \cS_N$ and Radon-Nikodym density 
\begin{equation}\label{finite-box}
f (\ux,\sigma) = \frac1{Z_{\Lambda,N}}\, \frac{(\alpha/\pi)^{Nd/2}}{N!}\, e^{-\alpha \sum_{i=1}^N \|  x_i-\sigma(x_i)  \|^2}
\end{equation}
with respect to the Lebesgue measure on $\Lambda^N$ and the uniform counting
measure on~$\cS_N$. 
This representation was first proposed by Feynman \cite{Feynman} in 1953  in order to explain the transition 
of Helium-4 from fluid to superfluid.
 Feynman claimed that the transition to the superfluid phase 
coincides with the appearance of infinite cycles in a typical spatial random permutation.
The approach was taken up by S\"ut\H{o} \cite{Suto1}, \cite{Suto2},  who proved Feynman's claim for Bose-Einstein condensation of the free Bose gas.

The thermodynamic limit of the point marginal of the spatial random permutation associated to \eqref{finite-box} with $N\approx\rho |\Lambda|$ has been considered by several authors. In the subcritical case when the point density is $\rho\le \rho_c$, Tamura and Ito \cite{TI}  
identified the limit with the \emph{boson point process}
studied by Macchi \cite{Macchi} and Shirai and Takahashi \cite{ST}, a process with $n$-point correlation functions
\begin{align} \label{loop-correlations}
  \varphi_n(x_1,\dots,x_n) &= \perm \bigl(K_\lambda(x_i,x_j)\bigr)_{i,j=1}^n,\\[2mm]
  K_\lambda(x,y)&:= \sum_{k \geq 1} \Bigl(\frac{\alpha}{\pi k}\Bigr)^{d/2}\,\lambda^k\,e^{-\frac{\alpha}{k} \|x-y\|^2}, \label{Klambda}
\end{align}
where the parameter
$\lambda \in (0,1)$, if $d\le 2$, and $\lambda \in (0,1]$, if $d\ge 3$, is an increasing function of the density $\rho$, and $\perm(K)$ denotes the permanent of the matrix $K\in \R^{n\times n}$. Since the series \eqref{Klambda} diverges for $\lambda>1$, the process with these correlations is well defined for densities $\rho<\infty$ in dimensions $1$ and $2$, and $\rho\le\rho_c$, $\rho_c$ the solution of $\lambda (\rho_c) =1$, if $d\ge 3$.
Tamura-Ito  \cite{TI2} consider the supercritical boson point process at density $\rho>\rho_c$ in dimension $d\ge 3$ and show that it consists of the superposition of the critical boson point process at density $\rho_c$ with another point process at density $\rho-\rho_c$, see \S\ref{pm2} for a precise description of the latter.

In this paper, to every positive density $\rho$ and temperature $\alpha$ we associate an infinite volume spatial random permutation $(\sX,\sigma)$, where $\sX$ is a discrete subset of $\R^d$ and $\sigma:\sX \to \sX$ is a permutation, that is, a bijection. The law of $(\sX,\sigma)$ is translation-invariant and has point density $\rho$.
Our construction stems from the observation that the density \eqref{finite-box} can be written as a product of weights assigned to the loops $\gamma$ induced by the cycles of the permutation~$\sigma$. An unrooted \emph{loop} of size $k$ is described as $\gamma=[x_1,\dots,x_k]$ with $x_i\in\R^d$; the square brackets indicate that $[x_2,\dots,x_k,x_1]=[x_1,\dots,x_k]$. We denote by $\{\gamma\}=\{x_1,\dots,x_k\}$ the set of points in the loop, and write $\gamma(x_i)=x_{i+1}$. A spatial permutation $\Gamma= (\sX, \sigma)$, when $\sX$ is finite, can be decomposed in loops induced by the cycles of $\sigma$.
 We will say that $\gamma\in\Gamma$ if $\gamma$ is a loop with $\{\gamma\}\subset \sX$ and $\gamma(x)=\sigma(x)$ for all  $x\in\{\gamma\}$; with this notation we have $\bigcup_{\gamma\in\Gamma}\{\gamma\}=\sX$. The numerator of the density function \eqref{finite-box} factorizes as follows
\begin{align}
  \label{independence}
  e^{-\alpha \sum_{i=1}^N \| \sigma(x_i) - x_i \|^2} = \prod_{\gamma\in\Gamma} e^{-\alpha \sum_{x\in\{\gamma\}} \| \gamma(x) - x \|^2}.
\end{align}
The independence of loops suggested by \eqref{independence} was already present in S\"ut\H{o} \cite{Suto1}.
\\

     \noindent\begin{minipage}{\linewidth}\begin{center}
    \includegraphics[scale=.4]{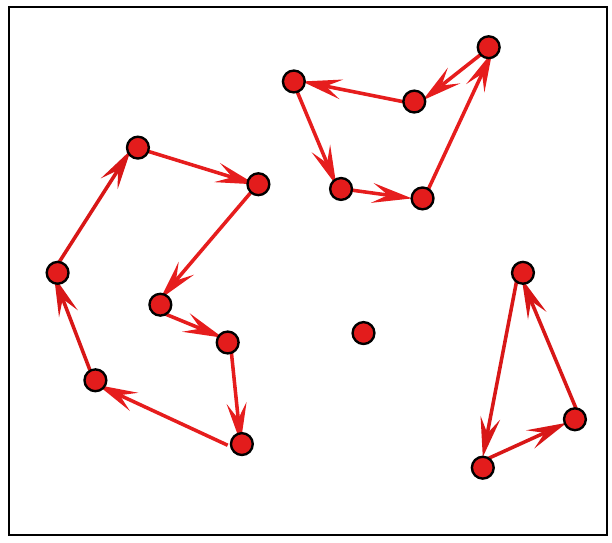}%
\\
\captionof{figure}{Loops induced by a spatial random permutation in a box. An arrow from $x$ to $y$ means $y=\sigma(x)$. Isolated dots correspond to points $x=\sigma(x)$, loops of length 1. \label{gibbs-2}}
\end{center}\end{minipage}

We define the \emph{Gaussian loop soup} as a Poisson process of unrooted loops with Gaussian increments analogous to the Brownian loop soup introduced by Lawler and Werner in \cite{LW}, see also Lawler and Trujillo Ferreras
 \cite{L-TF} and Le Jan \cite{LeJan}. The process was previously considered by Fichtner \cite{MR1121197,Wakolbinger}, who related it to the ideal Bose Gas. We consider the loop soup at density $\rho\le \rho_c$.
When $d\ge 3$ and the point density $\rho$ exceeds the critical density $\rho_c$, we allocate the leftover density $\rho-\rho_c$ to an independent realization of \emph{Gaussian random interlacements}, a Poisson process of doubly infinite trajectories of a discrete-time random walk with Gaussian increments. This process is the discrete-time version of the Brownian random interlacements introduced by Sznitman \cite{Sznitman1}.
\\
     \noindent\begin{minipage}{\linewidth}\begin{center}
    \includegraphics[scale=.4]{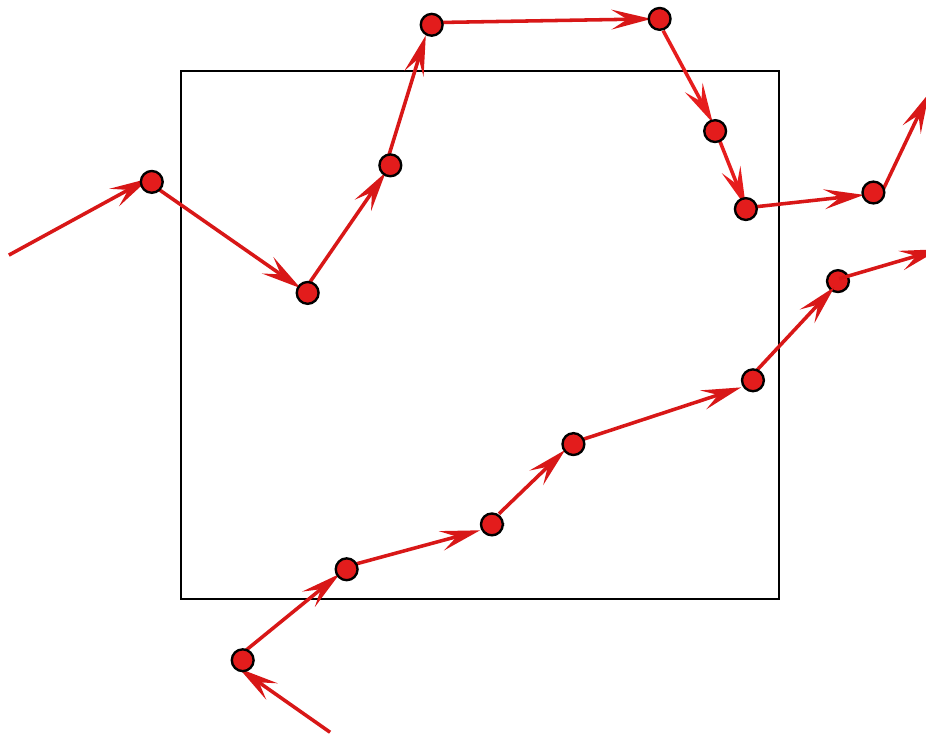}%
\\
\captionof{figure}{Random interlacements intersecting a box. \label{gibbs-23}}
\end{center}\end{minipage}

We define the {\it Gaussian random permutation in $\R^d$ with temperature $\alpha$ and density $\rho$} as the spatial random permutation obtained by superposing independent realizations of a Gaussian loop soup with increments of variance $\frac{1}{2\alpha}$ 
at point density $\rho\wedge \rho_c$, and a Gaussian random interlacements at point density $(\rho-\rho_c)^+$ with the same increments.
\\
    \noindent\begin{minipage}{\linewidth}\begin{center}
    \includegraphics[scale=.4]{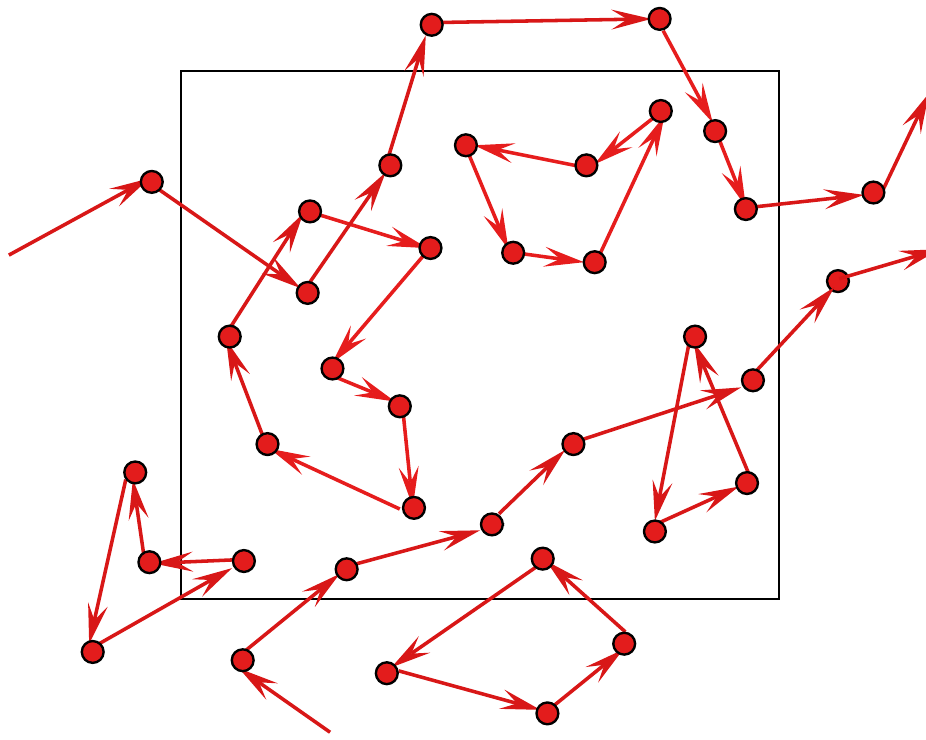}%
\\
\captionof{figure}{Loop soup plus random interlacements intersecting a box at supercritical density $\rho>\rho_c$. \label{gibbs-22} }
\end{center}
\end{minipage}

Our main result can now be stated as follows.
\begin{Theorem}
  \label{t1}
For  any dimension $d$, temperature $\alpha>0$ and point density $\rho>0$, the Gaussian random permutation with these parameters satisfies a Markov property, and is a Gibbs measure for the specification induced by the finite-volume distribution \eqref{finite-box}. Furthermore, the point marginal of this measure is the boson point process.
\end{Theorem}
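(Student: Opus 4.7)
The plan is to reduce all three claims to explicit computations enabled by the Poisson structure of the Gaussian loop soup and the Gaussian random interlacements. Fix a bounded Borel set $\Lambda \subset \R^d$. I would first classify each loop $\gamma$ of the loop soup, and each trajectory of the interlacements, according to how $\{\gamma\}$ intersects $\Lambda$: entirely in $\Lambda^c$, entirely in $\Lambda$, or crossing $\partial \Lambda$. Since the construction is Poisson in loops and trajectories, these three sub-populations are independent. Each crossing loop or trajectory decomposes into maximal excursions in $\Lambda$ and in $\Lambda^c$; the Markov property of the Gaussian random walk ensures that, conditional on the ordered sequence of entry/exit pairs and the number of steps in each excursion, the interior excursion segments are independent Gaussian bridges, independent of the exterior. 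This conditional independence of $(\chi,\sigma) \cap \Lambda$ from $(\chi,\sigma) \cap \Lambda^c$ given the boundary data is the Markov property.

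For the Gibbs property I would compute this conditional law explicitly on the interior. The contribution of each loop entirely in $\Lambda$ is a Gaussian-bridge density weighted by the appropriate unrooted loop-soup intensity of length $k$, while each interior excursion of a crossing loop or an interlacement trajectory, with fixed endpoints $y_0,y_m \in \partial \Lambda$ and fixed length $m$, contributes $\prod_i (\alpha/\pi)^{d/2} e^{-\alpha \|y_{i+1}-y_i\|^2}$. Assembling these factors across all interior loops and interior excursions, and absorbing the length-dependent prefactors into the normalization, one recovers the density $e^{-\alpha \sum_{x \in \chi \cap \Lambda} \|\sigma(x)-x\|^2}$ of \eqref{finite-box}, where $\sigma$ is extended by the prescribed entry/exit pairs on $\partial \Lambda$. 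This is exactly the specification induced by \eqref{finite-box}.

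For the point marginal I would compute the $n$-point correlation functions directly. Every point of $\chi$ belongs to a unique finite loop of the loop soup or to a unique bi-infinite trajectory of the interlacements. For $n$ distinct points $x_1,\dots,x_n$, I would enumerate the set-partitions of $\{1,\dots,n\}$ together with a cyclic ordering on each block. For a cyclic block $(i_1,\dots,i_m)$ coming from a loop-soup loop of total length $k \ge m$, summing over the excess length $k-m$ and integrating out the intermediate points by the convolution semigroup property of Gaussian densities produces exactly the product $\prod_j K_\lambda(x_{i_j},x_{i_{j+1}})$. Summing over all set-partition-with-cycles structures is the combinatorial definition of the permanent, giving $\varphi_n(x_1,\dots,x_n) = \perm(K_\lambda(x_i,x_j))_{i,j=1}^n$. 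For $\rho \le \rho_c$ this is the Shirai--Takahashi kernel; for $\rho > \rho_c$ the interlacement trajectories contribute the extra component identified in \S\ref{sri5}, reconstructing the Tamura--Ito supercritical boson process.

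The main obstacle, for both the Gibbs verification and the correlation-function identification, is the supercritical regime. One must show that the bi-infinite trajectories of the interlacements enter $\Lambda$ only finitely many times and that each of their interior excursions behaves as a Gaussian bridge with prescribed $\partial\Lambda$ endpoints, so that they contribute to the specification in the same bridge form as finite loops, and that their contribution to the correlation functions is exactly the $\lambda = 1$ term plus the surplus isolated by Tamura--Ito. Both points rely on the transience of the Gaussian random walk in $d \ge 3$, without which neither the interlacement construction nor the supercritical boson process is well defined; handling the finite-loop and bi-infinite-trajectory components uniformly under the same specification is where the technical work concentrates.
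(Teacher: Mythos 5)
Your treatment of the Markov and Gibbs properties follows essentially the paper's route: split the Poisson intensity into loops inside $A$, loops outside $A$, boundary loops and crossing trajectories; use independence of the restrictions; factorize the weight of each boundary object into inside and outside paths given the entry/exit data; and identify the interior conditional law with the grand-canonical finite-volume measure attached to \eqref{finite-box}. One detail to fix: the walk is discrete-time with Gaussian jumps, so crossing loops and trajectories never visit $\partial\Lambda$. The boundary data are not points of $\partial\Lambda$ but the exterior points $u\in\Lambda^c$ with $\sigma(u)\in\Lambda$ and $v\in\Lambda^c$ with $\sigma^{-1}(v)\in\Lambda$ (the sets $U,V$ of the paper), the interior pieces are pinned at these exterior points, and in the interlacement case the outside weight also carries the no-return factors $q(s(w))\,q(t(w))$; your ``Gaussian bridge with prescribed $\partial\Lambda$ endpoints'' picture should be restated accordingly.

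The genuine gap is in the point-marginal claim. First, equality of $n$-point correlation functions does not by itself identify the law of a point process; the paper closes this with Lenard's uniqueness theorem \cite{MR323270} together with the bound $\varphi_\rho(x_1,\dots,x_n)\le n!\,K_\lambda(0,0)^n$ (or, alternatively, by matching Laplace functionals with Theorem 4.1 of Shirai--Takahashi), and your sketch contains no such uniqueness step. Second, and more seriously, in the supercritical regime the Tamura--Ito component is \emph{defined} through its Laplace functional \eqref{super11}, and the correlation functions of the interlacement point marginal are not permanents: by Proposition \ref{pcri} they are sums over partitions with a \emph{linear} order on each block, weighted by $\beta=\rho-\rho_c$, not cyclic products. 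So asserting that the trajectories ``contribute the extra component\dots reconstructing the Tamura--Ito process'' is precisely the statement that has to be proven. The paper proves it by computing the Laplace functional of $\chi^{\ri}_\beta$ (Proposition \ref{riLaplace}, which rests on the equilibrium-measure identity $e_A(y)+\int_A e_A(x)K(x,y)\,dx=1$ and the $Q^{\capa}_A=Q^{\unif}_A$ structure of the interlacement intensity) and then expanding $(I+\cK^1_\phi)^{-1}$ as a Neumann series in \eqref{super11} under the condition $K_1(0,0)\|e^{-\phi}-1\|_1<1$. Some argument of this type, or a computation of the Tamura--Ito correlation functions together with a uniqueness theorem that applies to that process, is needed before your plan yields the supercritical half of the theorem.
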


This result follows from  propositions \ref{markov}, \ref{t43} and \ref{teo1}. In this context, the Markov property means that when (a) the points in $\Lambda^c$ and the arrows connecting points in $\Lambda^c$, and (b) the arrows pointing into or out of $\Lambda$, are given as a boundary condition, the distribution of the spatial permutation inside $\Lambda$ only depends on the positions of those points in $\Lambda^c$ such that the arrows emanating from them point into $\Lambda$, or the arrows pointing to them originate in $\Lambda$, and the directions of these arrows; see Fig.~\ref{gibbs-4} in \S\ref{grp}. 

We show in Proposition \ref{ls-correl} that the point marginal of the loop soup is a permanental process, with correlations \eqref{loop-correlations}. 
This result also follows from the equivalence between the point marginal of the loop soup and the subcritical boson point process shown in \S\ref{s4} and the results in \cite{TI}. It can be understood as a simple version of the identification of the occupation field moments of loop soups as permanents, see \cite{MR2815763}.

The boson point process is a Cox process, that is, a Poisson process with a random intensity measure. In the subcritical case, the result is a consequence of the fact that the boson point process is permanental, see \eqref{loop-correlations}. The property was established for the supercritical case by Eisenbaum in \cite{Eis}. With the representation of the boson point process as the point marginal of the loop soup or the superposition of the loop soup and the random interlacements, the property can be interpreted as an instance of the isomorphism theorems derived by Le Jan~\cite{MR2675000} and Sznitman~\cite{MR2892408}.

There are several results concerning the marginal distribution of the permutations. Benfatto, Cassandro, Merola and Presutti~\cite{MR2125575} proved that in the thermodynamic limit of the supercritical case, the maximal cycle-size rescales with the volume. Betz and Ueltschi~\cite{BU-PD} prove a phase transition for a family of annealed models that includes the one described by \eqref{finite-box}. 
 Precisely, they prove that above a critical density that equals $\rho_c$ for the model considered here, a typical permutation in a box of volume $N/\rho$ contains macroscopic cycles, and the sequence of their rescaled lengths, when sorted in decreasing order, converges to a Poisson-Dirichlet distribution. These results were recently extended to a different class of models by  Bogachev and Zeindler \cite{zbMATH06410137} and  Elboim and Peled~\cite{Elboim2019}.

Our results refer to the annealed model, where configurations of points 
 and permutations of these points are jointly sampled. Spatial permutations over fixed sets of points have also been studied. 
An early result due to Fichtner~\cite{Fichtner} in 1991, and more recently works by Gandolfo, Ruiz and Ueltschi~\cite{GRU}, Betz and Ueltschi~\cite{UB}, Betz~\cite{B}, Biskup and Richthammer~\cite{B-R} and Armend\'ariz, Ferrari, Groisman and Leonardi~\cite{AFGL}, consider these issues when the set of points is fixed as either the regular lattice or a realization of a locally finite point process. It is expected that for dimensions $d\ge 3$ there is a critical temperature below which, almost surely on the realization of the point process, a typical infinite volume permutation will contain infinite cycles. This is a challenging problem, at present all rigorous results pertain to the high temperature regime.  

 We expect that the Gaussian random permutation with temperature $\alpha$ and density $\rho$ is the thermodynamic limit of the distribution in \eqref{finite-box}, and that infinite cycles evolve out of macroscopic cycles. In particular, this would provide an alternative proof that the Gaussian random permutation is a Gibbs measure for the specification associated to \eqref{finite-box}. Vogel~\cite{vogel} has recently derived the thermodynamic limit in a discrete space setting.

The rest of the article is structured as follows. We construct the Gaussian loop soup in \S\ref{GLS}. In \S\ref{fls5} we identify the Gaussian loop soup in finite volume conditioned to having a fixed number of points as the canonical measure of the spatial random permutation associated to \eqref{finite-box}, and the unconditioned Gaussian loop soup in finite volume as the grand-canonical measure. We then compute the Laplace functionals of the Gaussian loop soup point marginal and its $n$-point correlations in \S\ref{pm1}. In \S\ref{sri5} we introduce Gaussian random interlacements, and derive the Laplace functionals and $n$-point correlations of the point marginal in \S\ref{pm2}. These Laplace functionals and $n$-point correlations are employed in \S\ref{s4} to prove that the boson point process, in any dimension and at any density, is in fact the point marginal of the Gaussian Random permutation with the same parameters; the latter is formally defined in \S\ref{infvol}. We finally show in \S\ref{grp} that the Gaussian random permutation has the Markov property, and moreover it is a Gibbs measure for the specification determined by \eqref{finite-box}.

\section{Gaussian loop soup}\label{GLS}

In this section we construct the Gaussian loop soup. 
We first introduce the Gaussian loop soup following the construction of the simple random walk loop soup by Lawler and Trujillo Ferreras~\cite{L-TF}. In \S\ref{fls5} we show that the loop soup restricted to a bounded set $\Lambda$ and conditioned to having $N$ points is equal to the canonical distribution on permutations given by \eqref{finite-box}; an analogous result is obtained in the grand-canonical case. In \S\ref{pm1} we show that the $n$-point correlations of the point marginal of the loop soup match those of the boson point process \eqref{loop-correlations}, a permanental process.

Consider the Brownian transition density 
\begin{equation}
\label{Brownian}
p_t(x,y)=\frac{1}{(2\pi t)^{d/2}}\exp\Bigl\{-\frac{1}{2t}\|x-y\|^2\Bigr\}, \quad t>0,\quad x,y \in \R^d.
\end{equation}  
 Given $k\ge 1$ and $\lambda>0$, we consider the set of \emph{rooted loops of length} $k$, or rooted $k$-loops for short, 
\begin{equation}
\label{rooted-k}
{\DD }^{\rot}_k=\big\{\gamma:\{0,\dots, k\}\to \R^d, \gamma(0)=\gamma(k)\big\}
\end{equation}
 with the Borel $\sigma$-algebra of $(\R^d)^{k+1}$, and
define  $\muq^{\rot}_{k,\lambda}$, the \emph{rooted Gaussian $k$-loop measure with fugacity $\lambda$}, on ${D }^{\rot}_k$, as
\begin{equation}
\label{rooted-bridge}
\muq^{\rot}_{k,\lambda}(d\gamma):=\lambda^k\, \prod_{j=0}^{k-1}p_{\frac{1}{2\alpha}}(x_i, x_{i+1})\,dx_0\dots dx_{k-1}\qquad \text{ if } \gamma(i)=x_i,\quad 0\le i\le k-1.
\end{equation}
We can naturally define the periodic extension of a $k$-loop as a function $\gamma:\Z\to \R^d$ by setting $\gamma(j):=\gamma(m k+j),\, j\in \Z$, where $m\in \Z$ is such that 
$0\le m k+j <k$. Let $\theta^\ell$ denote the time shift by $\ell$, $\theta^\ell(\gamma)(j):=\gamma(j+\ell)$.
An \emph{unrooted loop of length} $k$, or unrooted $k$-loop, is an equivalence class of rooted $k$-loops under the equivalence $\gamma\sim \gamma'$ if there exists $\ell\in \Z$ 
such that $\gamma'=\theta^\ell\gamma$. Let 
\begin{equation}
\label{unrooted-k}
{\DD }_k:={\DD }^{\rot}_k\,/\sim
\end{equation}
denote the set of unrooted $k$-loops and 
\begin{equation}
\label{project}
\pi_k:{\DD }_k^{\rot}\to {\DD }_k
\end{equation}
the projection. Consider the $\sigma$-algebra ${\crD}_k$ on ${\DD }_k$ given by the sets $\fF \subset {\DD }_k$ such that $\pi_k^{-1}(\fF) \subset {\DD }_k^{\rot}$ is measurable, and denote  $\muq_{k,\lambda}$ the induced 
\emph{unrooted Gaussian $k$-loop measure with fugacity $\lambda$} on ${\DD }_k$ defined by 
\begin{equation}
\label{unrooted-bridge}
\muq_{k,\lambda}(\fF):=\frac{1}{k}\,\muq_{k,\lambda}^{\rot}\big(\pi_k^{-1}(\fF)\big).
\end{equation}
The factor $1/k$ in front of the push-forward measure compensates for the fact that a loop class is counted $k$ times by $\muq_{k,\lambda}^{\rot}$, according to each possible root choice. 
For a non negative measurable function $g:{\DD }_k\to \R$, definitions \eqref{unrooted-bridge} and \eqref{rooted-bridge} yield
\begin{align}
\label{k-integral}
\int_{{\DD }_k} g(\gamma) \,\muq_{k, \lambda} (d\gamma)
&=\frac{\lambda^k}{k} \int_{({\R^d})^k} g([x_0,\dots,x_{k-1}])\prod_{i=0}^{k-1} p_{\frac{1}{2\alpha}}(x_i,x_{i+1})\,dx_0\dots dx_{k-1}.
\end{align}
We use the letter $\gamma$ to denote an unrooted loop $[x_0,\dots,x_{k-1}]$. We will abuse notation and define the \emph{support} of $\gamma$ as the set 
$\{\gamma\}:=\{x_0,\dots,x_{k-1}\}\subset \R^d$, and consider $\gamma:\{\gamma\}\to\{\gamma\}$ as a permutation of $\{\gamma\}$ satisfying $\gamma(x_i) = x_{i+1}$. 
Given a Borel set $\Lambda \subset \R^d$ and an unrooted loop $\gamma \in {\DD }_k$ with support $\{\gamma\}=\{x_0,\dots,x_{k-1}\}$, consider the cardinality of the set $\Lambda \cap \{\gamma\}$
\begin{align}
\label{nA}
n_\Lambda(\gamma):= \sum_{j=0}^{k-1} {\mathbf 1}_\Lambda(x_j);\qquad n(\gamma) :=n_{\R^d}(\gamma)=k.
\end{align}
 If the set has finite Lebesgue measure $|\Lambda|<\infty$, 
the mean density of points belonging to $k$-loops in $\Lambda$ is defined as
\begin{equation}
\label{rhok}
\varrho_{k, \lambda}(\Lambda):=\frac{1}{|\Lambda|}\int_{{\DD }_k} n_\Lambda(\gamma)\,\muq_{k, \lambda}(d\gamma)\,.
\end{equation}
\begin{Proposition}[Point density]
\label{densityk}
For any compact set $\Lambda \subset \R^d$ with $|\Lambda| > 0$, we have
\begin{equation}
\label{loop density}
\varrho_{k,\lambda}(\Lambda)=\Big(\frac{\alpha}{\pi}\Big)^{d/2} \frac{\lambda^k}{k^{d/2}}\,.
\end{equation}
\end{Proposition}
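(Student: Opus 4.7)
The plan is to unwind the definition of $\rho_{k,\lambda}(A)$ via the explicit formula \eqref{k-integral} for integrals against the unrooted loop measure, exploit the cyclic symmetry of $n_A$ to collapse the sum, and then use the Chapman--Kolmogorov property of the Gaussian kernel.

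First I would observe that $n_A(\gamma) = \sum_{j=0}^{k-1} \mathbf{1}_A(x_j)$ depends only on the support $\{\gamma\}$, so it is invariant under the cyclic shift $\theta$ and therefore a well-defined measurable function on $D_k$. Applying \eqref{k-integral} with $g = n_A$ yields
\begin{align*}
\int_{D_k} n_A(\gamma)\, Q_{k,\lambda}(d\gamma)
= \frac{\lambda^k}{k}\int_{(\mathbb{R}^d)^k} \sum_{j=0}^{k-1}\mathbf{1}_A(x_j)\prod_{i=0}^{k-1} p_{\frac{1}{2\alpha}}(x_i,x_{i+1})\, dx_0\cdots dx_{k-1},
\end{align*}
where indices are read mod $k$. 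Next, a change of variables $x_i \mapsto x_{i+j}$ (cyclic relabeling) shows that each of the $k$ terms in the sum contributes equally, so the prefactor $1/k$ is cancelled and only the $j=0$ term needs to be computed.

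The resulting integral is $\lambda^k\int_A dx_0 \int_{(\mathbb{R}^d)^{k-1}} \prod_{i=0}^{k-1} p_{\frac{1}{2\alpha}}(x_i,x_{i+1})\, dx_1\cdots dx_{k-1}$, and here I would apply Chapman--Kolmogorov iteratively: integrating out $x_1,\dots,x_{k-1}$ convolves the $k$ Gaussian kernels of variance $\frac{1}{2\alpha}$ into a single kernel of variance $\frac{k}{2\alpha}$ evaluated at $(x_0,x_0)$, giving
\begin{equation*}
p_{\frac{k}{2\alpha}}(x_0,x_0) = \left(\frac{\alpha}{\pi k}\right)^{d/2}.
\end{equation*}
Since this value does not depend on $x_0$, integrating $\mathbf{1}_A(x_0)$ against it produces a factor $|A|$, which cancels against the $1/|A|$ in \eqref{rhok} and delivers $\rho_{k,\lambda}(A) = (\alpha/\pi)^{d/2}\, \lambda^k/k^{d/2}$, independent of $A$.

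There is no real obstacle: the only points to be careful about are that $n_A$ descends to the quotient $D_k$ (so that applying \eqref{k-integral} is legitimate), and that the cyclic relabeling argument correctly accounts for the factor of $k$ which exactly compensates the $1/k$ in \eqref{unrooted-bridge}. The computation itself is a one-line application of Chapman--Kolmogorov for the Gaussian heat kernel.
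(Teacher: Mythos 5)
Your proposal is correct and follows essentially the same route as the paper's proof: apply \eqref{k-integral} to $n_A$, note that each of the $k$ terms in the sum contributes identically, and integrate out the remaining variables via Chapman--Kolmogorov to get $p_{\frac{k}{2\alpha}}(x_j,x_j)=\big(\tfrac{\alpha}{\pi k}\big)^{d/2}$, after which the factors $k$ and $|A|$ cancel. The only cosmetic difference is that the paper integrates each term of the sum directly rather than invoking a cyclic relabeling to reduce to $j=0$, which amounts to the same computation.
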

\begin{proof}
By \eqref{k-integral} we have
\begin{align*}
\varrho_{k, \lambda}(\Lambda)&=\frac{\lambda^k}{k|\Lambda|} \int_{({\R^d})^k} \Big(\sum_{j=0}^{k-1} {\mathbf 1}_{\Lambda}(x_j)\Big) \prod_{i=0}^{k-1} p_{\frac{1}{2\alpha}}(x_i,x_{i+1})\,dx_0\dots dx_{k-1}\\
&=\frac{\lambda^k}{k|\Lambda|}\,\sum_{j=0}^{k-1}\, \int_{({\R^d})^k} {\mathbf 1}_{\Lambda}(x_j)\, \prod_{i=0}^{k-1} p_{\frac{1}{2\alpha}}(x_i,x_{i+1})\,dx_0\dots dx_{k-1}\\
&=\frac{\lambda^k}{k|\Lambda|}\,\sum_{j=0}^{k-1}\, \int_{\R^d} {\mathbf 1}_{\Lambda}(x_j)\,\Big(\frac{\alpha}{\pi k}\Big)^{d/2}\, dx_j\\
&=\frac{\lambda^k}{k|\Lambda|}\,\Big(\frac{\alpha}{\pi k}\Big)^{d/2}\,k|\Lambda| =\Big(\frac{\alpha}{\pi}\Big)^{d/2} \frac{\lambda^k}{k^{d/2}}\,. \qedhere
\end{align*}
\end{proof}
Consider a compact set $\Lambda$ with positive Lebesgue measure, call $\rho_{k,\lambda}:= \varrho_{k, \lambda}(\Lambda)$ the point density of the measure $\muq_{k, \lambda}$, which does not depend on the particular choice of $\Lambda$, and  
define
\begin{align}
\label{r16}
  \rho(\lambda):= \sum_{k\ge 1}\rho_{k,\lambda}\,.
\end{align}
We have
\begin{equation}
\label{fd}
\rho(\lambda) = \Big(\frac{\alpha}{\pi}\Big)^{d/2} \sum_{k\ge 1} \frac{\lambda^k}{k^{d/2}}<\infty \iff 
\begin{cases} 
d\le 2 \text{ and } 0\le \lambda<1, \hbox{ or}\\ 
d\ge 3 \text{ and } 0\le \lambda \le 1\,.
\end{cases}
\end{equation}
Define the critical density $\rho_c$ by
\begin{align}
\label{rcrit}
  \rho_c := \Big(\frac{\alpha}{\pi}\Big)^{d/2} \sum_{k\ge 1} \frac{1}{k^{d/2}}.
\end{align}

The space $\DD$ of finite unrooted loops in $\R^d$ is 
\begin{align}
\label{r19}
\DD:=\bigcup_{k\ge 1} \DD_k,
\end{align}
with the minimal $\sigma$-algebra ${\crD}$ that contains $\bigcup_{k\ge 1}{\crD}_k$.

If $\lambda$ and $d$ satisfy \eqref{fd}, so that $\rho(\lambda)<\infty$, define the \emph{Gaussian loop soup intensity measure} $\muq^{\ls}_\lambda$ on ${\DD }$ as 
\begin{equation}
\label{loop meas}
\muq_\lambda^{\ls}:=\sum_{k\ge 1} \muq_{k, \lambda}\,.
\end{equation}
This measure has point density $\rho(\lambda)$.

\begin{Proposition}
\label{sigma-finite}
Let  $d$ and $\lambda$ be as in \eqref{fd}. Then  $\muq_\lambda^{\ls}$ is $\sigma$-finite. 
\end{Proposition}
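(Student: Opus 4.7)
The plan is to exhibit a countable cover of $D$ by measurable sets of finite $\muq^{\ls}_\lambda$-measure. The natural candidates come from bounded windows in $\R^d$: for each $n\ge 1$, let $B_n:=[-n,n]^d$ and set
\[
A_n:=\{\gamma\in D : \{\gamma\}\cap B_n\neq \emptyset\}=\{\gamma\in D : n_{B_n}(\gamma)\ge 1\}.
\]
These are measurable, since $n_{B_n}(\gamma)$ is a measurable function on each $D_k$ by \eqref{nA}, and hence on $D$ by the definition of $\mathcal D$. Moreover $D=\bigcup_{n\ge 1} A_n$: every $\gamma\in D$ lies in some $D_k$, so $\{\gamma\}$ is a finite subset of $\R^d$, hence bounded and contained in some $B_n$.

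Next I would bound $\muq^{\ls}_\lambda(A_n)$ using the density computation in Proposition \ref{densityk}. By Markov's inequality and the definition \eqref{rhok},
\[
\muq_{k,\lambda}(A_n) = \muq_{k,\lambda}(\{n_{B_n}\ge 1\}) \le \int_{D_k} n_{B_n}(\gamma)\,\muq_{k,\lambda}(d\gamma) = |B_n|\,\rho_{k,\lambda}.
\]
Summing over $k$ and using \eqref{loop meas} together with \eqref{r16} and \eqref{fd},
\[
\muq^{\ls}_\lambda(A_n) = \sum_{k\ge 1}\muq_{k,\lambda}(A_n) \le |B_n|\sum_{k\ge 1}\rho_{k,\lambda} = |B_n|\,\rho(\lambda) < \infty,
\]
where finiteness uses the standing hypothesis that $d$ and $\lambda$ satisfy \eqref{fd}. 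This gives the desired $\sigma$-finiteness.

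There is no real obstacle here; the only delicate point worth making explicit is that although each individual $\muq_{k,\lambda}$ already fails to be finite on all of $D_k$ (the root integration ranges over $\R^d$), restricting to loops that meet a bounded window tames the translation invariance and reduces the bound to the finite per-loop density \eqref{loop density}. Summability in $k$ is then precisely the condition \eqref{fd}.
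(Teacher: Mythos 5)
Your proof is correct and follows essentially the same route as the paper: cover $D$ by the sets of loops meeting a bounded window, bound the measure of each such set by the expected number of points in the window (which is $|B_n|\rho_{k,\lambda}$ by Proposition \ref{densityk}), and sum over $k$ using \eqref{fd}. The only cosmetic differences are cubes in place of balls and your explicit invocation of Markov's inequality, which is exactly the inequality the paper uses implicitly.
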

\begin{proof}
We start by showing that for $k\in \N$ and $\lambda>0$, the measure $\muq_{k, \lambda}$ on ${\DD }_k$ is $\sigma$-finite.
Let $B_j\subset \R^d$ be the ball of radius $j$ in $\R^d$, and define
$
\Upsilon_{k,j}:=\big\{\gamma\in {\DD }_k: \{\gamma\}\cap B_j \neq\emptyset\big\}\,.
$
Then $\bigcup_j \Upsilon_{k,j}={\DD }_k$, and
\begin{equation*}
\muq_{k, \lambda}(\Upsilon_{k,j})\le |B_j| \,\varrho_{k,\lambda}(B_j)=|B_j| \rho_{k, \lambda} <\infty,
\end{equation*}
by \eqref{rhok} and Proposition \ref{densityk}.
Letting  
$\Upsilon_j:=\bigcup_{k\ge 1} \Upsilon_{k,j}$,
we have $\bigcup_{j\ge 1}  \Upsilon_j={\DD }$. Then
\[
\muq^{\ls}_{\lambda}( \Upsilon_j)\le \sum_{k\ge 1} \muq_{k,\lambda}( \Upsilon_{k,j})\le \sum |B_j| \rho_{k, \lambda}=\rho |B_j|<\infty,\quad\hbox{by \eqref{fd}.}\qedhere
\] 
\end{proof}

\begin{Definition}[Gaussian loop soup]
\label{loop soup}\rm
Let $d$ and $\lambda$ satisfy \eqref{fd}. We define the \emph{Gaussian loop soup} at fugacity $\lambda$ as a Poisson  process on $\DD $ with intensity measure $\muq^{\ls}_{\lambda}$. 
We will use $\Gamma^{\ls}_\lambda$ to denote a realization of the process, and $\mu^{\ls}_\lambda$ to denote its law. 
\end{Definition}
A configuration $\Gamma$ of a Gaussian loop soup is a countable collection of unrooted Gaussian loops in $\R^d$, with the property that any compact set contains finitely many points in the supports of these loops.
Let $\kD$ be the set of locally finite loop soup configurations,
\begin{align}
  \label{xc21}
  \kD:=\{\Gamma\subset \DD: \tsum_{\gamma\in\Gamma}\;
  n_\Lambda(\gamma) <\infty, \hbox{ for all compact }\Lambda\subset\R^d\},
\end{align}
where $n_\Lambda(\gamma)$ is defined in \eqref{nA}.

Let $\fF\subset \DD$ have finite Gaussian loop soup intensity measure, $Q^{\ls}_\lambda(\fF)<\infty$, and consider a measurable, bounded function $g:\kD\to\R$ that is determined by the loops of the configuration that belong to $\fF$, i.e. $g(\Gamma)=g(\Gamma')$ if $\Gamma\cap \fF=\Gamma'\cap \fF$.
By definition of Poisson process, we can explicitly write
\begin{align}
  \mu^{\ls}_\lambda g
  &= e^{-Q^{\ls}_\lambda(\fF)}\sum_{\ell\ge0}\frac{1}{\ell!}
  \int_{\fF}\dots\int_{\fF}
  g\big(\{\gamma_1,\dots,\gamma_\ell\}\big) \,
  Q^{\ls}_\lambda(d\gamma_1)\dots Q^{\ls}_\lambda(d\gamma_\ell) \label{fls30}\\
  &= \sum_{\ell\ge0}\frac{e^{-Q^{\ls}_\lambda(\fF)}}{\ell!}
  \int_{\fF}\dots\int_{\fF}
  g\big(\{\gamma_1,\dots,\gamma_\ell\}\big) \,
  f^{\ls}_\lambda\big(\{\gamma_1,\dots,\gamma_\ell\}\big)\,
  d\gamma_1\dots d\gamma_\ell, \label{fls22}
\end{align}
where
\begin{align}
  f^{\ls}_\lambda\big(\{\gamma_1,\dots,\gamma_\ell\}\big)
  &:= 
    \textstyle{\prod_{i=1}^\ell}\, \omega_\lambda(\gamma_i),\label{fls1}\\[2mm]
  \omega_\lambda\big([x_0,\dots,x_{k-1}]\big)
  &:= \lambda^k\, \textstyle{\prod_{i=0}^{k-1}}\, p_{\frac1{2\alpha}}(x_i,x_{i+1})
    \quad \hbox{with }x_{k}=x_0,\label{fls2}
\end{align}
and where for any bounded measurable $h:\fF\to \R$,
\begin{align}
  \label{eq:7}
  \int_{\DD} h(\gamma)\,d\gamma
  := \sum_{k\ge 1}\frac{1}{k}\,\int_{\R^d}\dots \int_{\R^d}
  h\big([x_0,\dots,x_{k-1}]\big)\, dx_0\dots dx_{k-1},
\end{align}
if the right hand side is well defined. Taking  $h(\gamma)= h_1(\gamma)\,\omega_\lambda(\gamma)\one_{\{\gamma\in\fF\}}$ for some bounded $h_1:\fF\to\R$ and recalling the assumption $Q^{\ls}_\lambda(\fF)<\infty$, we conclude that  \eqref{eq:7} is well defined for this $h$ and it is bounded by $\|h_1\|_\infty Q^{\ls}_\lambda(\fF)$.

Given $\fF\subset \DD$ 
denote the set of loop configurations contained in $\fF$ by
\begin{align}
  \label{xc22}
  \kD_\fF:= \big\{  \Gamma \in \kD:\,\Gamma\subseteq \fF  \big\}.
\end{align}
Denote $F^{\ls}_\lambda(\Gamma):= e^{-Q^{\ls}_\lambda(\fF)}f^{\ls}_\lambda(\Gamma)$. Given a bounded test function $g:\kD_\fF\to\R$, we get
\begin{align}
  \int\mu^{\ls}_\lambda(d\Gamma) g(\Gamma\cap\fF) &= \int_{\kD_\fF} g(\Gamma)\,F^{\ls}_\lambda(\Gamma)\, d\Gamma\label{dGamma}\\
  &=  \sum_{\ell\ge0} e^{-Q^{\ls}_\lambda(\fF)}
  \int_{\fF}\dots\int_{\fF}
  g\big(\{\gamma_1,\dots,\gamma_\ell\}\big) \,\one_{\gamma_1<\cdots<\gamma_\ell}\;
    f^{\ls}_\lambda\big(\{\gamma_1,\dots,\gamma_\ell\}\big)\,
  d\gamma_1\dots d\gamma_\ell,  \notag
\end{align}
which is equivalent to \eqref{fls22}.
In the above equation we consider the total order in $\DD$ induced by the lexicographic order in $\R^d$: $\gamma<\gamma'$ if $\min\{\gamma\}<\min\{\gamma'\}$. Definitions \eqref{fls22} and \eqref{dGamma} are equivalent because $f^{\ls}_\lambda$ and $g$ do not depend on the order. The function $F^{\ls}_\lambda(\Gamma)$  is the \emph{density} of the Gaussian loop soup at fugacity $\lambda$ in the set $\fF$.

\subsection{Finite-volume loop soup and spatial permutations}
\label{fls5}
In this subsection we show that the spatial random permutation with density \eqref{finite-box} has the same law as the Gaussian loop soup in the bounded domain $\Lambda$, conditioned to have $N$ points; this measure is well defined as the conditioning set has positive probability. Similarly, we prove that the grand-canonical measure associated to  \eqref{finite-box} is the loop soup in $\Lambda$. 

Note that there is 
  a bijection between $\kD$ and the set
  \begin{align}
    \label{eq:2}
    \{(\sX,\sigma): \sX \text{ locally finite,}  \hbox{ for all }x\in\sX\, \text{ there is } k<\infty \hbox{ with } \sigma^k(x) =x\},
  \end{align}
  the space of finite cycle permutations with locally finite supports.
Indeed, for any $\Gamma\in\kD$, let $(\sX, \sigma)$ be given by $\sX = \cup_{\gamma\in\Gamma} \{\gamma\}$ and $\sigma(x)=\gamma(x)$, $x\in \{\gamma\},\,\gamma \in \Gamma$. 
Conversely, given a spatial permutation $(\sX,\sigma)$ in the set \eqref{eq:2}, define the loop soup  configuration
\begin{align*}
\Gamma := \cup_{x\in\sX} \{[x,\sigma(x),\dots,\sigma^{k(x)-1}(x)]\},
\end{align*}
where  $k(x):=\min\{j>0:\sigma^j(x)=x\}$ is the size of the cycle in $\sigma$ containing $x$. 
Define the sets
\begin{align}
  \DD_\Lambda
  &:=\{\gamma\in \DD:\big\{\gamma\}\subset\Lambda\big\},&&\text{set of loops with supports contained in $\Lambda$},
    \notag\\
     \kD_{\Lambda}
    &:= \{\Gamma\in\kD: \Gamma\subset \DD_\Lambda\}, \label{dl28}&&\text{loop configurations contained in $\Lambda$},\\
   \kD_{\Lambda,N}&:= \big\{\Gamma\in\kD_{\Lambda}:
    \textstyle{\sum_{\gamma\in\Gamma}} n(\gamma)=N\big\},&&\text{loop configurations contained in $\Lambda$ with $N$ points}.\notag
\end{align}

\paragraph{Canonical measures}
The Gaussian loop soup restricted to loops contained in $\Lambda$ is defined by
\begin{align}
  \mu^{\ls}_{\Lambda,\lambda}
  &:=
    \text{Poisson process on }\kD_\Lambda
    \text{ with intensity measure }\;
    \one_{\DD_\Lambda}(\gamma) \,Q^{\ls}_{\lambda}(d\gamma).
    \label{23a}
\end{align}
We now show that $\mu^{\ls}_{\Lambda,\lambda}$ conditioned to have $N$ points in $\Lambda$ equals the spatial random permutation with density \eqref{finite-box}, that we denote 
$\mu_{\Lambda,N}$.
\begin{Proposition}[Canonical measure and loop soup] For any measurable, bounded test function $g:\kD_{\Lambda,N} \to \R$, we have 
  \begin{align}
  \label{31z}
    \frac{1}{\mu^{\ls}_{\Lambda,\lambda}(\kD_{\Lambda,N})}\int_{\kD_{\Lambda,N}}g(\Gamma)\, {\mu^{\ls}_{\Lambda,\lambda}(d\Gamma)} = \mu_{\Lambda,N}g.
  \end{align}
\end{Proposition}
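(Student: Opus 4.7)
The plan is to unfold both sides of \eqref{31z} into explicit multi-integrals over $\aA^n$ and match them term by term, indexed by the cycle type of the underlying permutation. The proportionality constant will be independent of the test function $g$, which will yield the equality once one divides by the total mass.

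First I would apply the Poisson-process expansion \eqref{dGamma} to the restricted intensity $\one_{D_\aA}\,Q^{\ls}_\lambda$, and then substitute \eqref{eq:7} for each $d\gamma_i$, producing a sum over $\ell\ge 0$ and over ordered tuples $(k_1,\dots,k_\ell)\in\N^\ell$ with $\sum k_i = n$, together with combinatorial prefactors $\frac{1}{\ell!}\prod_i\frac{1}{k_i}$. Using $p_{1/(2\alpha)}(x,y)=(\alpha/\pi)^{d/2}e^{-\alpha\|x-y\|^2}$ and \eqref{fls2}, the weight $\prod_i\omega_\lambda(\gamma_i)$ factors cleanly as $\lambda^n(\alpha/\pi)^{nd/2}\,e^{-\alpha\sum\|x_j-x_{j+1}\|^2}$, where the exponent is the sum over consecutive pairs within each loop of the canonical ``rotate by one'' cycle permutation $\sigma_{(k_i)}$ built from the block pattern $(k_1,\dots,k_\ell)$.

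Second I would reorganize the double sum over $(\ell,(k_i))$ by cycle type $(m_j)_{j\ge 1}$ with $\sum j m_j=n$ and $\sum m_j=\ell$; there are $\ell!/\prod_j m_j!$ ordered tuples for each cycle type, so the $1/\ell!$ disappears. By invariance of the integrand under relabeling of the dummy variables $x_i$, the spatial integral depends on the ordered tuple only through the cycle type, and equals
\begin{align*}
  \cI\big((m_j);g\big):=\int_{\aA^n} g\big(\Gamma(\underline x,\sigma_0)\big)\,e^{-\alpha\sum_i\|x_i-x_{\sigma_0(i)}\|^2}\,d\underline x
\end{align*}
for any reference $\sigma_0\in\cS_n$ of cycle type $(m_j)$. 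Collecting factors gives
\begin{align*}
  \int_{\cX_{\aA,n}}\!\!g\,d\mu^{\ls}_{\aA,\lambda}
  = e^{-Q^{\ls}_\lambda(D_\aA)}\lambda^n(\alpha/\pi)^{nd/2}\sum_{(m_j):\sum j m_j=n}\frac{\cI((m_j);g)}{\prod_j m_j!\,j^{m_j}}.
\end{align*}

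Third I would expand the numerator of \eqref{finite-box} by partitioning $\cS_n$ according to cycle type. The well-known count $\#\{\sigma\in\cS_n:\text{cycle type }(m_j)\}=n!/\prod_j(m_j!\,j^{m_j})$, combined again with relabeling invariance, gives
\begin{align*}
  \sum_{\sigma\in\cS_n}\int_{\aA^n}g\big(\Gamma(\underline x,\sigma)\big)\,e^{-\alpha\sum\|x_i-x_{\sigma(i)}\|^2}\,d\underline x
  =n!\sum_{(m_j):\sum j m_j=n}\frac{\cI((m_j);g)}{\prod_j m_j!\,j^{m_j}}.
\end{align*}
Comparing the two displays shows that $\int g\,d\mu^{\ls}_{\aA,\lambda}$ equals $\frac{e^{-Q^{\ls}_\lambda(D_\aA)}\lambda^n(\alpha/\pi)^{nd/2}}{n!}$ times the numerator of $\mu_{\aA,n}g$, with the same proportionality (apply with $g\equiv 1$) between $\mu^{\ls}_{\aA,\lambda}(\cX_{\aA,n})$ and the partition function in \eqref{finite-box}. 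Dividing yields \eqref{31z}.

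The main obstacle is purely combinatorial: one must check carefully that the prefactors $1/\ell!$ and $1/k_i$ coming from the Poisson construction and from \eqref{eq:7} combine with the $\ell!/\prod m_j!$ orderings of a cycle type to reproduce exactly the $n!/\prod(m_j!\,j^{m_j})$ count of permutations of that type; once this bookkeeping is clean, the identification of the two normalized measures is immediate from the explicit loop-configuration bijection \eqref{eq:2}.
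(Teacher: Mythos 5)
Your proposal is correct: the expansion of \eqref{fls22}--\eqref{eq:7} into ordered tuples of loop lengths, the regrouping by cycle type with the factor $\ell!/\prod_j m_j!$, the cancellation against $\frac{1}{\ell!}\prod_i\frac{1}{k_i}$ to produce $\prod_j(m_j!\,j^{m_j})^{-1}$, and the comparison with the count $n!/\prod_j(m_j!\,j^{m_j})$ of permutations of a given cycle type all check out, and the proportionality constant $e^{-Q^{\ls}_\lambda(D_\aA)}\lambda^n(\alpha/\pi)^{nd/2}/n!$ is indeed independent of $g$, so dividing by the mass (the case $g\equiv 1$) gives \eqref{31z}. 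The paper's own proof takes a shorter route: it invokes the bijection \eqref{eq:2} between $\cX_{\aA,n}$ and spatial permutations with $n$ points in $\aA$, and simply observes that the conditioned loop-soup density of a configuration, $\lambda^n e^{-Q^{\ls}_\lambda(D_\aA)}\prod_{\gamma\in\Gamma}\omega_1(\gamma)$ up to normalization, and the canonical density $Z_{\aA,n}^{-1}e^{-\alpha H(\chi,\sigma)}=Z_{\aA,n}^{-1}\prod_{\gamma\in(\chi,\sigma)}\omega_1(\gamma)$ assign weights in a fixed ratio, using the factorization \eqref{independence}; the bookkeeping relating the two reference measures (the $1/\ell!$ and $1/k$ factors on the loop side versus Lebesgue times counting on $\cS_n$ on the permutation side, whose ratio is the constant $n!$) is left implicit there. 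Your expansion by cycle type is essentially the paper's technique for the grand-canonical statement (Proposition \ref{p24}, via Lemma \ref{exp}) transplanted to the canonical setting; what it buys is a fully explicit verification of exactly that multiplicity matching, at the cost of more computation than the paper's two-line density comparison.
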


\begin{proof}
  Since there is a bijection between the supports of these probability measures, it suffices to verify that the weights assigned by their densities to any given configuration satisfy a fixed ratio. Let $(\sX,\sigma)$ be a spatial permutation such that $\sX\subset \Lambda$ and $n_\Lambda(\sX)=N$, where $n_\Lambda(\sX)$ is the cardinality of $\sX$. Let $\Gamma$ be the cycle decomposition of $(\sX,\sigma)$; clearly $\Gamma\in \kD_{\Lambda,N}$. Then, by \eqref{dGamma}, the loop soup conditioned density of $\Gamma\in\kD_{\Lambda,N}$ is
  \begin{align}
    \label{23b}
    F^{\ls}_{\lambda}(\Gamma|\kD_{\Lambda,N})
    = \frac{e^{-Q^{\ls}_\lambda(\DD_\Lambda)}}
    {\mu^{\ls}_{\lambda}(\kD_{\Lambda,N})}
    \prod_{\gamma\in\Gamma} \omega_\lambda(\gamma)
    = \frac{\lambda^N e^{-Q^{\ls}_\lambda(\DD_\Lambda)}}
    {\mu^{\ls}_{\lambda}(\kD_{\Lambda,N})}
    \prod_{\gamma\in\Gamma} \omega_1(\gamma),  
  \end{align}
  where $\omega_\lambda$ was defined in \eqref{fls2}. 
  On the other hand, the density \eqref{finite-box} of the canonical measure $\mu_{\Lambda,N}$ can be written as a function of the cycle decomposition of $\sigma$ by 
  \begin{align}
    \label{23bb}
    F_{\Lambda,N}(\sX,\sigma)
    &= \frac{(\alpha/\pi)^{Nd/2}}{Z_{\Lambda,N}}\,
   e^{-\alpha H(\sX, \sigma)}
    = \frac{1}{Z_{\Lambda,N}}
    \prod_{\gamma\in (\sX,\sigma)} \omega_1(\gamma). 
  \end{align}
  Since both \eqref{23b} and \eqref{23bb} determine probability measures, the normalization factors must be the same. 
\end{proof}

\paragraph{Grand-canonical measures}
The grand-canonical spatial random permutation at fugacity $\lambda< 1$ if $d=1,\,2$, or $\lambda\le 1$ if $d\ge 3$, associated to the canonical density \eqref{finite-box}, is defined by
\begin{align}
  \mu_{\Lambda,\lambda}\,g
  &:= \frac{1}{Z_{\Lambda,\lambda}}\sum_{N\ge 0}\lambda^N\, \frac{(\alpha/\pi)^{Nd/2}}{N!}
    \sum_{\sigma\in\cS_N}
    \int_{\Lambda^N}
    g(\ux,\sigma)\;
    e^{-\alpha H(\ux,\sigma )} d\ux,  \label{23c}
\end{align}
where $\ux=(x_1,\dots,x_N)$, $H(\ux,\sigma) := \sum_{i=1}^N \|x_i-x_{\sigma(i)}\|^2$ and
\begin{align}\label{233c}
  Z_{\Lambda,\lambda}:=\sum_{N\ge 0} \lambda^N\,\frac{(\alpha/\pi)^{Nd/2}}{N!}
    \sum_{\sigma\in\cS_N}
    \int_{\Lambda^N}
    e^{-\alpha H(\ux,\sigma )} d\ux \;=\;\sum_{N\ge 0}\,\lambda^N\,Z_{\Lambda,N} .
\end{align}
Elementary computations show that $Z_{\Lambda,\lambda}<\infty$ for $\lambda$ in the specified domains. We next show the equivalence of the loop soup and the grand-canonical measure.
\begin{Proposition}[Grand canonical measure and loop soup]
  \label{p24}
Let $\lambda< 1$ if $d=1,\,2$, or $\lambda\le 1$ if $d\ge 3$. The Gaussian loop soup at fugacity $\lambda$ restricted to $\Lambda$ defined in \eqref{23a} and the grand-canonical measure \eqref{23c} at the same fugacity are equal,
  \begin{align}
\label{36j}
    \mu^{\ls}_{\Lambda,\lambda} = \mu_{\Lambda,\lambda}.
  \end{align}
\end{Proposition}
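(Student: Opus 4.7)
The plan is to use the bijection \eqref{eq:2} that identifies $\cX_\aA$ with the set of spatial permutations $(\chi,\sigma)$ with $\chi \subset \aA$ finite. Under this identification both measures in \eqref{36j} live on the same measurable space, and it suffices to show they agree on $\cX_{\aA,n}$ for every $n\ge 0$.

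First I would apply the preceding proposition, which gives that the restriction of $\mu^{\ls}_{\aA,\lambda}$ to $\cX_{\aA,n}$ equals $\mu^{\ls}_{\aA,\lambda}(\cX_{\aA,n})\cdot \mu_{\aA,n}$. Similarly, reading off from \eqref{23c}, the restriction of $\mu_{\aA,\lambda}$ to $\cX_{\aA,n}$ is $w_n\,\mu_{\aA,n}$ with
\begin{equation*}
w_n := \frac{1}{Z_{\aA,\lambda}}\,\frac{((\alpha/\pi)^{d/2}\lambda)^n}{n!}\sum_{\sigma\in\cS_n}\int_{\aA^n} e^{-\alpha H(\ux,\sigma)}\,d\ux.
\end{equation*}
The identity \eqref{36j} thus reduces to the scalar equality $\mu^{\ls}_{\aA,\lambda}(\cX_{\aA,n}) = w_n$ for every $n$.

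Next, I would expand both sides through a cycle-type decomposition. Using \eqref{fls22}, \eqref{fls2}, \eqref{eq:7} and $p_{\frac{1}{2\alpha}}(x,y) = (\alpha/\pi)^{d/2} e^{-\alpha \|x-y\|^2}$, a loop of length $k$ carries weight $\omega_\lambda(\gamma) = \lambda^k (\alpha/\pi)^{dk/2} e^{-\alpha \sum_j \|x_j-x_{j+1}\|^2}$; splitting the Poisson expansion according to prescribed loop lengths $k_1,\dots,k_\ell$ summing to $n$ gives
\begin{equation*}
e^{Q^{\ls}_\lambda(D_\aA)}\,\mu^{\ls}_{\aA,\lambda}(\cX_{\aA,n}) \;=\; \sum_{\ell\ge 0}\frac{1}{\ell!}\sum_{\substack{k_1,\dots,k_\ell\ge 1 \\ k_1+\cdots+k_\ell=n}}\prod_{i=1}^\ell \frac{((\alpha/\pi)^{d/2}\lambda)^{k_i}}{k_i}\, I_{k_i},
\end{equation*}
where $I_k := \int_{\aA^k} e^{-\alpha \sum_j \|y_j-y_{j+1}\|^2}\,dy_0\cdots dy_{k-1}$. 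On the grand-canonical side, a permutation $\sigma\in\cS_n$ whose cycle lengths form the unordered multiset $\{k_1,\dots,k_\ell\}$ with multiplicities $m_j$ is one of $n!/(\prod_i k_i\prod_j m_j!)$ such permutations, and $H(\ux,\sigma)$ factorizes across cycles so that $\int_{\aA^n} e^{-\alpha H(\ux,\sigma)}\,d\ux = \prod_i I_{k_i}$. Converting the sum over unordered cycle types into a sum over ordered tuples supplies the factor $\ell!/\prod_j m_j!$, after which $n!w_n Z_{\aA,\lambda}/((\alpha/\pi)^{d/2}\lambda)^n$ reproduces the right-hand side above.

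Finally, summing the resulting identity over $n$ and using $\sum_n \mu^{\ls}_{\aA,\lambda}(\cX_{\aA,n})=1$ identifies $Z_{\aA,\lambda} = e^{Q^{\ls}_\lambda(D_\aA)}$, which then yields $w_n = \mu^{\ls}_{\aA,\lambda}(\cX_{\aA,n})$ and proves \eqref{36j}. The main obstacle is the combinatorial bookkeeping: the factors $1/n!$ in \eqref{23c}, $n!/(\prod_i k_i\prod_j m_j!)$ counting labelings of a given cycle type, $\ell!/\prod_j m_j!$ from the unordered-to-ordered conversion, $1/\ell!$ from the Poisson expansion \eqref{fls22}, and $1/k$ built into the unrooted loop measure \eqref{eq:7} must all cancel exactly to match the two sides term by term.
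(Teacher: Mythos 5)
Your proposal is correct, but it proves the statement by a genuinely different route than the paper. The paper's proof compares Laplace functionals of loop-space test functions: it applies Campbell's theorem to $\mu^{\ls}_{\aA,\lambda}$, and then uses the combinatorial Lemma \ref{exp} to convert $\exp\bigl(\sum_k a_k/k!\bigr)$ into a sum over set partitions and hence over permutations, recovering the grand-canonical expression \eqref{23c} term by term. You instead disintegrate both probability measures over the number of points $n$: the conditional law of the loop soup on $\cX_{\aA,n}$ is the canonical measure $\mu_{\aA,n}$ by \eqref{31z} (a fact the paper proves but does not actually invoke for Proposition \ref{p24}), so the whole statement reduces to the scalar identities $\mu^{\ls}_{\aA,\lambda}(\cX_{\aA,n})=w_n$, which you verify by enumerating permutations of a fixed cycle type ($n!/\prod_i k_i\prod_j m_j!$ of them) against the Poisson expansion \eqref{fls22} with its $1/\ell!$ and the $1/k$ of the unrooted loop measure; the bookkeeping you list does cancel exactly. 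A nice by-product of your normalization step is the identity $Z_{\aA,\lambda}=e^{Q^{\ls}_\lambda(D_\aA)}$, obtained for free from $\sum_n w_n=\sum_n\mu^{\ls}_{\aA,\lambda}(\cX_{\aA,n})=1$ rather than by direct computation; the needed finiteness of $Z_{\aA,\lambda}$ and of $Q^{\ls}_\lambda(D_\aA)$ is supplied by the paper, and since all terms are nonnegative no convergence issues arise in your rearrangements. What the paper's route buys in exchange is an identity for arbitrary loop functionals $\psi$ (not just point-number events) and a second use of Lemma \ref{exp}, which reappears in \eqref{appli}; your route is more elementary and makes transparent the interpretation of the grand-canonical measure as a mixture of the canonical ones.
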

\begin{proof}
By \eqref{23c} and \eqref{233c}, the grand canonical measure is a mixture of canonical measures: 
  \begin{align}
  \label{gcmix}
    \mu_{\Lambda,\lambda}
    &= \sum_{N\ge0}  \;\frac{\lambda^N\,Z_{\Lambda,N}}{Z_{\Lambda,\lambda}}\;\mu_{\Lambda,N}.
  \end{align}
By \eqref{31z}, the loop soup restricted to $\Lambda$ is also a mixture of canonical measures:
 \begin{align}
\label{lsmix}
    \mu^{\ls}_{\Lambda,\lambda}
    &= \sum_{N\ge0}\; {\mu^{\ls}_{\lambda}(\kD_{\Lambda,N})}\; \mu_{\Lambda,N}.
 \end{align}
 The identity between \eqref{23b} and \eqref{23bb} implies that
 \begin{align}
   \mu^{\ls}_{\lambda}(\kD_{\Lambda,N}) = \lambda^N \,Z_{\Lambda,N}\;e^{-Q^{\ls}_\lambda(\DD_\Lambda)}.
 \end{align}
We add over $N$ and apply \eqref{233c} to get $Z_{\Lambda,\lambda}= e^{Q^{\ls}_\lambda(\DD_\Lambda)}$, so that the weights in \eqref{gcmix} and \eqref{lsmix} are equal.
\end{proof}

\subsection{Point marginal of the Gaussian loop soup}
\label{pm1}
In this subsection we study the law of the point marginal of the Gaussian loop soup. 
Let $\Gamma^{\ls}_\lambda$ be the loop soup with distribution $\mu^{\ls}_\lambda$ and denote $\sX^{\ls}_\lambda$ the set of points in the support of $\Gamma^{\ls}_\lambda$:
\begin{equation}
\label{loop points}
\sX^{\ls}_\lambda := \bigcup_{\gamma\in\Gamma^{\ls}_\lambda} \{\gamma\},\qquad \nu^{\ls}_\lambda := \hbox{ law of }\sX^{\ls}_\lambda.
\end{equation}
For $k\ge 1$, $\lambda>0$, define
\begin{align}
&J^k_\lambda(x,y):=\lambda^k \int_{(\R^d)^{k-1}} p_{\frac{1}{2\alpha}}(x,z_1) \dots  p_{\frac{1}{2\alpha}}(z_{k-1}, y)\, d\uz
=\lambda^k\Big(\frac{\alpha}{\pi k}\Big)^{d/2}\exp\Bigl\{-\frac{\alpha}{k}\|x-y\|^2\Bigr\}, \label{Jk}
\end{align}
where $\uz=(z_1,\dots,z_{k-1})$.  With $d$ and $\lambda$ as in \eqref{fd}, $K_\lambda(x,y)$ defined in \eqref{Klambda} satisfies $K_\lambda(x,y)=\sum_{k\ge 1} J^k_\lambda(x,y)<\infty$. 

Let $\nu$ be the distribution of a point process on $\R^d$ and $\fh:\R^d\to\R_{\ge0}$ be a non-negative measurable function with compact support. Recall that $\cL(\fh):= \int \nu(d\sX) \exp \big(- \sum_{x\in\sX} \fh(x) \big)$ is the Laplace functional of the point process with probability distribution $\nu$.  Denote  
 \begin{equation}
 \label{parti}
 {\cP}_n
 :=\bigl\{ P \text{ partition of }\{1,\dots,N\}\;:\; \emptyset\notin P\bigr\}.
\end{equation}

\begin{Proposition}[Laplace functionals]
\label{finite-Laplace}
 Let $\lambda< 1$ if $d=1,\,2$, or $\lambda\le 1$ if $d\ge 3$.
Let $\fh:\R^d\to \R$ be a non-negative measurable function with compact support. Denote $\cL^{\ls}_\lambda$ the Laplace functional of the point marginal of the Gaussian loop soup at fugacity $\lambda$. Then, if $\phi$ satisfies  $K_\lambda(0,0)\|e^{-\phi}-1\|_1<1$ we have
\begin{align}
  \cL^{\ls}_\lambda(\fh)&=\exp\Big\{\sum_{j\ge 1} \frac{1}{j}\int_{(\R^d)^j}\prod_{i=0}^{j-1} (e^{-\fh(y_i)}-1)\,K_\lambda(y_i,y_{i+1})\,dy_0\dots dy_{j-1}\Big\}\label{F1}\\[2mm]
&=1+\sum_{n\ge 1} \frac{1}{n!}\int_{(\R^d)^n} \sum_{\sigma \in \cS_n} \prod_{i=0}^{n-1} \big(e^{-\fh(y_i)}-1\big)\,\prod_{i=0}^{n-1} K_\lambda(y_i,y_{\sigma(i)})\,dy_0\dots dy_{n-1},\label{F2}
\end{align}
with the convention that $y_j=y_0$ and $y_n=y_0$ in the integrals over $(\R^d)^j$ and $(\R^d)^n$, respectively,  on the right of \eqref{F1} and \eqref{F2}.
\end{Proposition}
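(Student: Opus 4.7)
The plan is to start from Campbell's theorem for the Poisson process $\Gamma^{\ls}_\lambda$. Because the supports $\{\gamma\}$ of distinct loops are almost surely disjoint, $\sum_{x\in\chi^{\ls}_\lambda}\fh(x)=\sum_{\gamma\in\Gamma^{\ls}_\lambda}\sum_{x\in\{\gamma\}}\fh(x)$, hence
\begin{align*}
  \cL^{\ls}_\lambda(\fh)=\exp\Bigl(\int_D \bigl(e^{-\sum_{x\in\{\gamma\}}\fh(x)}-1\bigr)\,\muq^{\ls}_\lambda(d\gamma)\Bigr).
\end{align*}
Unfolding $\muq^{\ls}_\lambda=\sum_k \muq_{k,\lambda}$ via \eqref{k-integral} rewrites this exponent as $\sum_{k\ge1}(\lambda^k/k)\int_{(\R^d)^k}\bigl(e^{-\sum_i\fh(x_i)}-1\bigr)\prod_{i}p_{1/(2\alpha)}(x_i,x_{i+1})\,dx_0\cdots dx_{k-1}$.

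Next I would expand $e^{-\sum_i\fh(x_i)}-1=\sum_{\emptyset\ne S\subset\{0,\ldots,k-1\}}\prod_{i\in S}(e^{-\fh(x_i)}-1)$, and for each $S=\{i_0<\cdots<i_{j-1}\}$ integrate out the variables $x_i$, $i\notin S$. By Chapman--Kolmogorov, consecutive kernels collapse into $p_{k_m/(2\alpha)}(y_m,y_{m+1})$, with cyclic gaps $k_m=i_{m+1}-i_m$; combining with $\lambda^k=\prod_m\lambda^{k_m}$ this becomes $\prod_m J^{k_m}_\lambda(y_m,y_{m+1})$. The combinatorial reindexing is the crux: the map sending (base $i_0\in\{0,\ldots,k-1\}$, tuple $(k_0,\ldots,k_{j-1})\in\N^j$ with $\sum k_m=k$) to the subset $\{i_0,i_0+k_0,\ldots\}\bmod k$ is $j$-to-$1$ onto subsets of size $j$, and the resulting integrals depend only on the tuple (up to cyclic relabeling of the dummy variables $y_m$). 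Summing the integrals over subsets of size $j$ therefore produces a factor $k/j$; the outer $1/k$ cancels the $k$, summing each $k_m\ge 1$ freely converts $\sum_{k_m}J^{k_m}_\lambda$ into $K_\lambda$ by \eqref{Greenish}, and \eqref{F1} follows.

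To pass to \eqref{F2} I would expand the exponential and group by cycle type. With $A_j:=\int_{(\R^d)^j}\prod_{m=0}^{j-1}(e^{-\fh(y_m)}-1)\,K_\lambda(y_m,y_{m+1})\,dy$, the classical multinomial identity
\begin{align*}
\exp\Bigl(\tsum_{j\ge 1}\tfrac{A_j}{j}\Bigr)=\tsum_{n\ge 0}\tfrac{1}{n!}\tsum_{\sigma\in\cS_n}\tprod_{c\text{ cycle of }\sigma}A_{|c|}
\end{align*}
(the sum over cycle types weighted by $n!/\prod_j j^{m_j}m_j!$) rewrites the product of per-cycle integrals as a single integral over $(\R^d)^n$ with integrand $\prod_i(e^{-\fh(y_i)}-1)\prod_i K_\lambda(y_i,y_{\sigma(i)})$, which is \eqref{F2}.

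The main obstacle is justifying absolute convergence of all the series and the Fubini interchanges, which is precisely what the hypothesis $K_\lambda(0,0)\,\|e^{-\fh}-1\|_1<1$ controls. Since each $J^k_\lambda$ is a Gaussian in $x-y$, $K_\lambda(x,y)\le K_\lambda(0,0)$; bounding one factor in each cycle by this constant and applying Fubini iteratively yields the geometric majorant $j^{-1}\bigl(K_\lambda(0,0)\,\|e^{-\fh}-1\|_1\bigr)^j$ for the $j$-th term in \eqref{F1}, which is summable and dominates the entire chain of manipulations above.
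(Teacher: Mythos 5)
Your proposal is correct and follows essentially the same route as the paper's proof: Campbell's theorem, the expansion of $e^{-\Fh(\gamma)}-1$ over sub-collections of points, the Chapman--Kolmogorov collapse to the kernels $J^k_\lambda$ and then $K_\lambda$ with the $j$-to-$1$ counting yielding the factor $1/j$, and finally the exponential/cycle-index identity (the paper's Lemma \ref{exp}) to pass from \eqref{F1} to \eqref{F2}, with absolute convergence controlled by $K_\lambda(0,0)\,\|e^{-\fh}-1\|_1<1$ exactly as in the paper.
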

\begin{proof}
Let $\Fh(\gamma):=\sum_{x\in \{\gamma\}} \fh(x)$, $\gamma \in {\DD }$, so that we can write
\begin{align}
\label{laplaceP}
  \cL^{\ls}_\lambda(\phi) &= \int \nu^{\ls}_\lambda(d\sX)\,\exp\Bigl\{-\sum_{x\in \sX}\fh(x)\Bigr\}=\int \mu^{\ls}_\lambda(d\Gamma)\,\exp\Bigl\{-\sum_{\gamma\in \Gamma}\Fh(\gamma)\Bigr\}\nn\\
&=\exp\Big\{ \int_{\DD } (e^{-\Fh(\gamma)}-1) \,\muq_{\lambda}(d\gamma)\Big\},
\end{align}
by Campbell's theorem.
Notice that $e^{-\Fh(\gamma)}-1$ is bounded and vanishing in the complement of  $\fF_j$ defined in Proposition \ref{sigma-finite}, if $j\in \N$ is such that the support $\text{supp}(\fh)\subseteq B_j$, the ball of radius $j$ in $\R^d$. We saw in the proof of Proposition \ref{sigma-finite} that $\muq_\lambda^{\ls}(\fF_j)$ is finite, hence the integral in the exponent of \eqref{laplaceP} converges.

If $\gamma=[x_0,\dots,x_{k-1}] \in {\DD }_k$ then 
\begin{align*}
e^{-\Fh(\gamma)}-1&=\prod_{i=0}^{k-1} \big[\big(e^{-\fh(x_i)}-1\big)+1\big]\,-1=\sum_{\emptyset\neq \gamma'\subset \gamma} \;\;\prod_{x\in\{\gamma'\}} \big(e^{-\fh(x)}-1\big),
\end{align*}
where $\gamma'\subset \gamma$ means that the trace of the loop $\gamma'$ is a subset of the trace of $\gamma$: $\gamma'\in {\DD }_j$ with $j\le k$, and
there is  a subsequence $\{n_i\}_{0\le i\le j-1}$ of $0,\dots, k-1$ such that 
$\gamma' = [x_{n_0},\dots,x_{n_{j-1}}]$.

Then
\begin{align}
 &\int_{\DD } (e^{-\Fh(\gamma)}-1) \,\muq_{\lambda}(d\gamma)\notag\\
&\qquad=\sum_{k\ge 1}\int_{{\DD }_k} (e^{-\Fh(\gamma)}-1) \,\muq_{k,\lambda}(d\gamma) \notag\\
  &\qquad=\sum_{k\ge 1}\int_{{\DD }_k} \; \sum_{\emptyset\neq \gamma'\subset \gamma}\;\;
    \prod_{x\in\{\gamma'\}} \big(e^{-\fh(x)}-1\big)\;\muq_{k,\lambda}(d\gamma)\, \notag\\
&\qquad=\sum_{k\ge 1}\frac{\lambda^k}{k}\int_{(\R^d)^k}  \hspace{-.5cm}\sum_{\substack{j\le k \\ \{n_0,\dots , n_{j-1}\}\\ \ \ \subset\{0,\dots , k-1\} }}
\prod_{i=0}^{j-1} \big(e^{-\fh(x_{n_i})}-1\big) \,\, \prod_{i=0}^{k-1} p_{\frac{1}{2\alpha}}(x_i,x_{i+1})\,dx_0\dots dx_{k-1} \label{red1}\\
&\qquad=\sum_{k\ge 1} \sum_{j\le k} \frac{1}{j}  \hspace{-3mm}\sum_{\substack{r_0\ge 1\\ \dots \\ r_{j-1}\ge 1\\ r_0+\dots+r_{j-1}=k}}\hspace{-2mm}
\int_{(\R^d)^j} \prod_{i=0}^{j-1} \big(e^{-\fh(y_i)}-1 \big)\,\prod_{i=0}^{j-1} J_\lambda^{r_i}(y_i,y_{i+1})\,dy_0 \dots dy_{j-1} \label{red3}\\
&\qquad= \sum_{j\ge 1} \frac{1}{j} \sum_{\substack{r_0\ge 1\\ \dots \\ r_{j-1}\ge 1}} \int_{(\R^d)^j} \prod_{i=0}^{j-1} \big(e^{-\fh(y_i)}-1 \big)\,\prod_{i=0}^{j-1} J_\lambda^{r_i}(y_i,y_{i+1})\,dy_0 \dots dy_{j-1}\,.\label{red2}
\end{align}
In order to derive \eqref{red2} we first fix the points $y_0,\dots,y_{j-1} \in \gamma'$, i.e. the arguments of the factors $e^{-\fh(\cdot)}-1$ in \eqref{red1}, and integrate over the remaining points of those loops $\gamma \in {\DD }_k,\, k\ge j$, such that $\gamma'\subset \gamma$. This leads to the sum over the family of indices  $(r_0,\dots, r_{j-1})\in \N^j$ and the product of transition kernels $\prod_i J_\lambda^{r_i}(y_i, y_{i+1})$ in the integral over $\big(\R^d \big)^j$. Note that the integral in \eqref{red3} assigns density 
\begin{equation*}
j\,\prod_{i=0}^{j-1} \big(e^{-\fh(y_i)}-1 \big)\,\prod_{i=0}^{j-1} J_\lambda^{r_i}(y_i,y_{i+1})
\end{equation*}
to an unrooted loop 
$\gamma'=[y_0,\dots,y_{j-1}]$ and a choice of indices $(r_0,\dots,r_{j-1})$, which is corrected by the factor $1/j$ in front to match the corresponding expression in \eqref{red1}. Now \eqref{F1} follows by changing the order of the sum and the integral. 
This works if there is absolute convergence of the exponent in \eqref{F1}; using $K_\lambda(x_i,x_{i+1}) \leq K_\lambda(0,0) = 
  \sum_{k \geq 1} \bigl(\frac{\alpha}{\pi k} \bigr)^{\frac{d}{2}} \lambda^k$
  and $\ K_\lambda(0,0) \| e^{-\phi} - 1 \|_1 < 1$, we have the absolute convergence. 
 
 In order to prove \eqref{F2}, write
 \begin{align*}
&\frac{1}{j}\int_{(\R^d)^j} \prod_{i=0}^{j-1} \big(e^{-\fh(y_i)}-1\big)\, K_\lambda(y_i,y_{i+1})\,dy_0\dots dy_{j-1}\\
&\hspace{2cm}=\frac{1}{j!} \int_{(\R^d)^j}\prod_{i=0}^{j-1} \big(e^{-\fh(y_i)}-1\big)\,\Big[\sum_{\sigma \in C_j} \prod_{i=0}^{j-1} K_\lambda(y_i,y_{\sigma(i)})\Big]\,dy_0\dots dy_{j-1}.
\end{align*}
where $C_j$ denotes the set of cycles of length $j$, 
\[
C_j\subset \cS_j=\{ \sigma: [0,j-1] \to [0, j-1] , \sigma \text{ bijective} \},
\] 
the set of permutations of $j$ elements. We replace this expression in \eqref{red2} and denote by $|I|$ the cardinality of $I$, to get
\begin{align}
  \cL^{\ls}_\lambda (\fh)&=\exp\Big\{\sum_{ j\ge 1}\frac{1}{j!} \sum_{\sigma \in C_j}\int_{(\R^d)^j}\prod_{i=0}^{j-1} \big(e^{-\fh(y_i)}-1\big)\,\prod_{i=0}^{j-1} K_\lambda(y_i,y_{\sigma(i)})\,dy_0\dots dy_{j-1}\Big\} \notag\\
&=1+\sum_{n\ge 1} \frac{1}{n!} \sum_{P\in \cP_n} \prod_{I\in P} 
\sum_{\sigma \in C_{|I|}}\int_{(\R^d)^{|I|}}\prod_{i=0}^{|I|-1} \big(e^{-\fh(y_i)}-1\big)\,\prod_{i=0}^{|I|-1} K_\lambda(y_i,y_{\sigma(i)})\,dy_0\dots 
dy_{|I|-1}\label{appli}\\
&=1+\sum_{n\ge 1} \frac{1}{n!}\int_{(\R^d)^n} \sum_{\pi \in \cS_n} \prod_{i=0}^n \big(e^{-\fh(y_i)}-1\big)\,\prod_{i=0}^n K_\lambda(y_i,y_{\pi(i)})\,dy_0\dots dy_n, \notag
\end{align}
where in identity \eqref{appli} we have used Lemma~\ref{exp}. This shows \eqref{F2}. \end{proof}

\begin{Remark}
\label{combi}
{\rm Identity \eqref{appli} is used in quantum field theory and statistical mechanics to express the log of the partition function as a sum over connected graphs. We thank a referee for mentioning that a version of the identity is contained in Theorem II.1 of Flajolet and  Sedgewick \cite{zbMATH05485323}, a reference in combinatorics. For the convenience of the reader, we include an elementary proof in our next lemma. 
}
\end{Remark}

\begin{Lemma}[Combinatorial lemma]
\label{exp}
Let $(a_n)_{n\ge 1} \in \Co^{\N}$  be such that $\sum_{n\ge 1} \frac{1}{n!} |a_n| <\infty$. Then 
\begin{equation}
\label{exp1}
\exp\Big(\sum_{n\ge 1}\frac{1}{n!}\,a_n \Big)=1+\sum_{n\ge 1} \frac{1}{n!} \sum_{P\in \cP_n} \prod_{I\in P} a_{|I|},
\end{equation} 
where $\cP_n$ is defined in  \eqref{parti} and $|I|$ stands for the cardinality of the set $I$.
\end{Lemma}
\begin{proof}
By the series expansion of the exponential function
\begin{align*}
 & \exp\Big(\sum_{n\ge 1}\frac{1}{n!}\,a_n \Big)
  =\sum_{j\ge 0} \frac{1}{j!}
   \Big(\sum_{\ell\ge 1} \frac{1}{\ell!} a_l\Big)^j \;
   = \;1+\sum_{j\ge 1} \frac{1}{j!} \sum_{\substack{i_1,\dots,i_j \\ i_\ell\ge 1}} \frac{a_{i_1}}{i_1!}\dots\frac{a_{i_j}}{i_j!}\notag\\
  &\quad=1+\sum_{j\ge 1} \frac{1}{j!}\sum_{n\ge 1}\  \sum_{\substack{i_1+\dots+i_j=n \\ i_\ell \ge 1}}  \frac{a_{i_1}}{i_1!}\dots\frac{a_{i_j}}{i_j!}  \;
  =\;1+\sum_{j\ge 1} \frac{1}{j!}\sum_{n\ge 1}\frac{1}{n!} \sum_{\substack{i_1+\dots+i_j=n \\ i_\ell \ge 1}}  {{n}\choose{i_1\dots i_j}}\  a_{i_1} \dots a_{i_j} \notag\\
&\quad=1+\sum_{n\ge 1}\frac{1}{n!}\sum_{j\ge 1} \frac{1}{j!}\sum_{\substack{i_1+\dots+i_j=n \\ i_\ell \ge 1}}  {{n}\choose{i_1\dots i_j}}\  a_{i_1} \dots a_{i_j} 
\; =\;1+\sum_{n\ge 1}\frac{1}{n!}\sum_{j\ge 1}\ \sum_{\substack{P=\{I_1,\dots, I_j\}\, \in\, {\cP}_n}} a_{|I_1|}\dots a_{|I_j|}, \end{align*}
which is the right hand side of \eqref{exp1}. 
The interchange of $\sum_n$ and $\sum_j$ in the last line is justified by the absolute convergence of $\sum_{n\ge 1}\frac{1}{n!} \,a_n$.
\end{proof}

\subsubsection{Point correlations}
\label{pc-gls}

Given a random point process $\sX$ on $\R^d$, a non-negative, measurable function $\varphi:\big(\R^d\big)^n\to \R$ is the \emph{$n$-point correlation density} of $\sX$, if for any collection of bounded, pairwise disjoint  Borel sets $\Lambda_1,\,\dots,\Lambda_n \subset \R^d$,
\begin{align*}
E\bigl[n_{\Lambda_1}(\sX) \dots n_{\Lambda_n}(\sX) \bigr]=\int_{\Lambda_1\times\dots\times \Lambda_n} \varphi(x_1,\dots,x_n)\,dx_1\dots dx_n.
\end{align*}

\begin{Proposition}[Point correlations]
\label{ls-correl}
  The $n$-point correlation density of $\nu_\lambda^{\ls}$ is given by
\begin{align}
  \label{loop-corr}
  \varphi_\lambda^{\ls}(x_1,\dots, x_n) = \perm\bigl(K_\lambda(x_i,x_j)\bigr)_{i,\,j=1}^n,
  \end{align}
where $\perm(K)$ is the permanent of the matrix $K\in \R^{n\times n}$.
\end{Proposition}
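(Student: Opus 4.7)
The plan is to read off the $n$-point correlations directly from the Laplace-functional formula \eqref{F2} in Proposition~\ref{finite-Laplace}. Rewriting $\sum_{\sigma\in\cS_n}\prod_i K_\lambda(y_i,y_{\sigma(i)})=\perm\bigl(K_\lambda(y_i,y_j)\bigr)_{i,j=1}^n$, that formula reads
\[
\cL^{\ls}_\lambda(\fh)=1+\sum_{m\ge1}\frac{1}{m!}\int_{(\R^d)^m}\prod_{i=1}^m(e^{-\fh(y_i)}-1)\,\perm\bigl(K_\lambda(y_i,y_j)\bigr)_{i,j=1}^m\,dy_1\cdots dy_m,
\]
which is the standard generating-functional expansion that identifies $\varphi_\lambda^{\ls}(y_1,\ldots,y_n)=\perm(K_\lambda(y_i,y_j))_{i,j=1}^n$ as the $n$-point correlation density. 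The work of the proof is to make this identification rigorous in the sense of Definition~\ref{def-correl}.

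To do so, I would fix pairwise disjoint bounded Borel sets $A_1,\dots,A_n\subset\R^d$, write $N:=\chi^{\ls}_\lambda$ for the point marginal, and set $\fh_t(y):=\sum_{j=1}^n t_j\,\one_{A_j}(y)$ with $t_1,\dots,t_n\ge0$ small enough that $K_\lambda(0,0)\|e^{-\fh_t}-1\|_1<1$, so that Proposition~\ref{finite-Laplace} applies (this convergence condition is guaranteed since $K_\lambda(0,0)<\infty$ under \eqref{fd}, and $\|e^{-\fh_t}-1\|_1\to 0$ as $t\to 0$). I would then apply $\partial_{t_1}\cdots\partial_{t_n}|_{t=0}$ to both sides of $\cL^{\ls}_\lambda(\fh_t)=E\bigl[\exp\bigl(-\sum_j t_j N(A_j)\bigr)\bigr]$. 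The left-hand side evaluates to $(-1)^n E[N(A_1)\cdots N(A_n)]$.

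On the right-hand side, the $m$-th term in the series survives only for $m=n$: any factor $(e^{-\fh_t(y_i)}-1)$ that is not differentiated equals $0$ at $t=0$, forcing $m\le n$, while the disjointness of the $A_j$'s implies that two derivatives $\partial_{t_{j_1}}\partial_{t_{j_2}}$ applied to a common factor produce $\one_{A_{j_1}}(y_i)\one_{A_{j_2}}(y_i)=0$, forcing $m\ge n$. In the surviving $m=n$ term, Leibniz's rule yields one contribution per bijection between the $n$ derivatives and the $n$ factors, i.e.\ one per permutation $\sigma\in\cS_n$, each equal to $(-1)^n\prod_i\one_{A_{\sigma(i)}}(y_i)\,\perm(K_\lambda(y_i,y_j))_{i,j=1}^n$; by the symmetry of the permanent under joint row/column permutation, the $n!$ integrals coincide, cancelling the $1/n!$ factor and yielding
\[
E[N(A_1)\cdots N(A_n)]=\int_{A_1\times\cdots\times A_n}\perm\bigl(K_\lambda(y_i,y_j)\bigr)_{i,j=1}^n\,dy_1\cdots dy_n,
\]
as required. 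The main technical point I expect to verify is the interchange of differentiation with the series and integral, which should follow from the uniform absolute convergence of the expansion on a neighborhood of $t=0$ provided by the hypothesis of Proposition~\ref{finite-Laplace}.
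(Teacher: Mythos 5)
Your proposal is correct, but it takes a genuinely different route from the paper. The paper's proof is a direct computation of moment measures: for pairwise disjoint $A,B,C$ it expands $\int n_A n_B n_C\,d\mu$ over the Poisson loop configuration, uses Campbell's theorem and the factorization of expectations over distinct loops, and then evaluates intensities such as $Q(n_A n_B n_C)$ by decomposing the loops according to the cyclic order in which they visit the points, recovering the permanent term by term; it carries this out explicitly only for $n=3$ and leaves general $n$ to the reader. You instead reuse the already-established Laplace functional expansion \eqref{F2}, specialize to $\fh_t=\sum_j t_j\one_{A_j}$, and extract $E[N(A_1)\cdots N(A_n)]$ by mixed differentiation at $t=0$; the disjointness argument forcing only the $m=n$ term to survive, and the cancellation of $1/n!$ against the $n!$ bijections via the row/column symmetry of the permanent, are both correct. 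What your route buys is a uniform proof for all $n$ at once (no combinatorial bookkeeping of how loops visit the test sets, since that work is already encoded in \eqref{F2} via Lemma \ref{exp}); what it costs is the analytic justification you flag, which should be stated slightly more completely: on the right-hand side, in the variables $s_j=e^{-t_j}-1$ each term of the series is a homogeneous polynomial of degree $m$ with finite coefficients and the series is dominated by the geometric series $\sum_m\bigl(K_\lambda(0,0)\|e^{-\fh_t}-1\|_1\bigr)^m$, so term-by-term (one-sided) differentiation at $t=0$ is legitimate; on the left-hand side the identity $\partial_{t_1}\cdots\partial_{t_n}\big|_{t=0}E\bigl[e^{-\sum_j t_jN(A_j)}\bigr]=(-1)^nE[N(A_1)\cdots N(A_n)]$ is not automatic, since the moment is not known to be finite a priori — differentiate at $t>0$ (where $N e^{-tN}$ is bounded) and let $t\downarrow0$ using monotone convergence, which simultaneously yields finiteness of the moment from the finiteness of the right-hand side. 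With these standard additions your argument is complete and, arguably, more self-contained than the paper's sketch.
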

\begin{proof}[Sketch of the proof]

We compute the $3$-point correlation density. 
To simplify notation, in this proof we will denote $\mu=\mu^{\ls}_\lambda$, $Q=Q^{\ls}_\lambda$ and $K_{xy}= K_\lambda(x,y)$. Given pairwise disjoint bounded Borel sets $A,B,C\subset\R^d$, the third moment measure for the point marginal $\nu^{\ls}_\lambda$ over $A\times B\times C$ is given by
\begin{align}
  &\int n_{A}(\Gamma) \,n_{B}(\Gamma)\,n_{C}(\Gamma)\,\mu(d\Gamma)\label{3mim}\\
  &\quad=\int
    \tsum_{\gamma\in\Gamma}\, n_{A}(\gamma)\,
    \tsum_{\gamma'\in\Gamma}\,n_{B}(\gamma')\,
    \tsum_{\gamma''\in\Gamma}\,n_{C}(\gamma'')\,
    \mu(d\Gamma)\nn\\
  &\quad=\int\Bigl(
    \tsum_{\gamma\in\Gamma}\,
    n_{A}(\gamma)\,n_{B}(\gamma)\,n_{C}(\gamma)\,
    + \tsum_{\gamma\in\Gamma}\, n_{A}(\gamma)\, \tsum_{\gamma'\in\Gamma,\,\gamma'\neq\gamma}\,n_{B}(\gamma')\,n_{C}(\gamma')\label{4mim}\\[1mm]
  &\quad\quad\quad+\, \tsum_{\gamma\in\Gamma}\, n_{B}(\gamma) \,\tsum_{\gamma'\in\Gamma,\,\gamma'\neq\gamma}\,n_{A}(\gamma')\,n_{C}(\gamma')
 \, + \,\tsum_{\gamma\in\Gamma}\, n_{C}(\gamma) \,\tsum_{\gamma'\in\Gamma,\,\gamma'\neq\gamma}\,n_{A}(\gamma')\,n_{B}(\gamma')\nn\\[1mm]
  &\quad\quad\quad+\, \tsum_{\gamma\in\Gamma}\, n_{A}(\gamma)\, \tsum_{\gamma'\in\Gamma,\,\gamma'\neq\gamma}\,n_{B}(\gamma')\,\tsum_{\gamma''\in\Gamma\setminus\{\gamma,\gamma'\}}\,n_{C}(\gamma'')
    \Bigr)\mu(d\Gamma)\nn\\[1mm]
  &\quad= Q( n_{A}\,n_{B}\,n_{C})\,
    + Q(n_{A})\,
    Q(n_{B}\,n_{C})\label{5mim}\\[2mm]
  &\quad\quad
    + Q(n_{B})\,
    Q(n_{A}\,n_{B})
    + Q(n_{C})\,
    Q(n_{A}\,n_{B})
    + Q(n_{A})\,
    Q(n_{B})\,Q(n_{C}).
   \nn
\end{align}
To go from \eqref{4mim} to \eqref{5mim} we use that if $\mu$ is a Poisson process of loops, then (a) the expectation of the product of functions of different loops factorizes (Theorem 3.2 in \cite{MR2004226}), and (b) $\int \sum_{\gamma\in\Gamma} g(\gamma)\mu(d\Gamma) = \int_\DD g(\gamma) Q(d\gamma)$, denoted $Q(g)$, by Campbell's theorem.

Define
\begin{align}
  \langle a_1\dots a_k\rangle
  & :=\; \big\{\gamma\in \DD : \gamma \text{ goes through }a_1,\dots, a_k \text{ in this order}\big\}\label{lo45}
\end{align}
and compute
\begin{align}
  \label{ar5}
  Q( n_{A}\,n_{B}\,n_{C})
  &= \int\tsum_{a\in \{\gamma\}}\one_{A}(a)\,\tsum_{b\in \gamma}\one_{B}(b)\,\tsum_{c\in \gamma}\one_{C}(c) \,Q(d\gamma)\\
  &= \int
    \sum_{\substack{\{a,b,c\}\subset\{\gamma\}
  \\ (a,b,c)\in A\times B\times C}}
  \big(\one_{\langle abc\rangle}(\gamma)
  + \one_{\langle acb\rangle}(\gamma)\big)\,
  Q(d\gamma)\label{ar6} \\[2mm]
  &= \int_A\int_B\int_C (K_{ab}K_{bc}K_{ca} + K_{ac}K_{cb}K_{ba})\, dc\,db\,da,\label{ar7}
\end{align}
where \eqref{ar6} follows from partitioning the set of cycles that go through $a,\,b,\,c$ according to the order in which they visit the points, and
 \eqref{ar7}  can be proved using the argument applied to derive \eqref{F1} from \eqref{red1}. 
Using the same argument to compute the other terms in \eqref{5mim}, we conclude that the  third moment measure \eqref{3mim} is absolutely continuous  with respect to Lebesgue measure in $(\R^d)^3$ with Radon-Nikodym derivative 
\begin{align}
  \varphi_\lambda^{\ls}(x,y,z) &=
  K_{xx}K_{yy}K_{zz} + K_{xx}K_{yz}K_{zy} + K_{xy}K_{yx}K_{zz} \nn\\
                 &\qquad + K_{xy}K_{yz}K_{zx}+ K_{xz}K_{yx}K_{zy}+ K_{xz}K_{yy}K_{zx},\nn
\end{align}
which proves \eqref{loop-corr} for $n=3$; see Fig.~\ref{correlations-1}. We leave the proof of the general case to the reader.
\end{proof}

   \noindent\begin{minipage}{\linewidth}\begin{center}
    \includegraphics[scale=.5]{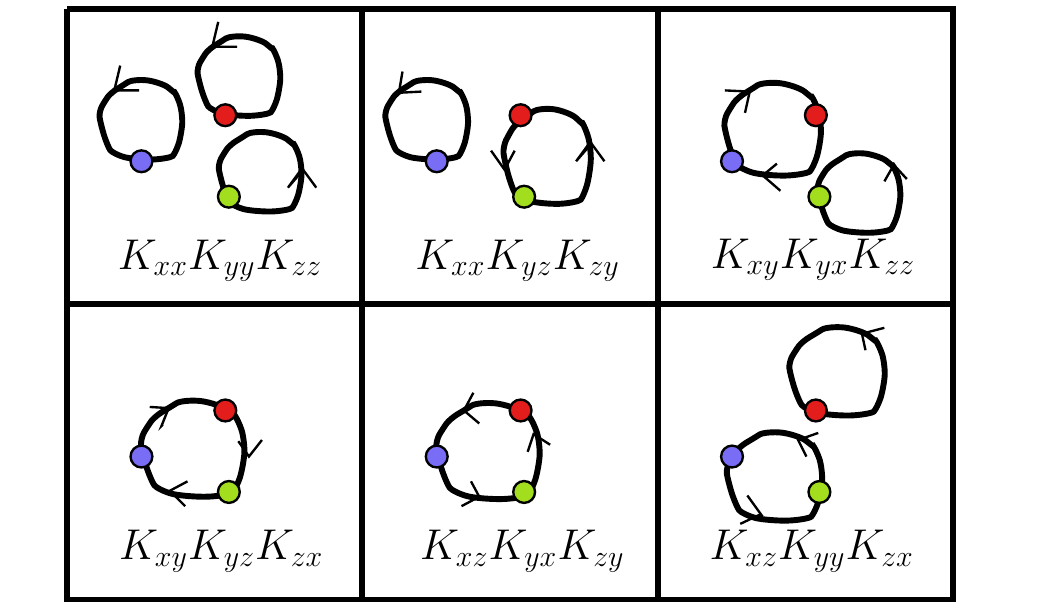}%
    \captionof{figure}{Gaussian loop soup 3-point correlations. A directed lace between two points means that the loop goes through the points in the indicated order. Point $x$ is blue, $y$ is red and $z$ is green. } 
  \label{correlations-1} 
\end{center}\end{minipage}

\begin{Remark}\rm
\label{iso1}
A Cox process is a Poisson process with a random intensity measure \cite{kingman}. Let  $\Phi_\lambda$ and $\Psi_\lambda$ be independent, centered Gaussian fields with covariance 
$\big(K_\lambda(x,y),\,x,y\in \R^d\big)$.
Using moment formulas for Gaussian random vectors it is not hard to show that a Cox process with intensity measure $\bigl(\frac12 \Phi^2_\lambda(x)+\frac12 \Psi^2_\lambda(x)\bigr)\,dx$ is permanental with $n$-point correlation function $\varphi^{\ls}_\lambda$. This is a particular case of a general identification of $\alpha$-permanental point processes as Cox processes when $2\alpha$ is a positive integer, see e.g.\/\cite{MR3791470}, Chapter 14. In particular, since correlation densities identify this point process (see the proof of Proposition \ref{teo1}), this implies that the point marginal of the Gaussian loop soup $\nu_\lambda^{\ls}$ is Cox, with (random) intensity measure $\bigl(\frac12 \Phi^2_\lambda(x)+\frac12 \Psi^2_\lambda(x) \bigr)\,dx$.
\end{Remark}

\section{Gaussian random interlacements}
\label{sri5}

In this section we construct a measure on the space of doubly infinite trajectories in $\R^d$, $d\ge 3$, analogous to the intensity measure of the Gaussian loop soup. We then define the Gaussian random interlacements as a Poisson  process with an intensity measure proportional to this measure. 
We adapt the strategy developed by Sznitman \cite{Sznitman1} to introduce the model of random interlacements in $\Z^d$. The intensity measure of the set of paths that intersect a bounded set $\Lambda$ is obtained as the projection of a measure supported on trajectories that visit $\Lambda$ at time $0$ for the first time. Precisely, the location of the first visit to $\Lambda$ is distributed according to the equilibrium measure of the associated random walk, and from this location two independent random walk trajectories are drawn, with the one running up to time $-\infty$ being conditioned to never return to $\Lambda$. We here propose an alternative, equivalent measure that chooses the location $x$ of the trajectory at time $0$ uniformly in $\Lambda$, draws two independent random walk trajectories starting at $x$, up to $-\infty$ and $+\infty$ times respectively, and assigns weight to the resulting doubly infinite path that is inversely proportional to its total number of visits to $\Lambda$. In \S\ref{pm2} we compute the density, Laplace functionals and point correlations of the point marginal of the Gaussian random interlacements.

 For $d\ge 3$, define 
\begin{align}
&\Wro:=\big\{w:\Z \to \R^d,\, \lim_{n \to \pm \infty}\|w(n)\|=\infty \big\}, \label{W}
\end{align}
the space of discrete $\R^d$-valued doubly infinite trajectories that spend finite time in compact subsets of $\R^d$. Endow  $\Wro$ with the $\sigma$-algebra ${\crW^{\rot}}$ generated by the canonical coordinates $X_n(w):=w(n)$, $n\in\Z$.
Given a bounded set $\Lambda\subset \R^d$, let $\Wro_\Lambda$ be the set of trajectories that enter~$\Lambda$,
\begin{equation}
\label{WA}
\Wro_\Lambda:=\{ w\in {\Wro}: X_n(w)\in \Lambda \text{ for some } n \in \Z\,\},
\end{equation}
and for $w \in {\Wro}$ define the entrance time of $w$ in $\Lambda$ by
\begin{align}
&T_\Lambda(w):=
                \begin{cases}
                  \inf\big\{n\in \Z,\, X_n(w) \in \Lambda \big\},&\text{if }w \in  \Wro_\Lambda,\\
                +\infty,&\text{otherwise }.
                \end{cases}
 \label{HA}%
\end{align}
Notice that $T_\Lambda(w)>-\infty$ because by definition $w$ intersects $\Lambda$ at most finitely many times.

Define the time shift $\theta: {\Wro}\to {\Wro}$ by $[\theta w](k):=w(k+1)$, and let $\theta^\ell$ be the shift by $\ell$ time units, $[\theta^{\ell}w](k):=w(k+\ell)$, $\ell  \in \Z$. Given a function $g:\Wro\to\R$ let $(\theta g)(w) := g(\theta w)$.

Given $x\in \R^d$ let $P^x$ denote the probability measure on $\big({\Wro}, \crW^{\rot}\big)$ having finite dimensional distributions 
\begin{align}
&P^{x_0}\big[X_{\ell}\in dx_\ell,\dots, X_0\in dx_{x_0},\dots, X_k\in dx_k\big]\label{psupx}\\
&\hspace{5cm}=\delta_x(dx_0)\,\prod_{i=1}^{k} p_{\frac{1}{2\alpha}}(x_{i-1},x_i)\, dx_i\,
\prod_{j=1}^{\ell} p_{\frac{1}{2\alpha}}(x_{-j+1},x_{-j})\, dx_{-j},\nn
\end{align}
where $\delta_x$ is the Dirac distribution at $x$, and $p_{\frac{1}{2\alpha}}$ is defined in \eqref{Brownian}. That is, 
$P^x$ is the law of a doubly infinite random walk in $\R^d$ that satisfies $X_0=x$ and has independent, identically distributed Gaussian increments $N(0,\frac{1}{2\alpha})$. The fact that the walk is transient implies $P^x(\Wro)=1$.   
Denote by $E^x$ the expectation with respect to $P^x$.
Since the Lebesgue measure is reversible for the random walk, we have
\begin{align}
  \label{ti1}
  \int_{\R^d} E^x [g]\,dx = \int_{\R^d} E^x [\theta g] \, dx
\end{align}
for bounded measurable test functions $g:\Wro\to\R$.

Consider  the measure $Q^{\capa}_\Lambda$ on $\Wro_\Lambda$, that integrates a test function $g: \Wro\to \R$ as
\begin{align}
\label{QA}
Q_\Lambda^{\text{cap}} g := \int_\Lambda E^x \big[g \,\one_{\{T_\Lambda=0\}} \big] dx. 
\end{align}
The function
$e_\Lambda(x):=P^x[T_\Lambda=0]$ 
is the density of a measure on $\R^d$ supported on $\Lambda$ called the \emph{equilibrium measure} associated to the Gaussian random walk. The capacity of $\Lambda$ is defined by
$\text{cap}(\Lambda) := \int_\Lambda e_\Lambda(x)\,dx$; see \cite{Sznitman1}. The equilibrium measure satisfies 
\begin{align}\label{ll1}
     e_\Lambda(y)+ \int_\Lambda e_\Lambda(x) K_1(x,y) \,dx = 1, \qquad y\in \Lambda,
\end{align}
where $K_1$ is defined in \eqref{Klambda}. To show \eqref{ll1}, consider $y\in\Lambda$ and write  
\begin{align*}
  1 &= \sum_{k \ge 0} P^y(T_\Lambda = -k)    = e_\Lambda(y) + \sum_{k \ge 1} P^y(T_\Lambda = -k) \\ 
 &= e_\Lambda(y) + \sum_{k \ge 1} \int_\Lambda P^x(T_\Lambda = 0) J^k(x,y) \,dx 
   = e_\Lambda(y)+ \int_\Lambda e_\Lambda(x) K_1(x,y) \,dx.
\end{align*}
Define the measure $Q_\Lambda^{\unif}$ on $\Wro_\Lambda$ by
\begin{align}
  \label{unif}
  Q_\Lambda^{\unif}g 
  &:= \int_\Lambda E^{x}\Bigl[\frac{g}{n_\Lambda}\Bigr]\,  dx,
\end{align}
where 
\begin{align}
\label{nAw}
  n_\Lambda(w) := \sum_{n\in\Z} {\mathbf 1}_{\Lambda}(X_n(w))
\end{align}
is the number of visits of the trajectory $w$ to $\Lambda$. The weight $ Q_\Lambda^{\unif}$ assigns to a trajectory $w$ is inversely proportional to the number of visits of $w$ to the set $\Lambda$.

  \begin{Proposition}
    \label{pro27}
    For any bounded set $\Lambda\subset\R^d$ and measurable bounded function $g:\Wro\to\R$ invariant under time shifts, $g=\theta g$, we have
    \begin{align}
      \label{pro28}
 Q_\Lambda^{\unif} g= Q_\Lambda^{\capa}g.
  \end{align}
\end{Proposition}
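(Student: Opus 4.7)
The plan is to exhibit a common ``unrooted'' integral that both $Q_A^{\unif}g$ and $Q_A^{\capa}g$ compute whenever $g$ is shift invariant. For this I would work with the $\sigma$-finite shift-invariant measure $\mathcal{P}(dw) := \int_{\R^d} P^x(dw)\,dx$ on $\Wro$, whose shift invariance is exactly the content of \eqref{ti1}. Then both definitions can be written against $\mathcal{P}$ as
\begin{align*}
Q_A^{\unif}g = \int \frac{g\,\one_A(X_0)}{n_A}\,d\mathcal{P}, \qquad Q_A^{\capa}g = \int g\,\one_{\{T_A=0\}}\,d\mathcal{P},
\end{align*}
the second identity using that $T_A=0$ forces $X_0\in A$, so the restriction to $x\in A$ in \eqref{QA} is automatic.

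The heart of the argument is the pointwise partition-of-unity identity on $\Wro_A$
\begin{align*}
\sum_{k\in\Z}\one_A(X_k(w))\,\one_{\{T_A(\theta^k w)=0\}} = 1,
\end{align*}
which follows from $T_A(\theta^k w)=T_A(w)-k$: the second indicator picks out the single value $k=T_A(w)$, at which $X_k(w)\in A$ automatically. I would insert this factor of $1$ into the integrand of $Q_A^{\unif}g$, swap the sum and the integral, and then for each fixed $k$ apply the shift $w\mapsto\theta^{-k}w$. Since $g$ and $n_A$ are shift invariant, $X_j\circ\theta^{-k}=X_{j-k}$, and $T_A(\theta^k\theta^{-k}w)=T_A(w)$, the $k$-th summand rewrites as $\int g\,\one_A(X_{-k})\,\one_A(X_0)\,\one_{\{T_A=0\}}/n_A\,d\mathcal{P}$. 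Summing back over $k\in\Z$ collapses $\sum_k \one_A(X_{-k})=n_A$, the denominator cancels, and what remains is exactly $\int g\,\one_{\{T_A=0\}}\,d\mathcal{P}=Q_A^{\capa}g$.

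The one delicate point, and the step I would be most careful about, is the Fubini/Tonelli justification for moving the sum over $k$ across the $\mathcal{P}$-integral, since $\mathcal{P}$ has infinite total mass and the individual summands are only nontrivial on a single slice of a given trajectory. For $g\ge 0$ all terms are nonnegative and Tonelli applies termwise; for a general bounded shift-invariant $g$ I would split $g=g_+-g_-$ and observe that the end result $\int |g|\,\one_{\{T_A=0\}}\,d\mathcal{P}$ is bounded by $\|g\|_\infty\,\capa(A)<\infty$, which legitimates all rearrangements. A minor additional check is that the identity is insensitive to the $\mathcal{P}$-null set $\{n_A=0,\,X_0\in A\}=\emptyset$, so the factor $1/n_A$ causes no measurability issue.
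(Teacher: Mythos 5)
Your proof is correct and follows essentially the same route as the paper: you insert a partition of unity indexed by the entrance time (your $\one_A(X_k)\one_{\{T_A\circ\theta^k=0\}}=\one_{\{T_A=k\}}$ is the paper's $\sum_{i\le 0}\one_{\{T_A=i\}}$ on $\{X_0\in A\}$), apply the shift invariance \eqref{ti1} of the Lebesgue-initialized walk together with $\theta g=g$ and $\theta n_A=n_A$, and collapse $\sum_k\one_A(X_{-k})=n_A$ against the denominator. Packaging $\int dx\,E^x$ as the $\sigma$-finite measure $\mathcal P$ and spelling out the Fubini/Tonelli step are only cosmetic refinements of the paper's argument.
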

\begin{proof}
Write
\begin{align}
    Q^{\unif}_\Lambda g
     &= \int_\Lambda  dx \,E^{x} \left[\frac{g}{n_\Lambda}\, \hbox{$\sum_{i\le 0}$ } \one_{\{T_\Lambda=i\}}\right]\,\nn\\
    &=  \sum_{i\le 0}\int_{\R^d}  dx \,E^{x} \left[\one_\Lambda (X_0)  \one_{\{T_\Lambda=i\}}\,\frac{g}{n_\Lambda}\right] \qquad\text{by Fubini's theorem} \nn\\
    &=  \sum_{i\ge 0}\int_{\R^d}  dx \,E^{x}\left[ \one_\Lambda(X_i) \one_{\{T_\Lambda=0\}} \,\theta^{i}\Big\{\frac{g}{n_\Lambda}\Big\}\right] \qquad \text{by \eqref{ti1}}\nn\\
    &=  \int_\Lambda  dx \,E^{x} \left[\one_{\{T_\Lambda=0\}} \,\frac{g}{n_\Lambda}\, \hbox{$\sum_{i\ge 0}$ } \one_\Lambda(X_i) \right]\qquad \text{since $\theta^i\Big\{\frac{g}{n_\Lambda}\Big\}=\frac{g}{n_\Lambda}$} \nn\\
    &=  \int_\Lambda dx \,E^{x} \big[\one_{\{T_\Lambda=0\}}\,  g\big] =  Q^{\capa}_\Lambda g.\nn
    \qedhere        \end{align}
\end{proof}

\begin{Lemma}[Compatibility and additivity] Let $A\subset B$ be Borel bounded sets of $\R^d$, and let $g$ be a test function that is invariant under time shifts, $g=\theta g$. Then 
  \label{compatib}
  \begin{align}
    Q^{\capa}_{B}g\one_{\Wro_A}&=Q^{\capa}_{A}g  && \text{(compatibility)},\label{compa}\\[2mm]
    Q^{\capa}_Bg &= Q^{\capa}_Ag + Q^{\capa}_{B\setminus A}\,g \one_{(\Wro_A)^c}&&\text{(additivity)}, \label{addit}
  \end{align}
     where $(\Wro_A)^c= \Wro\setminus \Wro_A$. The same holds for $Q^{\unif}$.
  \end{Lemma}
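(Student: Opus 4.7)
The plan is to prove the two identities for $Q^\capa$ first, and then deduce the $Q^\unif$ versions from Proposition~\ref{pro27}, since the cut-off indicators $\one_{\Wro_A}$ and $\one_{(\Wro_A)^c}$ are themselves shift-invariant.

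For compatibility, I would partition the event $\{T_B=0\}\cap \Wro_A$ according to the value of $T_A$. Since $A\subset B$ we have $T_A\ge T_B = 0$, so $T_A$ takes finite non-negative values on this event, and
\[
\{T_B=0\}\cap\Wro_A=\bigsqcup_{k\ge 0}\{T_B=0,\,T_A=k\}.
\]
The key tool is that for shift-invariant $g$ and any measurable event $E$,
\[
\int_{\R^d} E^x\!\left[g\,\one_{E}\right]dx = \int_{\R^d} E^x\!\left[\theta^k(g\,\one_{E})\right]dx = \int_{\R^d} E^x\!\left[g\,\one_{\theta^{-k}E}\right]dx,
\]
by iterating \eqref{ti1}. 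Applied to $E=\{T_B=0,\,T_A=k\}$ and noting $T_A(\theta^k w)=T_A(w)-k$, $T_B(\theta^k w)=T_B(w)-k$, the shifted event becomes $\{T_B=-k,\,T_A=0\}$. The integrand on $\{T_B=0\}$ is supported in $\{X_0\in B\}$, and after the shift on $\{T_A=0\}$ it is supported in $\{X_0\in A\}$, so both integrals over $\R^d$ reduce to integrals over $B$ and $A$ respectively. Summing over $k\ge 0$ and recognising $\bigsqcup_{k\ge 0}\{T_B=-k,\,T_A=0\}=\{T_A=0\}$ (again using $T_B\le T_A=0$) yields
\[
Q^\capa_B g\,\one_{\Wro_A}=\sum_{k\ge 0}\int_B E^x[g\,\one_{T_B=0,\,T_A=k}]\,dx=\sum_{k\ge 0}\int_A E^x[g\,\one_{T_A=0,\,T_B=-k}]\,dx=Q^\capa_A g.
\]

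For additivity, I would split $\one_{\Wro_B}=\one_{\Wro_A}+\one_{(\Wro_A)^c}\one_{\Wro_B}$ under the integrand defining $Q^\capa_B g$. The first piece equals $Q^\capa_A g$ by the compatibility just proved. For the second piece, observe that on $(\Wro_A)^c$ the trajectory never visits $A$, hence $T_B(w)=T_{B\setminus A}(w)$ and $X_{T_B}(w)\in B\setminus A$; consequently $\{T_B=0\}\cap(\Wro_A)^c=\{T_{B\setminus A}=0\}\cap(\Wro_A)^c$ and the $dx$-integration is automatically restricted to $B\setminus A$, giving exactly $Q^\capa_{B\setminus A}g\,\one_{(\Wro_A)^c}$.

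Finally, for $Q^\unif$, the indicators $\one_{\Wro_A}$ and $\one_{(\Wro_A)^c}$ are shift-invariant, so if $g=\theta g$ then $g\,\one_{\Wro_A}$ and $g\,\one_{(\Wro_A)^c}$ are shift-invariant as well, and Proposition~\ref{pro27} transfers both identities from $Q^\capa$ to $Q^\unif$ verbatim. The only delicate point is bookkeeping the shift indices $k$ and verifying that the disjoint-union reassembly is complete — this is essentially all the content of the argument; no analytic estimates beyond Fubini and shift-invariance of Lebesgue measure are required.
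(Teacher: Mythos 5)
Your proof is correct and follows essentially the same route as the paper: decompose $\one_{\Wro_A}$ over the value of $T_A$ on $\{T_B=0\}$, transport each piece by the time shift using \eqref{ti1}, reassemble via $\sum_{k\ge0}\one_{\{T_B=-k,\,T_A=0\}}=\one_{\{T_A=0\}}$, obtain additivity by splitting off $\one_{(\Wro_A)^c}$ and observing $T_B=T_{B\setminus A}$ there, and transfer everything to $Q^{\unif}$ through Proposition \ref{pro27} since the cut-off indicators are shift-invariant. The only blemish is notational: the event you write as $\theta^{-k}E$ is really the image $\theta^{k}E=\{w:\theta^{-k}w\in E\}$ (your stated identification $\{T_B=0,T_A=k\}\leftrightarrow\{T_B=-k,T_A=0\}$ is the right one), and since \eqref{ti1} applies to shifts in either direction this does not affect the argument.
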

  
 
  \begin{proof} Write $\one_{\Wro_A} = \sum_{i\in\Z}  \one_{\{T_A=i\}}$. Then    
      \begin{align*}
       Q^{\capa}_{B} g \one_{\Wro_A}
       &=\sum_{i\ge 0} \int dx \,E^{x}\big[\one_{\{T_{B}=0\}} \one_{\{T_A=i\}} \, g\big]\\
        &= \sum_{i\ge 0} \int dx \,E^{x}\big[\theta^i \big(\one_{\{T_{B}=-i\}} \one_{\{T_A=0\}} \, g\big)\big] \\
         &= \sum_{j\le 0} \int dx \,E^{x}\big[\one_{\{T_{B}=j\}} \one_{\{T_A=0\}} \, g\big] \\
    &=\int dx \,E^{x}\big[\one_{\{T_A=0\}} \, g\, \hbox{$\sum_{i\le 0}$ } \one_{\{T_{B}=i\}} \big]
    = Q^{\capa}_{A}g,  
      \end{align*}
      since
    $\one_{\{T_A=0\}}\sum_{i\le 0} \one_{\{T_{B}=i\}} = \one_{\{T_A=0\}}$.
    This proves \eqref{compa}. To get \eqref{addit} write
   \begin{align}
      Q^{\capa}_{B}g &= Q^{\capa}_{B}g(\one_{\Wro_A} + \one_{(\Wro_{A})^c}) 
      = Q^{\capa}_A g+ Q^{\capa}_{B}g\one_{(\Wro_{A})^c} = Q^{\capa}_Ag + Q^{\capa}_{B\setminus A}g\one_{(\Wro_{A})^c}. \nn
    \end{align}
    Since $g\one_{\Wro_A}=\theta(g\one_{\Wro_A})$, $g=\theta(g)$ and $g\one_{(\Wro_{A})^c}=\theta(g\one_{(\Wro_{A})^c})$, Proposition \ref{pro27} implies that \eqref{compa} and \eqref{addit} hold for $Q^{\unif}$ as well.
  \end{proof}
Now let us identify trajectories that differ by time shift: given two doubly-infinite trajectories $w,\,w' \in {\Wro}$, we say that $w\sim w'$ if there exists $k\in \Z$ such that $w'=\theta^kw$. Let 
\begin{equation}
\label{doubly inf}
\tW:={\Wro}/\sim
\end{equation}
be the space of trajectories modulo time shift, $\pi:{\Wro}\to \tW$ the projection, and $\crW$ the push-forward $\sigma$-algebra on $\tW$. Given $\Lambda\subset \R^d$ let 
${\tW}_\Lambda=\pi(\Wro_\Lambda)$. If $g:\Wro\to \R$ is shift invariant, then it can be extended to $\tilde{g}: \tW\to \R$ by $\tilde{g}(\tilde{w})=g(w)$, for any choice of representative $ w \in \pi^{-1}(\tilde w)$.
\begin{Proposition}[Infinite volume measure]
\label{THnu}
There exists a unique $\sigma$-finite measure $Q^{\ri}$ on $(\tW, \crW)$ such that for each bounded set $\Lambda\subset \R^d$
\begin{align}
\label{nu}
{\mathbf 1}_{\tW_\Lambda} Q^{\ri}=\pi_* Q^{\capa}_\Lambda=\pi_* Q^{\unif}_\Lambda,
\end{align}
where $\pi_*Q_\Lambda^{\capa}$ and $\pi_*Q_\Lambda^{\unif}$ denote the push-forward measures.
\end{Proposition}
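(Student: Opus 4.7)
\medskip

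\noindent\textbf{Proof plan for Proposition \ref{THnu}.} The plan is standard Kolmogorov-type gluing of a compatible family of finite measures into a $\sigma$-finite one. First I will dispense with the equality $\pi_*Q^\capa_A=\pi_*Q^\unif_A$: a bounded $\tcW$-measurable function $\tilde g$ on $\tW$ pulls back to a bounded $\cW$-measurable function $g=\tilde g\circ\pi$ on $\Wro$ which is shift-invariant, so Proposition \ref{pro27} gives $Q^\capa_A g=Q^\unif_A g$, and integrating $\tilde g$ against either pushforward yields the same number. So throughout we work with the cleaner representative $\pi_*Q^\capa_A$.

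Next I will construct $Q^{\ri}$ by exhaustion. Fix the sequence of balls $B_n\subset\R^d$, so $\tW_{B_n}\uparrow\tW$ (every $\tilde w\in\tW$ has a representative visiting some $B_n$). For each $\tilde F\in\tcW$ I set
\begin{align*}
Q^{\ri}(\tilde F):=\lim_{n\to\infty}\pi_*Q^\capa_{B_n}(\tilde F\cap\tW_{B_n}).
\end{align*}
The limit exists as a non-decreasing limit thanks to the additivity in Lemma \ref{compatib}: for $n\le m$, identity \eqref{addit} applied to $A=B_n\subset B=B_m$ with $g$ the indicator of $\pi^{-1}(\tilde F)$ (shift-invariant because $\tilde F\in\tcW$) gives
\begin{align*}
\pi_*Q^\capa_{B_m}(\tilde F\cap\tW_{B_m})=\pi_*Q^\capa_{B_n}(\tilde F\cap\tW_{B_n})+\pi_*Q^\capa_{B_m\setminus B_n}(\tilde F\cap\tW_{B_m}\cap(\tW_{B_n})^c),
\end{align*}
which is non-negative. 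Countable additivity of $Q^{\ri}$ reduces, via monotone convergence in $n$, to countable additivity of each $\pi_*Q^\capa_{B_n}$, which is inherited from the finite measure $Q^\capa_{B_n}$. The restriction property $\mathbf{1}_{\tW_A}Q^{\ri}=\pi_*Q^\capa_A$ for an arbitrary bounded $A$ follows by choosing $n$ with $A\subset B_n$ and invoking the compatibility \eqref{compa}: $\mathbf{1}_{\tW_A}\pi_*Q^\capa_{B_m}=\pi_*Q^\capa_A$ for all $m\ge n$, so the limit defining $Q^{\ri}(\tilde F\cap\tW_A)$ stabilizes at $\pi_*Q^\capa_A(\tilde F\cap\tW_A)$.

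For $\sigma$-finiteness, each $\tW_{B_n}$ has $Q^{\ri}$-mass $Q^\capa_{B_n}(\Wro_{B_n})=\int_{B_n}e_{B_n}(x)\,dx\le|B_n|<\infty$, and $\bigcup_n\tW_{B_n}=\tW$. Uniqueness is immediate: if $Q'$ is another $\sigma$-finite measure satisfying $\mathbf{1}_{\tW_A}Q'=\pi_*Q^\capa_A$ for every bounded $A$, then $Q'$ and $Q^{\ri}$ coincide on each $\tW_{B_n}$, and hence on the exhausting union.

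The only delicate step is checking the independence of the construction from the exhausting sequence and verifying that the compatibility statement \eqref{compa} applies; this uses that the set $\pi^{-1}(\tilde F)\cap\Wro_{B_n}$ is shift-invariant because $\tilde F\in\tcW$ already is (the $\sigma$-algebra $\tcW$ is the pushforward of the shift-invariant $\sigma$-algebra on $\Wro$). Once that point is pinned down, the rest of the argument is formal.
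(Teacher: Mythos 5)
Your proposal is correct and follows essentially the same route as the paper: the identity $\pi_*Q^\capa_A=\pi_*Q^\unif_A$ via Proposition \ref{pro27}, construction of $Q^{\ri}$ by exhaustion along an increasing sequence of bounded sets, consistency and the restriction property via the compatibility identity \eqref{compa}, $\sigma$-finiteness from $Q^\capa_{A_n}(\Wro)=\capa(A_n)<\infty$, and uniqueness because the sets $\tW_{A_n}$ exhaust $\tW$. Your formulation of the definition as a monotone limit (using \eqref{addit} for monotonicity) is only a cosmetic variant of the paper's direct definition $\one_{\tW_{A_n}}Q^{\ri}:=\pi_*Q^\capa_{A_n}$.
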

\begin{proof}
Let $\tilde{g}:\tW\to \R$ and define $g:\Wro\to \R$ by $g=\tilde{g}\circ \pi$. Then $\theta g=g$ and 
\begin{align*}
\pi_* Q_\Lambda^{\capa} \tilde{g}&=Q_\Lambda^{\capa} \tilde{g}\circ \pi=Q_\Lambda^{\capa} g=Q_\Lambda^{\unif} g=Q_\Lambda^{\unif}\tilde{g}\circ \pi   =\pi_* Q_\Lambda^{\unif} \tilde{g}
\end{align*} 
by Proposition \ref{pro27}. This proves the second equality in \eqref{nu}.
 
Let $\{\Lambda_n\}_{n\ge 1}$ be an increasing sequence of bounded sets such that $\Lambda_n \nearrow_{n \to \infty} \R^d$. Then $\tW=\bigcup_{n\ge 1} \tW_{\Lambda_n}$ and uniqueness of the measure satisfying \eqref{nu} follows. Define $Q^{\ri}$ on $\tW_{\Lambda_n}$ by
\begin{align}
\label{onAn}
{\mathbf 1}_{\tW_{\Lambda_n}} Q^{\ri}:=\pi_* Q^{\capa}_{\Lambda_n}.
\end{align}
Let $\Lambda$ be a bounded set and take $n$ sufficiently large such that $\Lambda_n\supset \Lambda$. Then
\begin{equation}
\label{ntom}
\one_{\tW_{\Lambda}}\one_{\tW_{\Lambda_n}}\big(\pi_* Q^{\capa}_{\Lambda_n}\big)=\one_{\tW_{\Lambda}}\big(\pi_* Q^{\capa}_{\Lambda_n}\big)=\pi_* \one_{\tW_{\Lambda}}Q^{\capa}_{\Lambda_n} =\pi_*Q^{\capa}_{\Lambda},
\end{equation}
where the last identity follows from \eqref{compa}. In the case when $\Lambda=\Lambda_m$ for some $m<n$, \eqref{ntom} proves that the definition \eqref{onAn} is consistent and that the measure $Q^{\ri}$  defined in \eqref{onAn} satisfies \eqref{nu}. By \eqref{nu}, $Q^{\ri}(\tW_{\Lambda_n})=Q^{\capa}_{\Lambda_n}({\tW})=\text{cap}(\Lambda_n)<\infty$, which proves the ${\sigma}$-finite property.
\end{proof}
For a trajectory ${\tilde w}\in\WW$, let $\{\tw\}:= \{w(n):n\in\Z\}$ for any  $w\in \pi^{-1}({\tilde w})$, this is just the set of points in the trajectory $\tw$. Recall $n_\Lambda(\tw)$ is the number of points in $\{\tw\}\cap\Lambda$. Denote by $\kW$ the space of locally finite interlacement configurations,
\begin{align}
  \kW :=\{\Gamma\subset \WW: \tsum_{\tw\in\Gamma}\,n_\Lambda(\tw)\,<\,\infty, \hbox{ for all compact }\Lambda\subset\R^d\}. 
\end{align}
\begin{Definition}[Gaussian random interlacements]\rm
\label{randomi}
Let $d\ge 3$ and $\rho>0$. The \emph{Gaussian random interlacement} process at level $\rho$ is the Poisson point process on the space $\kW$
, with intensity measure $\rho Q^{\ri}(d\tilde{w})$. We will use $\Gamma^{\ri}_\rho$ to denote a configuration of the random interlacements at level $\rho$, and $\mu^{\ri}_\rho$ to denote  its law.
\end{Definition}

This definition is slightly simpler than the one proposed by Sznitman \cite{Sznitman1} in $\Z^d$. The original definition considers a point process in the product space $\R_{\ge 0} \times \tW$ with intensity measure $du \otimes Q^{\ri}(d\tilde{w})$, so that each trajectory $\tilde{w}$ is sampled with an associated level $u>0$ used to couple realizations of the model at different densities. For our purposes it is enough  to a priori fix the level $\rho$.

\subsection{Point marginal of the Gaussian random interlacements} 
\label{pm2}
We now study the point marginal of the random interlacements.
Define the random point process
\begin{align}
  \label{chi}
  \sX^{\ri}_\rho := \bigcup_{\tw\in\Gamma^{\ri}_\rho}\{\tw\},\quad \text{ and let }\quad \nu^{\ri}_\rho \quad \text{be its law}.
\end{align}
Recall that $n_\Lambda(\sX)$ denotes the number of points of a locally finite set of points $\sX\subset \R^d$ in $\Lambda$. The mean density of 
$\sX^{\ri}_\rho$ in $\Lambda$ is defined by 
\begin{equation}
\label{density ri}
\varrho^{\ri}_\rho(\Lambda)\;:=\; \frac{1}{|\Lambda|} \nu^{\ri}_\rho \,n_\Lambda \;=\;
\frac{1}{|\Lambda|}\int \sum_{\tilde{w}\in \Gamma} n_\Lambda(\tilde{w})\,\mu^{\ri}_\rho(d\Gamma).
\end{equation}

\begin{Lemma} [Point density]
\label{density RI}
For any bounded Borel set $\Lambda \subset \R^d$ we have
 \begin{align}
      Q^{\ri}(n_\Lambda) &= |\Lambda|,  \label{cuf3}\\[1mm]
    \varrho^{\ri}_\rho(\Lambda)& =\rho.\label{density RI2}
    \end{align}
In particular the mean density does not depend on the set $\Lambda$.
\end{Lemma}
\begin{proof}
To show \eqref{cuf3}, write
    \begin{align}
      Q^{\ri}(n_\Lambda) = Q^{\unif}_\Lambda (n_\Lambda) = \int_\Lambda E^x\left[\frac{n_\Lambda}{n_\Lambda}\right]dx= |\Lambda|.
    \end{align}
To show \eqref{density RI2},   we compute the mean number of points of the process in $\Lambda$ as
\begin{align*}
   \int \sum_{\tilde{w}\in \Gamma}  n_\Lambda(\tilde{w})\,\mu^{\ri}_\rho(d\Gamma)
  =\rho Q^{\ri}(n_\Lambda)
=\rho |\Lambda|,
\end{align*}
where the first identity follows from Campbell's theorem and the second one by \eqref{cuf3}.
\end{proof}

We next compute the Laplace functionals of $\sX_\rho^{\ri}$, denoted by
\begin{align}
  \cL_{\rho}^{\ri}(\phi)&:=\int \nu_{\rho}^{\ri}(d\sX) \exp\Big\{ -\sum_{x \in \sX} \phi(x) \Big\}. \nn
\end{align}
To simplify notation, in the rest of this section we will write $K$ instead of $K_1$, the kernel defined in \eqref{Klambda}.
\begin{Proposition}[Laplace functionals]
\label{riLaplace}
Let $\phi:\R^d \to \R_{\ge 0}$ be a measurable function with compact support and satisfying $ K(0,0) \|e^{-\phi}-1\|_1<1$. Then 
\begin{align}
\label{laplace1}
\cL_{\rho}^{\ri}(\phi)
&=\exp\Big\{ \rho\sum_{n\ge 1} \ \int_{(\R^d)^n} \prod_{k=1}^{n-1} K(x_k,x_{k+1})\,\prod_{k=1}^n\big(e^{-\phi(x_k)}-1\big)\, dx_1 \dots dx_n\Big\}, 
\end{align}
with $ \prod_{k=1}^{0} K(x_k,x_{k+1}):=1$.
\end{Proposition}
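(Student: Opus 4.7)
The plan is to combine Campbell's theorem for Poisson processes with the capacity representation of $Q^{\ri}$ from Propositions \ref{pro27} and \ref{THnu}, and then to mimic the subset-expansion of the integrand used in the proof of Proposition \ref{finite-Laplace}. Set $\Phi(\tilde w):=\sum_{n\in\Z}\phi(X_n(w))$; this is well defined and shift-invariant because $\phi$ has compact support and the walk is transient ($d\ge 3$), so only finitely many summands are non-zero. Applying Campbell to $\Gamma^{\ri}_\beta$ yields
\begin{align}
\cL^{\ri}_\beta(\phi)=\exp\Bigl\{\beta\int_{\tW}(e^{-\Phi(\tilde w)}-1)\,Q^{\ri}(d\tilde w)\Bigr\}.\nn
\end{align}
Setting $A:=\mathrm{supp}(\phi)$, the integrand vanishes off $\tW_A$, so by \eqref{nu} the $Q^{\ri}$-integral equals $\int_A E^x[(e^{-\Phi(w)}-1)\one_{\{T_A=0\}}]\,dx$.

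I would then expand
\begin{align}
e^{-\Phi(w)}-1=\prod_{n\in\Z}\bigl[1+(e^{-\phi(X_n)}-1)\bigr]-1=\sum_{m\ge 1}\sum_{s_1<\cdots<s_m}\prod_{i=1}^m (e^{-\phi(X_{s_i})}-1),\nn
\end{align}
noting that on $\{T_A=0\}$ the backward walk avoids $A$ while $\phi$ vanishes off $A$, so only tuples with $s_i\ge 0$ contribute. Since $\{T_A=0\}$ depends only on $(X_n)_{n\le -1}$ and is therefore independent of $(X_n)_{n\ge 0}$, the expectation factors as $e_A(x)\,E^x[\prod_i(e^{-\phi(X_{s_i})}-1)]$ with $e_A(x)=P^x(T_A=0)$. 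Applying the Markov property and summing the gaps: for $s_1\ge 1$, letting $r_0:=s_1$ and $r_k:=s_{k+1}-s_k$ and using $\sum_{r\ge 1}J^r(x,y)=K(x,y)$, the resulting contribution is $e_A(x)\int\prod_{i=1}^m(e^{-\phi(x_i)}-1)\,K(x,x_1)\prod_{k=1}^{m-1}K(x_k,x_{k+1})\,dx_1\cdots dx_m$; for $s_1=0$ the first factor is the deterministic $e^{-\phi(x)}-1$, and after relabelling $x_1:=x$ the contribution is $e_A(x_1)\prod_i(e^{-\phi(x_i)}-1)\prod_{k=1}^{m-1}K(x_k,x_{k+1})$.

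Adding the two cases and integrating $x$ over $A$, the coefficient of $\prod_{i=1}^m(e^{-\phi(x_i)}-1)\prod_{k=1}^{m-1}K(x_k,x_{k+1})$ becomes
\begin{align}
e_A(x_1)+\int_A e_A(x)K(x,x_1)\,dx,\nn
\end{align}
which equals $1$ for $x_1\in A$ by the harmonic identity \eqref{ll1}; on $A^c$ the whole product vanishes through the factor $e^{-\phi(x_1)}-1$, so the full integrand reduces to $\prod_i(e^{-\phi(x_i)}-1)\prod_{k=1}^{m-1}K(x_k,x_{k+1})$ over $(\R^d)^m$, and summing over $m\ge 1$ yields \eqref{laplace1}. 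The main obstacle is justifying rigorously the interchanges of sums and integrals; this is where $K(0,0)\|e^{-\phi}-1\|_1<1$ is used. The bound $K(x,y)\le K(0,0)$, immediate from \eqref{Jk}, shows that the $m$-th term has $L^1$-norm at most $\|e^{-\phi}-1\|_1\,(K(0,0)\|e^{-\phi}-1\|_1)^{m-1}$, producing a convergent geometric majorant that legitimises every exchange, in the same spirit as in the proof of Proposition \ref{finite-Laplace}.
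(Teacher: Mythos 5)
Your proposal is correct and follows essentially the same route as the paper's proof: Campbell's theorem, reduction to $Q^{\capa}_A$ on the support of $\phi$, the subset expansion split according to whether the time-zero point carries a factor, the Markov property with gap sums producing the kernels $J^r$ and $K$, the harmonic identity \eqref{ll1} to absorb $e_A$, and the geometric majorant from $K(0,0)\|e^{-\phi}-1\|_1<1$ to justify the interchanges. The only cosmetic differences are that you expand over all integer times and then discard negative indices via $\{T_A=0\}$, and that you make explicit the independence of the forward and backward walks used to factor out $e_A(x)$, both of which are implicit in the paper's argument.
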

\begin{proof}
  Define $\Phi: \tW\to \R$  by 
\begin{equation}
\label{capital F}
\Phi(\tilde{w})=\sum_{n\in \Z} \phi(w(n)),\quad \text{for any }w\in\pi^{-1}(\tw).
\end{equation} 
The definition does not depend on the choice of representative $w\in \pi^{-1}(\tilde{w})$. Then,
\begin{align}
\label{laplace2}
\cL^{\ri}_\rho(\phi)=\int_{\tW} \exp\Bigl\{ -\sum_{\tilde{w}\in \Gamma} \Phi(\tilde{w})\Bigr\} \, \mu_{\rho}^{\ri}(d\Gamma)=
\exp\Big\{ \rho \int_{\tW} \big(e^{-\Phi(\tilde{w})}-1\big)Q^{\ri}(d\tilde{w})\Big\},
\end{align}
by Campbell's theorem.
Let $\Lambda=\text{supp}(\phi)\subset \R^d$ which is compact by hypothesis. Then
\begin{align}
\label{long}
\int_{\tW} \big(e^{-\Phi(\tilde{w})}-1\big)Q^{\ri}(d\tilde{w})&= Q^{\capa}_\Lambda\Big( \prod_{i=0}^\infty e^{-\phi(w(i))}\,-1\Big),
\end{align}
 by Proposition \ref{THnu}. For $w \in \Wro$, let us write
\begin{align}
\label{tired}
 \prod_{i=0}^\infty e^{-\phi(w(i))}\,-1&= \prod_{i=0}^\infty \Big[\big(e^{-\phi(w(i))}-1\big)+1\Big]\,-1 \notag\\ 
 &=\sum_{ \emptyset\neq S\subset \N_0}\  \prod_{i\in S} \big(e^{-\phi(w(i))}-1\big) \notag \\
&=\sum_{n\ge 1} \ \sum_{0\le i_1<i_2<\dots\,<i_n}\   \prod_{j=1}^n \big(e^{-\phi(w(i_j))}-1\big).
\end{align}
The trajectory $w \in \Wro$ enters the support of $\phi$ finitely many times and the factor $e^{-\phi(x)}-1$ vanishes outside this set, hence the sum on the second line above may be  restricted to finite sets $S\subset \N_0$ to get the third line. 

Let $\Lambda^k=\Lambda\times \dots \times \Lambda$ denote the product of $k$ copies of $\Lambda$, $k\ge 1$. We replace the last expression from \eqref{tired} in \eqref{long}, and exchange the order of the sums and the integral, which is justified later,  to obtain
\begin{align}
&\frac{1}{\rho}\log\cL^{\ri}_\rho(\phi) = \nn\\
&\ \ =\sum_{n\ge 1} \ \sum_{0\le i_1<i_2<\dots\,<i_n}\int_{\Wro}  \prod_{j=1}^n \big(e^{-\phi(w(i_j))}-1\big) Q^{\capa}_\Lambda(dw) \nn\\
&\ \ =\sum_{n\ge 1}\sum_{\substack{1\le \ell_1\\ \dots \\ 1\le \ell_{n-1}}} \int_{\Lambda^n} e_\Lambda(x_1) \prod_{k=1}^{n-1} J_1^{\ell_k}(x_k, x_{k+1})
\prod_{k=1}^n\big(e^{-\phi(x_k)}-1\big)\, dx_1 \dots dx_n\label{ll22}\\
&\ \ \quad+\sum_{n\ge 1}\sum_{\ell\ge 1}\sum_{\substack{1\le \ell_1\\ \dots \\ 1\le \ell_{n-1}}} \int_{\Lambda^{n+1}} e_\Lambda(x_0) J_1^\ell(x_0,x_1) \prod_{k=1}^{n-1} J_1^{\ell_k}(x_k, x_{k+1})
\prod_{k=1}^n\big(e^{-\phi(x_k)}-1\big)\, dx_0 \dots dx_n  \nn\\
&\ \ =\sum_{n\ge 1}  \sum_{\substack{1\le \ell_1\\ \dots \\ 1\le \ell_{n-1}}} \  \int_{\Lambda^{n}} \prod_{k=1}^{n-1} J_1^{\ell_k}(x_k, x_{k+1})
\prod_{k=1}^n\big(e^{-\phi(x_k)}-1\big)\, dx_1 \dots dx_n\qquad \text{by \eqref{ll1} }\nn\\
&\ \ =\sum_{n\ge 1}\  \int_{\Lambda^{ n}} \prod_{k=1}^{n-1} K(x_k,x_{k+1})\,\prod_{k=1}^n\big(e^{-\phi(x_k)}-1\big)\, dx_1 \dots dx_n\label{j87}\\
&\ \ =\sum_{n\ge 1} \ \int_{(\R^d)^{n}} \prod_{k=1}^{n-1} K(x_k,x_{k+1})\,\prod_{k=1}^n\big(e^{-\phi(x_k)}-1\big)\, dx_1 \dots dx_n,\nn
\end{align}
where the first term in expression \eqref{ll22} corresponds to $i_1=0$ and in the second term there is a change of variables $l_j = i_j-i_{j-1}$. If we replace $e^{-\phi(x_k)}-1$ in the last expression by $|e^{-\phi(x_k)}-1|$, we obtain
a convergent series thanks to the condition $K(0,0) \| e^{-\phi} - 1 \|_1 < 1$. Thus we can 
apply the dominated convergence theorem to exchange the integral with infinite sums. The result follows by replacing the last expression in~\eqref{laplace2}.
\end{proof}

\subsubsection{Point correlations}
\label{pc-gi}

\begin{Proposition}[Point correlations]
  \label{pcri}
  The $n$-point correlation density of the point process $\sX_{\rho}^{\ri}$ is given by 
\begin{align}
  \label{18}
  \varphi_{\rho}^{\ri}(x_1,\dots,x_n)\; =\; \sum_{P\in \cP_n} \; \prod_{I=\{i_1,\dots,i_m\} \in P} \; \sum_{\sigma \in \cS_m} \;\rho\, K(x_{i_{\sigma(1)}},x_{i_{\sigma(2)}})\dots K(x_{i_{\sigma(m-1)}},x_{i_{\sigma(m)}}),
\end{align}
where we recall that $\cP_n$ is the set of partitions of $\{1,\dots,n\}$ given in \eqref{parti} and $\cS_m$ is the set of permutations of $\{1,\dots,m\}$.
\end{Proposition}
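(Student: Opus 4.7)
The plan is to identify $\varphi_n^{\ri}$ by matching two expansions of the Laplace functional $\cL_\beta^{\ri}(\phi)$: the explicit exponential formula of Proposition \ref{riLaplace}, and the general expansion of a Laplace functional in terms of correlation densities. Recall the standard identity: for any point process on $\R^d$ admitting $n$-point correlation densities $\varphi_n$ (with suitable integrability),
\begin{align}
  \cL(\phi) = 1 + \sum_{n\ge 1} \frac{1}{n!} \int_{(\R^d)^n} \prod_{k=1}^n \bigl(e^{-\phi(x_k)}-1\bigr)\, \varphi_n(x_1,\dots,x_n)\, dx_1 \cdots dx_n, \nonumber
\end{align}
obtained by expanding $\prod_{x\in\chi}(1+g(x))$ with $g:=e^{-\phi}-1$ over finite sub-configurations of $\chi$ and invoking Definition \ref{def-correl}. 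Evaluating $\cL(\phi)$ on test functions of the form $\sum_i s_i\,\one_{B_\epsilon(x_i)}$ and extracting mixed derivatives in the $s_i$ at $0$ shows that the symmetric densities $\varphi_n$ are uniquely determined by $\cL$.

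I would then recast the formula of Proposition \ref{riLaplace} into this form. Writing $g(x):=e^{-\phi(x)}-1$ and $b_m := \beta \int \prod_{k=1}^{m-1} K(x_k,x_{k+1})\prod_{k=1}^m g(x_k)\,dx_1\cdots dx_m$, we have $\log \cL_\beta^{\ri}(\phi)=\sum_{m\ge 1} b_m$. A relabeling of integration variables $x_k\mapsto x_{\sigma^{-1}(k)}$ for any $\sigma\in\cS_m$ leaves both $\prod_k g(x_k)$ and Lebesgue measure invariant, so
\begin{align}
  a_m := m!\,b_m = \beta \int_{(\R^d)^m} \Bigl(\sum_{\sigma\in\cS_m}\prod_{k=1}^{m-1} K(x_{\sigma(k)},x_{\sigma(k+1)})\Bigr)\prod_{k=1}^m g(x_k)\,dx_1\cdots dx_m. \nonumber
\end{align}
Applying Lemma \ref{exp} with these $a_m$ yields $\cL_\beta^{\ri}(\phi) = 1+\sum_{n\ge 1}\frac{1}{n!}\sum_{P\in\cP_n}\prod_{I\in P} a_{|I|}$. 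For a block $I=\{i_1,\dots,i_m\}\in P$, I would relabel the dummy variables of the factor $a_{|I|}$ to $x_{i_1},\dots,x_{i_m}$; the combined product over blocks then becomes a single integral against $\prod_{i=1}^n g(x_i)$, producing
\begin{align}
  \cL_\beta^{\ri}(\phi) = 1+\sum_{n\ge 1}\frac{1}{n!}\int_{(\R^d)^n}\prod_{i=1}^n g(x_i)\cdot\varphi_n^{\ri}(x_1,\dots,x_n)\,dx_1\cdots dx_n, \nonumber
\end{align}
where $\varphi_n^{\ri}$ is exactly the expression in the statement. Comparing with the general expansion and using the uniqueness of correlation densities completes the proof.

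The routine points are the interchanges of sum and integral, which are justified as in the proof of Proposition \ref{riLaplace} by the absolute-convergence hypothesis $K(0,0)\|e^{-\phi}-1\|_1<1$ (the argument only needs to hold for test functions in this class, which is enough to recover the correlations). The conceptual step that deserves attention is the symmetrization-plus-relabeling that converts the non-symmetric chronological kernel $\prod_{k=1}^{m-1}K(x_k,x_{k+1})$, arising from reading a single trajectory in the order it visits points, into the symmetric sum $\sum_{\sigma\in\cS_m}\prod_k K(x_{\sigma(k)},x_{\sigma(k+1)})$ demanded by the symmetry of $\varphi_n^{\ri}$. This captures the combinatorial content of the statement: each block $I\in P$ in the partition corresponds to the points of a common trajectory, the $\sum_\sigma$ enumerates the possible orderings in which the trajectory visits them, and the outer sum over partitions records how the $n$ test points are distributed among the trajectories of the Poisson process.
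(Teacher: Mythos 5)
Your proposal is correct, but it takes a genuinely different route from the paper. The paper's proof is the direct one: it computes the joint moment measures $E[N(A_1)\cdots N(A_n)]$ of the Poisson process via Campbell's theorem and the factorization of expectations over distinct trajectories (this produces the sum over partitions $\cP_n$), and then evaluates $Q^{\ri}(n_{A_{i_1}}\cdots n_{A_{i_m}})$ for points on a common trajectory by decomposing according to the order of visits (this produces the inner sum over $\cS_m$ and the chain of kernels $\beta K\cdots K$); the paper only writes this out for $n=3$, as in \eqref{ri19} and Fig.~\ref{correlations-ri}, leaving the general case to the reader. You instead extract the correlations from the already established Laplace functional of Proposition \ref{riLaplace}: symmetrize the chronological kernel to get $a_m=m!\,b_m$, exponentiate via Lemma \ref{exp} to obtain the sum over partitions with one symmetrized chain per block, and identify the coefficients of $\prod_k(e^{-\phi(x_k)}-1)$ with the correlation densities. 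This mirrors the paper's own passage from \eqref{F1} to \eqref{F2} in Proposition \ref{finite-Laplace}, and your symmetrization step and the block-relabeling are correct; the $\beta$ per block and the convention $\prod_{k=1}^{0}K:=1$ for singletons come out right, matching \eqref{ri19}. What each approach buys: yours treats all $n$ uniformly and reuses proved results, whereas the paper's direct computation needs no restriction on test functions and gives the probabilistic reading of each term (blocks are trajectories, permutations are visit orders). The price of your route is the identification step ``coefficients of the expansion of $\cL^{\ri}_\beta$ in powers of $e^{-\phi}-1$ equal the correlation densities of Definition \ref{def-correl}'': this requires differentiating $E[e^{-\sum_i s_i N(A_i)}]$ under the expectation at $s=0$, term-by-term differentiation of the absolutely convergent series (available only on the class $K(0,0)\|e^{-\phi}-1\|_1<1$, so one works with small disjoint sets and small $s_i$), and then an extension by additivity/monotone class from small sets to arbitrary bounded disjoint Borel sets. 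These are standard facts and you flag them, but they do carry the real technical weight of your argument and would need a few lines in a complete write-up.
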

This result is a consequence of Proposition \ref{riLaplace}. A direct proof of \eqref{18} can be obtained as in the loop soup case, Proposition \ref{ls-correl}. For instance, for $n=3$, taking $x,y,z$ distinct points in $\R^d$ and denoting $K_{xy}=K(x,y)$ we have 
\begin{align}
   \varphi_{\rho}^{\ri}(x,y,z)
                  &= \rho^3 +2 \rho^2 K_{xy} + 2 \rho^2 K_{yz}+ 2 \rho^2 K_{xz}+ 2\rho K_{xy}K_{yz}+ 2 \rho K_{xz}K_{zy}+ 2\rho K_{zx}K_{xy}. \label{ri19}
\end{align}
The terms in \eqref{ri19} are associated to the partition of the set of trajectories containing $x,y,z$, illustrated in Fig.~\ref{correlations-ri}. For instance, the first term $\rho^3$ is the density of the event  $\{\Gamma: x,\,y$ and $z$ belong to different trajectories in~$\Gamma\}$, while $2\rho K_{xy}K_{yz}$ is the density of $\{\Gamma: x,\,y$ and $z$ are visited by one trajectory in $\Gamma$ in this or the reverse order$\}$, the factor 2 accounts for the two possible orders.

     \noindent\begin{minipage}{\linewidth}\begin{center}
         \includegraphics[scale=.5]{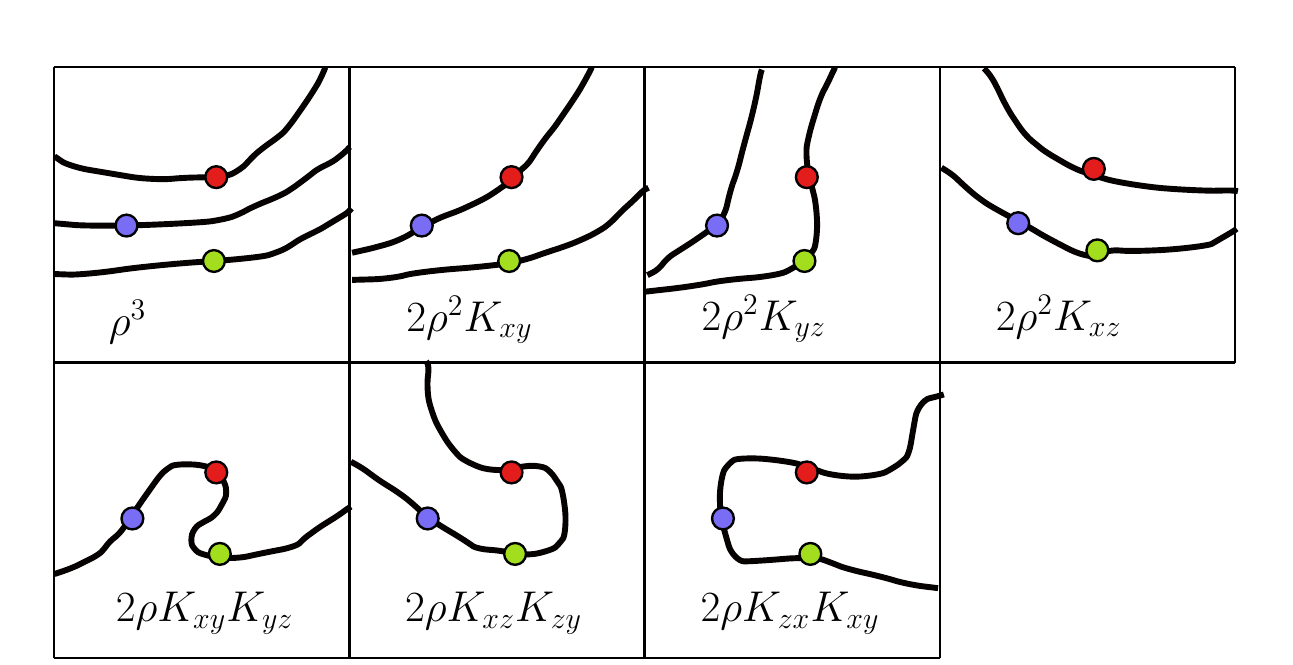}%
\captionof{figure}{Gaussian random interlacements 3-point correlations. Each square represents a part of the set of trajectories containing the points $x$, $y$, $z$. The point $x$ is blue, $y$ is red and $z$ is green.
\label{correlations-ri}}
\end{center}\end{minipage}

\section{Infinite volume Gaussian random permutation}
\label{infvol}

In this section we define the infinite volume Gaussian random permutation. In  \S\ref{grp} we show that the Gaussian random permutation has a Markov property and in \S\ref{gm1} that it is a Gibbs measure for the specification induced by \eqref{finite-box}. Let $\kG$ be defined by
\begin{align}
  \kG:= \{\Gamma\cup \Gamma': \Gamma\in\kD,\;\Gamma'\in \kW\},
\end{align}
this is the space of superpositions of locally finite loop and interlacement configurations. 
A configuration $\Gamma\in\kG$ is in bijection with the spatial permutation $(\sX,\sigma)$ defined by
\begin{align}
  \label{86b}
  \sX:= \cup_{\gamma\in \Gamma} \{\gamma\};\quad \sigma(x) := \gamma(x), \quad \hbox{if $x\in\{\gamma\}$ and $\gamma\in\Gamma$}.
\end{align}
We will abuse notation and use indistinctly $\Gamma$ and $(\sX,\sigma)$ to refer to the spatial permutation.

Recall the definition of the critical density $\rho_c = \left(\frac{\alpha}{\pi}\right)^\frac{d}{2} \sum_{k \geq 1} k^{-\frac{d}{2}}$ in \eqref{rcrit}, so that 
$\rho_c=\infty$ if $d=1,\,2$. Define
\begin{align}
  \label{sol-lambda}
  \lambda(\rho_c)&:=1, \notag \\
  \lambda(\rho) &:= \hbox{the solution to }\rho = \left(\frac{\alpha}{\pi}\right)^{\frac{d}{2}}
  \sum_{k \geq 1} \lambda^k\,k^{-\frac{d}{2}},\quad \rho\in(0,\rho_c).
\end{align}
Given two independent processes $\Gamma$ and $\Gamma'$ with laws $\mu$ and $\mu'$, we denote by
\begin{align}
\label{conv}
\mu * \mu' \; := \; \text{law of }\; \Gamma\cup\Gamma',
\end{align}
the law of the superposition of the processes.

\begin{Definition}
  \label{def}\rm
The \emph{Gaussian random permutation} in $\R^d$ at density
$\rho>0$ is defined by
\begin{align}
  \label{grp7}
  \mu_\rho :=
  \begin{cases}
    \mu^{\ls}_{\lambda(\rho)},
    &\hbox{if } d\le2 \text{ and } \rho>0,\;\text{ or }\; d\ge3\;\text{and } \rho\le \rho_c, \\[2mm]
    \mu^{\ls}_1 * \mu^{\ri}_{\rho-\rho_c},
    &\hbox{if }\;d\ge3\;\text{and } \rho> \rho_c,
  \end{cases}
\end{align}
where the Gaussian loop soup $\mu^{\ls}_\lambda$ and the Gaussian random interlacements $\mu^{\ri}_{\rho-\rho_c}$ are as in Definitions \ref{loop soup} and \ref{randomi}, respectively. 
\end{Definition}

\subsection{Markov property}
\label{grp}
We show that for all $\rho>0$ the Gaussian spatial permutation $\mu_\rho$ is Markov in the sense 
of Proposition \ref{markov}.

We define the ``inside'' and  ``outside'' components of a spatial permutation $\Gamma=(\sY,\kappa)\in\kG$ with respect to a bounded Borel set $\Lambda\subset\R^d$.  Define the sets of points 
\begin{align}
   \arraycolsep=2pt\def\arraystretch{.2}
 \begin{array}{rlcl}
   I\sY&:=\sY\cap\Lambda,&\ \ &\hbox{points in $\Lambda$; red,}\\[2mm]
     O\sY&:=\sY\cap\Lambda^c,& &\hbox{points in $\Lambda^c$; purple and yellow,}\\[2mm]
  U\sY&:=\{u\in\sY\cap\Lambda^c: \kappa(u)\in\Lambda\},& &\hbox{points in $\Lambda^c$ with image in $\Lambda$; yellow $u$,}\\[2mm]
  V\sY&:=\{v\in\sY\cap\Lambda^c: \kappa^{-1}(v)\in\Lambda\},& &\hbox{points in $\Lambda^c$ with pre-image in $\Lambda$; yellow $v$,}
   \end{array}\label{IX}
\end{align}
and the maps
\begin{align}
\label{cmaps}
    \arraycolsep=1.5pt\def\arraystretch{.2}
 \begin{array}{rlcrlll}
     I\kappa:& I\sY\cup U\sY \to I\sY\cup V\sY,&\ \
   & I\kappa(x) &= \kappa(x); &\ \
   &\hbox{red arrows},
   \\[2mm]
   O\kappa:& O\sY\setminus U\sY\to O\sY\setminus V\sY,&\ \
     & O\kappa(x) &= \kappa(x);  &\ \
   &\hbox{purple arrows}.
   \end{array}
\end{align}
Colors and labels $u,v$ refer to Fig.~\ref{gibbs-3}. 
We have dropped from the notation the dependence on $\Lambda$, the dependence on $\kappa$ of the definition of $U\sY$ and $V\sY$ and the dependence on $\sY$ of the definition of $O\kappa$ and $I\kappa$. 
Denote the inside and outside components of $\Gamma=(\sY,\kappa)$ with respect to $\Lambda$, and the boundary points, by
\begin{align}
\label{oo36}
I(\sY,\kappa) := (I\sY, I\kappa), \quad  O(\sY,\kappa) := (O\sY, O\kappa), \quad (U,V)(\sY,\kappa):=(U\sY,V\sY) .
\end{align}
   
   \noindent\begin{minipage}{\linewidth}
     \begin{center}
       \includegraphics[scale=.5]{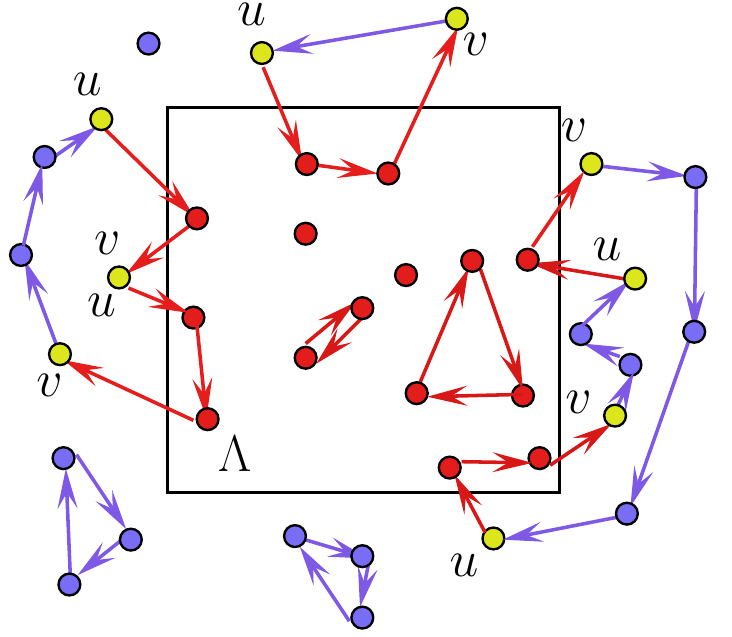}\quad\quad\quad%
         \includegraphics[scale=.5]{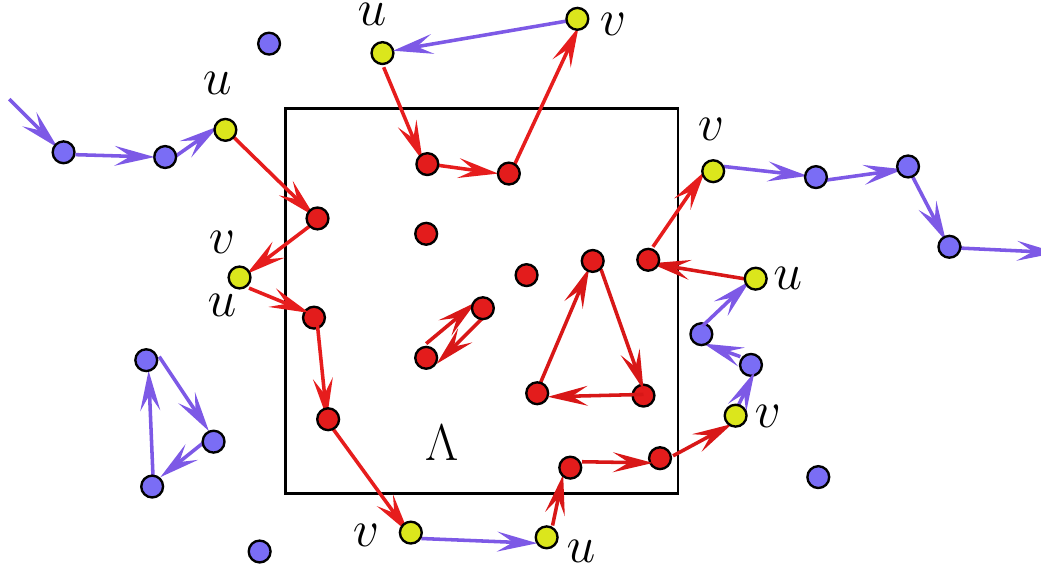}
\\
\captionof{figure}{
Left: decomposition of boundary loops of a loop soup with respect to the set $\Lambda$. Right: decomposition of boundary loops and infinite paths intersecting a bounded set $\Lambda$. 
  }
\label{gibbs-3}
\end{center}
\end{minipage}

In the next proposition we show that the law of the inside component of a random spatial permutation given its outside component is determined by the outside boundary points. 
Let $\Lambda$ be a bounded Borel set and call $\crF_{\tti \Lambda}$, $\crF_{\ttb \Lambda}$ and $\crF_{\tto \Lambda}$ the $\sigma$-algebras generated by the maps $\Gamma\mapsto I\Gamma$, $\Gamma\mapsto (U,V)\Gamma$ and $\Gamma\mapsto O\Gamma$, respectively.
\begin{Proposition}[Markov property]
  \label{markov}
  The Gaussian random permutation $\mu_\rho$ satisfies the following Markovian property: let $\Lambda$ be a bounded Borel set and $g$ be an $\crF_{\tti \Lambda}$ measurable bounded function. Then
  \begin{align}
    \label{markov53}
   \mu_\rho (g|\crF_{\tto \Lambda}) =  \mu_\rho (g|\crF_{\ttb \Lambda}).
 \end{align}

 \end{Proposition}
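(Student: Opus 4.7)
The plan is to establish the Markov property by combining (a) the Poisson independence of loops/trajectories for $\mu_\rho$, and (b) the Markov property of the underlying discrete Gaussian random walk, via an excursion decomposition with respect to the boundary of $A$.

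\textbf{Step 1: Thinning.} I would first split each configuration $\Gamma\sim\mu_\rho$ into three independent sub-configurations according to whether a loop or trajectory (i) lies entirely in $A^c$, (ii) lies entirely in $A$, or (iii) visits both $A$ and $A^c$. By the Poisson thinning property and, when $\rho>\rho_c$, by the independence of the loop-soup and interlacements in \eqref{grp7}, these sub-configurations are mutually independent Poisson processes with intensities obtained by restricting $\muq^{\ls}_{\lambda(\rho\wedge\rho_c)}$ and $Q^{\ri}_{\rho-\rho_c}$. Sub-configurations of type (i) contribute only to $O\Gamma$ and of type (ii) only to $I\Gamma$; only type (iii) objects couple $I\Gamma$ and $O\Gamma$, so I may condition on the type-(iii) sub-configuration throughout.

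\textbf{Step 2: Excursion decomposition.} For a type-(iii) loop $\gamma$, let $u_1,\dots,u_m\in U\zeta$ and $v_1,\dots,v_m\in V\zeta$ be its boundary crossings in cyclic order, so that $\gamma$ decomposes into $m$ inside excursions (each a finite path in $A$ from some $u_j$ to some $v_j$) alternating with $m$ outside excursions (each a finite path in $A^c$ from some $v_j$ to some $u_{j+1}$). Using the product form \eqref{fls2} of $\omega_\lambda$ and factoring the transition kernels $p_{1/(2\alpha)}(\cdot,\cdot)$ along each excursion, one obtains
\begin{align*}
\omega_\lambda(\gamma)\ =\ \tprod_{j=1}^m K^{\mathrm{in}}_{A,\lambda}(u_j,v_j)\,K^{\mathrm{out}}_{A,\lambda}(v_j,u_{j+1}),
\end{align*}
where $K^{\mathrm{in}}_{A,\lambda}(u,v)$ integrates the walk densities over all finite paths from $u$ to $v$ whose intermediate points lie in $A$, and $K^{\mathrm{out}}_{A,\lambda}(v,u)$ is the analogue for intermediate points in $A^c$. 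An identical decomposition, with $Q^\capa_A$ replaced by its path factorisation and with one inside excursion per entrance time of the trajectory in $A$, applies to each type-(iii) interlacement trajectory; transience in $d\ge 3$ guarantees that only finitely many excursions occur.

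\textbf{Step 3: Factorisation of the joint density.} Plugging the Step-2 factorisations into \eqref{fls22} (and into the analogous Poisson density for $\mu^{\ri}_\beta$), one finds that the joint density of $\Gamma$ on an arbitrary localising set factors as
\begin{align*}
f(\Gamma)\ =\ F^{\tto}\bigl(O\Gamma\bigr)\,\cdot\,F^{\tti}\bigl(I\Gamma\,;\,U\zeta,V\zeta\bigr),
\end{align*}
where $F^{\tti}$ collects the product of $K^{\mathrm{in}}_{A,\lambda}$'s over inside excursions together with the Poisson weight of loops of type (ii), and $F^{\tto}$ collects the product of $K^{\mathrm{out}}_{A,\lambda}$'s over outside excursions and the Poisson weights of type-(i) loops and of the outside pieces of trajectories. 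Crucially, $F^{\tti}$ depends on the outside data only through the unordered sets $U\zeta$ and $V\zeta$. A regular-conditional-distribution computation then yields that $\mu_\rho(\,\cdot\,\mid\cF_{\tto A})$ is the (normalised) measure proportional to $F^{\tti}(\cdot\,;U\zeta,V\zeta)$, which by construction is $\cF_{\ttb A}$-measurable; this is precisely \eqref{markov53}.

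\textbf{Main obstacle.} I expect Step 2 to be the technical crux: writing down the excursion factorisation in a measure-theoretically clean way requires care with the quotient by cyclic shifts that defines $D_k$ (so the combinatorial factor $1/k$ in \eqref{unrooted-bridge} distributes correctly over the $m$-excursion decomposition, with an appropriate sum over the cyclic order of pairings between the $u_j$'s and $v_j$'s) and, for the interlacements, with the quotient by $\Z$-shifts defining $\tW$ together with the replacement of $Q^{\capa}_A$ by $Q^{\unif}_A$ afforded by Proposition~\ref{pro27}. Once the factorisation of Step 2 is established rigorously, Step 3 is a routine calculation and the Markov property drops out.
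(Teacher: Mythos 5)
Your plan follows the paper's own proof essentially step for step: restrict to the boundary-crossing loops and trajectories after splitting the configuration into independent Poisson pieces (the paper's partition \eqref{part54}), factorize the weight of each crossing object into inside and outside excursion contributions so that the inside density depends on the outside only through the labelled boundary points $(U,V)$ (the paper's \eqref{fact1}--\eqref{fact2} and \eqref{wa53}--\eqref{fact5}), and read off the conditional independence as in \eqref{S51}--\eqref{S56}. One small caveat: the displayed identity in your Step 2 is only the marginal weight of the crossing structure (it holds after integrating out the intermediate points), whereas the factorization actually needed, and correctly stated in your Step 3, is of the unintegrated density into products of the transition densities $p_{1/2\alpha}$ along inside and outside excursions.
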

 Informally, the statement is: if we condition on the purple and yellow labeled points and the purple arrows of Fig.~\ref{gibbs-3}, then the law of the red points and red arrows is determined by the positions and labels of the yellow points.

 \begin{proof} 

   Recall that $\DD_\Lambda$ consists of the loops contained in $\Lambda$, and $\tW_\Lambda$ is the set of trajectories intersecting $\Lambda$. Call $\BB_\Lambda=\DD\setminus (\DD_\Lambda\cup \DD_{\Lambda^c})$ the set of loops that intersect both $\Lambda$ and $\Lambda^c$, the boundary loops. We have the following partition:
   \begin{align}
     \label{part54}
     \DD\cup \tW = \DD_\Lambda \,\dot\cup\, \DD_{\Lambda^c} \,\dot\cup\, \BB_\Lambda \,\dot\cup\, \tW_\Lambda\,\dot\cup\, (\tW\setminus \tW_{\Lambda}).
   \end{align}
Let $\Gamma$ be a sample of $\mu_\rho$. The intersection of $\Gamma$ with the above sets produces five independent Poisson processes. The loops contained in $\Lambda$ 
are independent of $\crF_{\tto \Lambda}$; on the other hand, the loops contained in $\Lambda^c$ and the infinite trajectories not intersecting $\Lambda$ are independent of $\crF_{\tti \Lambda}$. The problem thus reduces to proving identity \eqref{markov53} for 
 $g$ measurable with respect to ${\crF}_{\tti \Lambda}\big|_{\BB_{\Lambda}\cup \tW_\Lambda}$, the $\sigma$-algebra generated by the mapping $\Gamma\to I\Gamma$ restricted to $\BB_{\Lambda}\cup \tW_\Lambda$.

Define the inside and outside components of $\Gamma$ by 
\begin{align}
  \tI\Gamma&:=\big\{\eta=(x_0,\dots, x_{\ell(\eta)+1}): x_0\in U\Gamma;\,  x_i=\kappa^i(x_0),\,i=1,\dots,\ell(\eta)+1\}, \label{dd90}
             \\
  \tO\Gamma&:=\big\{\eta'=(y_0,\dots, y_{\ell'(\eta')+1}): y_0\in V\Gamma;\,  y_i=\kappa^i(y_0),\,i=1,\dots,\ell'(\eta')+1\} \label{dd91}
  \\
  &\qquad\qquad \cup\,\big\{\eta''= (\dots,z_{-1},z_0): z_0 \in U\Gamma, \text{ and }  z_{-i}=\kappa^{-i}(z_0) \in \Lambda^c \text{ for all }i\ge1\bigr\}\label{dd92}
 \\
  &\qquad\qquad \cup\,\big\{\eta''= (z_0,z_1,\dots): z_0 \in V\Gamma, \text{ and }  z_i=\kappa^{i}(z_0) \in \Lambda^c \text{ for all }i\ge1\bigr\},\label{dd93}
\end{align}
where $\ell(\eta):=\min\{\ell\ge 1: \kappa^{\ell+1}(u)\in V\Gamma\}$ and $\ell'(\eta'):=\min\{\ell\ge 0: \kappa^{\ell+1}(v)\in U\Gamma\}$. In particular $x_{\ell(\eta)+1}\in V\Gamma$ and $y_{\ell'(\eta')+1}\in U\Gamma$.
In Fig.~\ref{gibbs-3} each element of $\tI\Gamma$ is a red path from a $u$ point to a $v$ point, while the elements of $\tO\Gamma$ are purple paths joining a $v$ point to a $u$ point and semi-infinite purple paths contained in $\Lambda^c$ starting at a $v$ point or ending at a $u$ point.
Each path in the inside boundary $\tI \Gamma$ contains at least one point in $\Lambda$, so that $\ell(\eta)\ge1$. The outside boundary $\tO\Gamma$ may contain a path  $(v,u)$ with $v\in V\Gamma$, $u\in U\Gamma$ so that $\ell(\eta')=0$ in this case (see the upper $u,v$ in Fig.~\ref{gibbs-3}). There is no path when  $u=v\in U\Gamma\cap V\Gamma$. Notice that for $\Gamma \subset \BB_{\Lambda}\cup \tW_\Lambda$, $\tO\Gamma=O\Gamma$, as $\Gamma$ does not have any loops or trajectories that do not intersect $\Lambda$. On the other hand, 
$\tI\Gamma$ differs from $I\Gamma$ in that the former includes the $(U,V)$ boundary points of the loops and trajectories in $\Gamma$.

 We first prove the result for $\rho\le\rho_c$, that is the loop soup with fugacity $\lambda=\lambda(\rho)<1$  and $\lambda=1$ for $d\ge3$. 
Given $\ell\ge0$, denote
\begin{align}
  p(z_0,\dots,z_{\ell+1}) := \tprod_{i=0}^\ell \,p_{1/2\alpha}(z_i,z_{i+1}), 
                                \label{pes3}
 \end{align}
  where $p_{1/2\alpha}$ is the Gaussian density in \eqref{Brownian}. 
Recall the loop soup weight $f _{\lambda}^{\ls}$ defined in  \eqref{fls1} and observe that for $\Gamma\subset \BB_\Lambda$, we have
  \begin{align}
   f_{\lambda}^{\ls} (\Gamma)&=   f_{\lambda}^{\ls,\tti} (\tI\Gamma)\,  f_{\lambda}^{\ls,\tto} (\tO\Gamma),\label{fact1}\\[3mm]
    f_{\lambda}^{\ls,\tti} (\tI\Gamma):=  \prod_{\eta\in\tI\Gamma} \lambda^{\ell(\eta)}\,p(\eta),\quad&\quad
    f_{\lambda}^{\ls,\tto} (\tO\Gamma):=  \lambda^{n(U\cup V)}\prod_{\eta'\in\tO\Gamma} \lambda^{\ell(\eta')}\,p(\eta').\label{fact2}
  \end{align}
  where $n(U\cup V)$ denotes the cardinality of the set $U\cup V$.
By \eqref{dGamma}, if $\fF\subset \DD$ has finite intensity measure $Q^{\ls}_\lambda(\fF)<\infty$, when $g(\Gamma)$ is determined by the loops of  $\Gamma$ that belong to $\fF$, we have
  \begin{align}
    \mu^{\ls}_\lambda g &= \int_{\kD_\fF}\, g(\Gamma)\, f(\Gamma)\, d\Gamma,\label{lsls1}\\
  f(\{\gamma_1,\dots,\gamma_{\ell}\})&:= e^{-Q_\lambda(\fF)}\,\one_{\{\gamma_1<\dots<\gamma_\ell\}}\, f^{\ls}_\lambda(\{\gamma_1,\dots, \gamma_\ell\}).\label{lsls2}
  \end{align}

Let now $g$ be ${\crF}_{\tti \Lambda}\big|_{\BB_{\Lambda}}$ measurable, and consider a bounded, ${\crF}_{\tto \Lambda}\big|_{\BB_{\Lambda}}$-measurable function $h$. With the lexicographic order in $\R^d$, by \eqref{lsls1}, we have
\begin{align}
\mu^{\ls}_{\lambda}(gh)&=e^{-Q_\lambda^{\ls}(\DD_\Lambda)}\sum_{k\ge 0} \int_{(\Lambda^c)^k\times(\Lambda^c)^k} \one_{\{u_1\le \dots\le u_k\}} \,du_1\dots du_k \,\, dv_1\dots dv_k \notag\\
&\hspace{.8cm} \times \int_{ \tI(\kD_\Lambda)(u_1,\dots,u_k;\,v_1,\dots,v_k)} g(\Gamma_{\tti})f_\lambda^{\ls,\tti}(\Gamma_{\tti})\,d\Gamma_{\tti} 
\times \int_{ \tO(\kD_\Lambda)(u_1,\dots,u_k;\,v_1,\dots,v_k)}h(\Gamma_{\tto}) f_\lambda^{\ls, \tto}(\Gamma_{\tto})\, d\Gamma_{\tto} \label{S51}
\end{align}
where 
$\tI(\kD_\Lambda)(u_1,\dots,u_k;\,v_1,\dots,v_k)$ denotes the set of inside components of permutations $\Gamma \in \kD_\Lambda$ with boundary conditions $(u_1,v_1),\dots (u_k, v_k)$, and similarly $\tO(\kD_\Lambda)(u_1,\dots,u_k;\,v_1,\dots,v_k)$ is the set of outside components of permutations compatible with the boundary conditions $(u_1,v_1),\dots (u_k, v_k)$, $\kD_\Lambda$ as in \eqref{dl28}. For a bounded test function $\tg$, the integral in \eqref{S51} is defined as
\begin{align}
\label{S52}
  \int_{ \tI(\kD_\Lambda)(u_1,\dots,u_k;\,v_1,\dots,v_k)} \tg(\Gamma_{\tti})\,d\Gamma_{\tti}
  =\int \tg(\{\eta_1,\dots,\eta_k\}) \prod_{j=1}^k\one_{\{(U,V)(\eta_j)=(u_j,v_j)\}}\,d\eta_1\dots d\eta_k,
\end{align}
where the integral over inside paths from $u$ to $v$ is given by
\begin{align}
\label{S53}
  \int_{\tI(\kD_\Lambda)(u;v)} \bg(\eta)\,d\eta
  =\sum_{\ell\ge 1} \int_{\Lambda^\ell}  \bg\big((u,x_1,\dots,x_\ell,v)\big)\, dx_1 \dots dx_\ell,
\end{align}
$ \bg$ a bounded test function.
Similarly,
\begin{align}
\label{S54}
  \int_{ \tO(\kD_\Lambda)(u_1,\dots,u_k;\,v_1,\dots,v_k)} \thh(\Gamma_{\tto})\,d\Gamma_{\tto}
  =\int \thh(\{\eta'_1,\dots,\eta'_k\}) \prod_{j=1}^k\one_{\{(V,U)(\eta'_j)=(v_j,u_j)\}}\,d\eta'_1\dots d\eta'_k,
\end{align}
where the integration over outside paths from $v$ to $u$ is defined by
\begin{align}
\label{S55}
  \int_{\tO(\kD_\Lambda)(u;v)} \bh(\eta)\,d\eta
  =\sum_{\ell'\ge 0} \int_{(\Lambda^c)^{\ell'}} \bh\big((v,y_1,\dots,y_{\ell'},u)\big)\, dy_1 \dots dy_{\ell'} \,.
 \end{align}
From \eqref{S51}, with definitions \eqref{S52}-\eqref{S55}, it is clear that 
\begin{align}
\label{S56}
  \mu_\lambda^{\ls}(gh)
  =E\bigl[ E\bigl[g(\tI\Gamma)|{\crF}_{\ttb \Lambda}\bigr]h\bigr],
\end{align}
and $\tI\Gamma$, $\tO\Gamma$ are independent given $(U,V)(\Gamma)$, the Markov property.
  
We turn now to the supercritical case, that is,  $d\ge 3$ and $\rho>\rho_c$. The Gaussian random permutation $\mu_\rho$ is the law of the superposition of independent realizations of the Gaussian loop soup at density $\rho_c$ (fugacity $\lambda=1$) and  
the Gaussian random interlacements at density $\rho-\rho_c$. 

A trajectory $w\in \tW_\Lambda$ enters $\Lambda$ for the first time from a point $s(w)$ and visits $\Lambda$ for the last time before jumping to a point $t(w)$. More precisely, 
\begin{align}
\label{s-t}
  s(w) &:= s \hbox{ if and only if } s\in Uw\hbox{ and } w^{-\ell}(s) \in \Lambda^c,
             \hbox{ for all }\ell\ge1,\nn\\
  t(w) &:= t \hbox{ if and only if } t\in Vw\hbox{ and }  w^{\ell}(t) \in \Lambda^c, \hbox{ for all }\ell\ge1.
\end{align}
For each  $w\in \tW_\Lambda$, denote by $w_\Lambda$ the piece of trajectory between $s(w)$  and $t(w)$,
\begin{align}
  \label{wa54}
  w_\Lambda&:= (z_0,z_1,\dots,z_{m(w)+1}), \hbox{ with }\, z_0=s(w),\, z_i=w^i(s),\, i=0,\dots,m(w)+1,
\end{align}
where $m(w)$ is the number of points of $w$ between the first and the last visits to $\Lambda$, in particular  $z_{m(w)+1}=t(w)\in \Lambda^c$. Note that $z_1$ and $z_{m(w)}$ belong to $\Lambda$ but intermediate points may not.
 By the definition of $Q^{\ri}$, the marginal weight of the piece of trajectory $w_\Lambda$ under $\mu^{\ri}_{\rho-\rho_c}$ is 
\begin{align}
  \label{wa53}
  f^{\ri}_{\rho-\rho_c}(w_\Lambda)
  &=(\rho-\rho_c) \,q(s(w))\,q(t(w))\,\prod_{\eta\in \tI w_\Lambda} p(\eta) \prod_{\eta' \in \tO w_\Lambda}p(\eta'),
  \end{align}
  where  $p$ was defined in \eqref{pes3}, and $q(x) := P^x(X_n\notin \Lambda, n\ge 1)$, where $P^x$ was defined in \eqref{psupx}. By definitions \eqref{dd90} and \eqref{dd91}, $\tI w_\Lambda$ are the red paths of $w_\Lambda$ joining yellow and/or green points in Fig.~\ref{gibbs-4} and $\tO w_\Lambda$ are the purple paths of $w_\Lambda$ between yellow/green points and the purple semi-infinite paths starting from green points. 
 
     \noindent\begin{minipage}{\linewidth}\begin{center}
    \includegraphics[scale=.5]{gibbs-33}\quad\quad %
    \includegraphics[scale=.5]{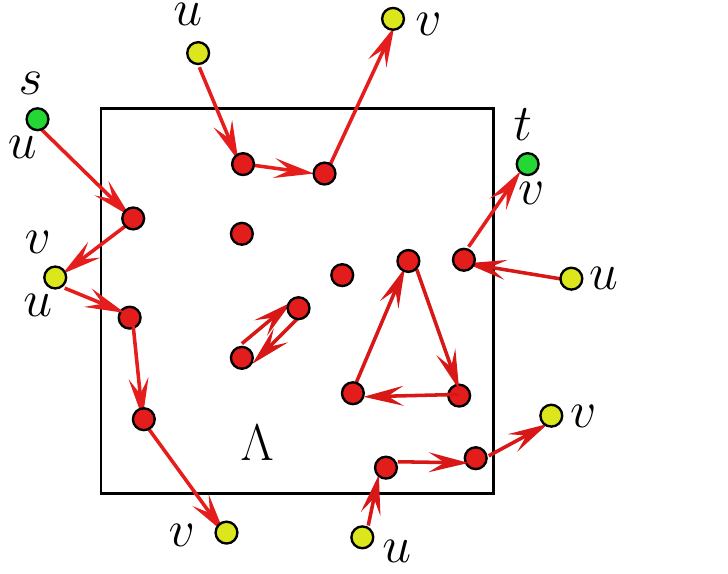}%
\\
\captionof{figure}{Decomposition of a spatial permutation $\Gamma$ intersecting $\Lambda$ in the supercritical case. Loops are decomposed as in Fig.~\ref{gibbs-3}. Only the finite trajectory between points labeled $s$ and $t$ of infinite paths is relevant for the intersection of the configuration with $\Lambda$, as shown in the picture on the right. \label{gibbs-4}}
\end{center}\end{minipage}

Given the positions and labels of the points $(U,V)$, the semi-infinite trajectories starting at $t(w)$ or finishing at $s(w)$ for $w\in \Gamma^{\ri}_{\rho-\rho_c}\cap \tW_\Lambda$ are independent of $w_\Lambda$, the piece of trajectory between these points, by the Markov property of the random walk. This and the properties of the Poisson processes imply that given  the positions and labels of the points $(U,V)$, those semi-infinite trajectories are independent of $\crF_{\tti \Lambda}$.

Given a spatial permutation $\Gamma$, denote by $\Gamma':= \Gamma \cap \BB_\Lambda$, the boundary loops and $\Gamma'':=\{w_\Lambda: w\in \Gamma\cap \tW_\Lambda\}$, the pieces of interlacements intersecting $\Lambda$. Using \eqref{fact2} and \eqref{wa53}, the weight $f _{\rho}$ of these configurations under $\mu_\rho$ are given by
  \begin{align}
    f_\rho (\Gamma'\cup\Gamma'')
    &=   f_\rho (\tI(\Gamma'\cup \Gamma''))\,  f_\rho (\tO(\Gamma'\cup\Gamma'')),\quad\text{where}\label{fact4}\\[3mm]
    f_\rho (\tI(\Gamma'\cup\Gamma''))&:=  \prod_{\eta\in\tI(\Gamma'\cup\Gamma'')} \,p(\eta),\nn\\
    f_{\rho} (\tO(\Gamma'\cup\Gamma''))&:=\prod_{\eta'\in\tO(\Gamma'\cup\Gamma'')}\,p(\eta')\,\,
      \prod_{w\in\Gamma''}  (\rho-\rho_c)\,q(s(w))\,q(t(w)).\label{fact5}
  \end{align}
An argument similar to \eqref{lsls1}-\eqref{S56} proves \eqref{markov53} for $\rho>\rho_c$. 
\end{proof}

\subsection{Gibbs measures}
\label{gm1}
We show here that $\mu_\rho$ is Gibbs for the specification associated to the canonical measure \eqref{finite-box}.

We first define the specification related to \eqref{finite-box}. 
Given a bounded Borel set $\Lambda$ we say that the spatial permutations $(\sX,\sigma)$  and $(\sY,\kappa)$ are $\Lambda$-compatible, and write $(\sX,\sigma) \stackrel{\Lambda}{\sim}(\sY,\kappa)$, if their outside components with respect to $\Lambda$ match, that is, if 
$  O (\sX,\sigma) = O (\sY,\kappa)$, see definition \eqref{IX}.
Given a spatial permutation $(\sY,\kappa)$, define 
\begin{align*}
  H_{\Lambda}\bigl((\sX,\sigma)\big| (\sY,\kappa)\bigr) &:= \sum_{x\in I\sX\,\cup \,U\sY} \|x-\sigma(x)\|^2,\qquad \text{for }(\sX,\sigma) \stackrel{\Lambda}{\sim}(\sY,\kappa).
\end{align*}
For $\lambda < 1$, and $\lambda=1$ if $d\ge 3$ let $G_{\Lambda,\lambda} \big(\cdot \big|(\sY,\kappa)\big)$ be the measure acting on bounded measurable test functions $g:\kG\to\R$ by 
\begin{align}
  G_{\Lambda,\lambda} \bigl(\,g \,\big|(\sY,\kappa)\bigr)
  &:=
    \sum_{m\ge n(U\sY)}
    \frac{1}{m!}\int_{\Lambda^m}
    \sum_{\substack{(\sX, \sigma)\stackrel{\Lambda}{\sim} (\sY,\kappa):\\I\sX=\{x_1,\dots, x_m\}}}
     g(\sX,\sigma)\,
   f_{\Lambda,\lambda} \bigl((\sX,\sigma)\big|(\sY,\kappa)\bigr)\,
    dx_1,\dots dx_m, \label{specif}
 \\[3mm]%
  f_{\Lambda,\lambda} \bigl((\sX,\sigma)\big|(\sY,\kappa)\bigr)
   &:= \frac1{Z_{\Lambda,\lambda}(\sY,\kappa)}
                                                               {\lambda^{n(I\sX)}\big((\alpha/\pi)^{d/2}\big)^{n(I\sX)+n(U\sY)}}
                   e^{- \alpha H_{\Lambda}((\sX,\sigma)| (\sY,\kappa))},\label{f70}
\end{align}
where we recall that $n(\sX)$ denotes the cardinality of $\sX$. The partition function is
\begin{align}
  Z_{\Lambda,\lambda}(\sY,\kappa)
  &:= \sum_{m\ge n(U\sY)}
      \frac{1}{m!}\,{\lambda^{m}\,\big((\alpha/\pi)^{d/2}\big)^{m+n(U\sY)}}
      \int_{\Lambda^m}
    \sum_{\substack{(\sX, \sigma)\stackrel{\Lambda}{\sim} (\sY,\kappa):\\I\sX=\{x_1,\dots, x_m\}}}
  e^{- \alpha H_{\Lambda}((\sX,\sigma)| (\sY,\kappa))}\,
    dx_1,\dots dx_m. \nn\label{zal}
\end{align}
For each $0<\lambda < 1$, and $\lambda=1$ if $d\ge 3$, the family of kernels  $(G_{\Lambda,\lambda})_\Lambda=\{G_{\Lambda,\lambda}:\Lambda $ bounded Borel set of $\R^d\}$ is called the specification associated to the Hamiltonian $H$ at fugacity $\lambda$. The next result says that the Gaussian random permutation $\mu_\rho$ is Gibbs with respect to this specification.
\begin{Proposition}[Gibbs measures]
  \label{t43}
  For $\lambda < 1$ and $\lambda=1$ if $d\ge 3$, the measure $\mu^{\ls}_\lambda$ is Gibbs for the specification $(G_{\Lambda,\lambda})_\Lambda$, that is,  it satisfies the DLR equations
  \begin{align}
    \mu^{\ls}_\lambda g
    = \int d\mu^{\ls}_\lambda(\sY,\kappa)\,
    G_{\Lambda,\lambda}\big(\,g\,\big|(\sY,\kappa)\big).
  \end{align}
Furthermore, if $d\ge 3$, for all $\rho\ge \rho_c$  the measure $\mu_\rho=\mu^{\ls}_1 *  \mu^{\ri}_{\rho-\rho_c}$ is Gibbs for the specification $(G_{\Lambda,1})_\Lambda$.
\end{Proposition}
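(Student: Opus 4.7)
My plan is to reduce the DLR equation to a direct identification of the conditional law of the inside configuration, using the Markov property (Proposition \ref{markov}) together with the explicit factorizations \eqref{fact1}--\eqref{fact2} (and \eqref{fact4}--\eqref{fact5} in the supercritical case) derived in its proof. Concretely, for a bounded Borel set $\aA$ and a bounded $\cF_{\tti A}$-measurable test function $g$, by the tower property the DLR equation
$\mu^{\ls}_\lambda g = \int d\mu^{\ls}_\lambda(\zeta,\kappa)\, G_{\aA,\lambda}(g \mid (\zeta,\kappa))$
will follow once I show that the conditional expectation $\mu^{\ls}_\lambda(g \mid \cF_{\tto A})$ equals $G_{\aA,\lambda}(g \mid \cdot)$ $\mu^{\ls}_\lambda$-a.s. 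Since $G_{\aA,\lambda}(g \mid (\zeta,\kappa))$ depends on $(\zeta,\kappa)$ only through $(U,V)(\zeta,\kappa)$, it is $\cF_{\ttb A}$-measurable, so by Proposition \ref{markov} it is enough to verify
\begin{align*}
\mu^{\ls}_\lambda\bigl(g \,\big|\, \cF_{\ttb A}\bigr) = G_{\aA,\lambda}\bigl(g \,\big|\, (\zeta,\kappa)\bigr).
\end{align*}

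The second step is to compute both sides explicitly. On the left, the loops of $\Gamma$ entirely inside $\aA$ (the $D_\aA$-component) form, conditionally on the outside, a Poisson process with intensity $\one_{D_\aA}\, Q^{\ls}_\lambda$, independent of everything else; while the inside boundary paths $\tI\Gamma$, given the boundary pairs $(u_1,v_1),\dots,(u_k,v_k)$, have joint density proportional to $\prod_{j}\lambda^{\ell(\eta_j)} p(\eta_j)$ by \eqref{fact2}, where $p$ is the Gaussian product in \eqref{pes3}. Putting these together, and using the identity $p_{1/(2\alpha)}(x,y) = (\alpha/\pi)^{d/2} e^{-\alpha\|x-y\|^2}$ to convert each Gaussian factor, I get that the conditional law of the inside configuration assigns weight proportional to
\begin{align*}
 \lambda^{|I\chi|}\,(\alpha/\pi)^{d(|I\chi|+k)/2}\, e^{-\alpha H_\aA((\chi,\sigma)\mid(\zeta,\kappa))}
\end{align*}
to $(\chi,\sigma)\stackrel{A}{\sim}(\zeta,\kappa)$ with $|\chi\cap\aA|=n$, the counting of $p$-factors being: $|I\chi|$ from the internal-loop transitions plus $|I\chi|+k$ from all inside transitions, where the $+k$ matches the number of $U$-arrows that appear as extra terms in $H_\aA$. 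The factor $(\alpha/\pi)^{dk/2}$ depends only on boundary data and is absorbed by the normalization, so after normalizing I recover exactly $f_{\aA,\lambda}((\chi,\sigma)\mid(\zeta,\kappa))$ from \eqref{f70}. The Lebesgue measure on $\aA^n$ and the sum over $n\ge |V(\zeta,\kappa)|$ in \eqref{specif} come out automatically from Campbell's formula applied to the Poisson process of internal loops combined with the integration of the inside path endpoints.

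For the supercritical case $\rho>\rho_c$ and $\lambda=1$ in $d\ge 3$, I would use the same strategy on the superposition $\Gamma^{\ls}_1 \cup \Gamma^{\ri}_{\rho-\rho_c}$. The boundary loops of the loop soup contribute exactly as above. Each interlacement trajectory $w$ intersecting $\aA$ splits into the piece $w_\aA$ visiting $\aA$ and two semi-infinite excursions in $\aA^c$ attached at $s(w)$ and $t(w)$; by the strong Markov property of Gaussian random walk together with the factorization \eqref{wa53}, conditionally on the boundary points $(s(w),t(w))$ the piece $w_\aA$ is again a Gaussian bridge of random length, independent of the outside, with density proportional to $\prod p_{1/(2\alpha)}$ on its inside transitions. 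Since $\lambda=1$ here, no $\lambda$-bookkeeping is needed, and the same conversion to $e^{-\alpha\|\cdot\|^2}$ produces the specification $f_{\aA,1}$. Both internal loops and interlacement pieces contribute to the specification in a unified way, reflecting the fact that $H_\aA$ treats all inside transitions on equal footing.

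The main obstacle is purely the bookkeeping of the $\lambda$ and $(\alpha/\pi)^{d/2}$ prefactors — tracking how the $k = |U|=|V|$ extra transitions ``per boundary pair'' contribute to the density and how the normalization collapses the boundary-dependent constants into $Z_{\aA,\lambda}(\zeta,\kappa)$. Everything else is a direct application of the Markov property proven just above and of Campbell's formula for the Poisson processes involved; once the factorizations \eqref{fact1}--\eqref{fact2} and \eqref{fact4}--\eqref{fact5} are recognized as giving \emph{exactly} the kernel $f_{\aA,\lambda}$ after normalizing out the outside-dependent factor, the DLR equation is immediate.
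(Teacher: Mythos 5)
Your proposal is correct and follows essentially the same route as the paper: use the Markov property of Proposition \ref{markov} to reduce the DLR equation to identifying the conditional law of the inside configuration given $\cF_{\ttb A}$ with the kernel $G_{\aA,\lambda}$, then match weights — internal loops giving the grand-canonical factor (the paper invokes Proposition \ref{p24} here, you recompute it via the Poisson intensity on $D_\aA$) and boundary crossings giving the remaining Gaussian factors, with the boundary-only constants such as $(\alpha/\pi)^{dk/2}$ and $(\rho-\rho_c)\,q(s)q(t)$ absorbed into $Z_{\aA,\lambda}(\zeta,\kappa)$. Your $\lambda$ and $(\alpha/\pi)^{d/2}$ bookkeeping matches the factorizations \eqref{fact1}--\eqref{fact2} and \eqref{fact4}--\eqref{fact5} exactly, so there is no gap to report.
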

\begin{proof}
When there are no external boundary points, that is when $U(\sY,\kappa)=V(\sY,\kappa)=\emptyset$, the specification \eqref{specif} equals the grand-canonical measure \eqref{23c}, which by Proposition \ref{p24} is the Gaussian loop soup at fugacity $\lambda$ intersected with $\DD_\Lambda$. When  $U(\sY,\kappa) \neq\emptyset$, the specification \eqref{specif} factorizes as the product of the grand-canonical measure \eqref{23c} and the distribution of crossings of $\Lambda$. We conclude that the specification coincides with the conditional probabilities:
\begin{align}
  G_{\Lambda,\lambda}(g|(\sY,\kappa)) &= \mu_\rho(g|\crF_{\tto \Lambda})(\sY,\kappa) =\mu_\rho(g|\crF_{\ttb \Lambda}) (U,V)(\sY,\kappa) , \notag
\end{align}
where $\lambda=\lambda(\rho)$ for $d \le 2$, or $d\ge3$ and $\rho\le\rho_c$, and $\lambda=1$ for $d\ge3$ and  $\rho>\rho_c$.
\end{proof}

\begin{Remark}\rm
\label{non-unique}
All supercritical measures $\mu_\rho$, for $\rho\ge \rho_c$ are Gibbs for the same specification $(G_{\Lambda,1})_\Lambda$. The different point densities inside $\Lambda$ arise from the boundary conditions. Indeed, the number of paths crossing $\Lambda$ is equal to the cardinality of $U$ (or $V$). Part of those $U$-$V$ points belong to loops of the critical loop soup, and have the same law for all supercritical densities. The contribution to the extra density is given by the $U$-$V$ points belonging to infinite trajectories, whose number increases with $\rho$. The mean number of points in $\Lambda$ from trajectories that connect $U$-$V$ points arising from the random interlacements is  $(\rho-\rho_c)|\Lambda|$.

Consider now a large box $\Lambda_N$ with volume $N$. The trace left in $\Lambda_N$ by an interlacement trajectory that intersects this box typically comprises $O(N^{2/d})$ points. On the other hand, for supercritical densities $\rho>\rho_c$, a macroscopic loop in the canonical measure with $\rho N$ points in $\Lambda_N$  consists of $O(N)$ points. The heuristic explanation for this discrepancy is that, in the conjectured thermodynamic limit, macroscopic cycles will become infinite cycles going through $\infty$, and they will break up into several (infinitely many)
random interlacement trajectories. 
\end{Remark}

\section{The point marginal and the boson point process}
\label{s4}
We show that the point marginal of the Gaussian random permutation coincides with the boson point process associated to the free Bose gas. Let $\lambda(\rho)$ be defined as in \eqref{sol-lambda}, and denote the point marginal of the Gaussian random permutation \eqref{grp7} by 
\begin{align}
  \label{grpoint}
  \nu_\rho :=
  \begin{cases}
    \nu^{\ls}_{\lambda(\rho)},
    &\hbox{if } d\le2 \text{ and } \rho>0,\;\text{ or }\; d\ge3\;\text{and } \rho\le \rho_c, \\[2mm]
    \nu^{\ls}_1 * \nu^{\ri}_{\rho-\rho_c},
    &\hbox{if }\;d\ge3\;\text{and } \rho> \rho_c,
  \end{cases}
\end{align}
where $\nu^{\ls}_\lambda$ and $\nu^{\ri}_\rho$ are defined in \eqref{loop points} and \eqref{chi}, respectively.

\paragraph{The boson point process}
\emph{Subcritical and critical cases}. In \cite{TI} Tamura and Ito show that the point process for the free 
Bose gas in $\R^d$ at density $\rho>0$ if $d=1, 2$, and $\rho\le \rho_c$ if $d\ge 3$, and fugacity $\lambda=\lambda(\rho)$, is the point process described 
by Theorem 1.2 in Shirai and Takahashi \cite{ST}. Its law, denoted $\nu^{\st}_\lambda$, 
has Laplace functional given by
\begin{align}
  \label{st77}
  \cL^{\st}_\lambda(\phi) = \Det( I + \cK^\lambda_\phi )^{-1},
\end{align}
where $\phi:\R^d\to \R$ is a non-negative function with compact support, $\cK^\lambda_\phi$ is the linear operator on $L^2(\R^d)$ with kernel
\begin{align*}
\sqrt{1-e^{-\phi(x)}}\, K_\lambda(x,y) \sqrt{1-e^{-\phi(y)}}, \qquad K_\lambda(x,y) \text{ as in \eqref{Klambda}}, 
\end{align*}
and $\Det$ is the Fredholm determinant. They also show that the $n$-point correlation functions of $\nu^{\st}_\lambda$ are given by
\begin{align}
  \label{ST-lscorr}
  \varphi^{\st}_\lambda(x_1,...,x_n) = \perm \left(K_\lambda(x_i,x_j)\right)_{i,j=1}^n.
\end{align}
\emph{Supercritical case}.  
Bose-Einstein condensation occurs for the free Bose gas when $d \geq 3$ and $\rho>\rho_c$.
In this case Tamura and Ito \cite{TI2} describe the boson point process $\nu_\rho$ as the superposition of the process
$\nu^{\st}_{1}$ defined above with a process $\nu^{\TI}_{\rho-\rho_c}$,  with Laplace functional
\begin{align}
  \label{super11}
  \cL^{\TI}_{\rho-\rho_c}(\phi) = \exp \left( -(\rho-\rho_c) \big\langle 
\sqrt{1-e^{-\phi}}, (I+\cc K^1_\phi)^{-1} \sqrt{1-e^{-\phi}} \big\rangle \right),
\end{align}
where
$\phi:\R^d\to \R$ is a non-negative function with compact support, $\cc K^1_\phi$ is the operator defined in \eqref{st77}, and $\langle \cdot, \cdot\rangle$ is the usual
inner product in $L^2(\R^d)$.

Summarizing, Tamura and Ito define the \emph{boson point process}  by
\begin{align}
  \label{bosonp}
  \nu^{\boson}_\rho :=
  \begin{cases}
    \nu^{\st}_{\lambda(\rho)},
    &\hbox{if } d\le2 \text{ and } \rho>0,\;\text{ or }\; d\ge3\;\text{and } \rho\le \rho_c, \\[2mm]
    \nu^{\st}_1 * \nu^{\TI}_{\rho-\rho_c},
    &\hbox{if }\;d\ge3\;\text{and } \rho> \rho_c.
  \end{cases}
\end{align}

\begin{Proposition}[Equivalence of point  processes] \label{teo1}
  For any $\rho>0$ we have
  \begin{align*}
    \nu_\rho = \nu^{\boson}_\rho,
  \end{align*}
where the processes are defined in \eqref{grpoint} and \eqref{bosonp}, respectively.
\end{Proposition}

\begin{proof} \emph{Subcritical and critical cases}. 
By  Proposition \ref{ls-correl}, $\nu^{\ls}_\lambda$ and $\nu^{\st}_\lambda$ have the same $n$-point correlation functions, $\varphi^{\ls}_\lambda=\varphi^{\st}_\lambda$.
By a theorem of Lenard \cite{MR323270}, the measures coincide if the correlation functions satisfy the following bounds
  \begin{align}
    \label{len1}
    \sup_{x_1,\dots,x_n\in\R^d} |\varphi^{\ls}_\lambda(x_1,...,x_n)| < (c \,n^2)^n, \quad  n=1,2,\dots.
  \end{align}
  Since $K_\lambda(x,y)\le K_\lambda(0,0)<\infty$ under the conditions of the proposition, we have
  \begin{align}
    \label{len2}
    \varphi^{\ls}_\lambda(x_1,...,x_n) = \perm \left(K_\lambda(x_i,x_j)\right)_{i,j=1}^n \le n! \bigl(K_\lambda(0,0)\bigr)^n.
  \end{align}
This satisfies \eqref{len1}.
Alternatively, one can simply observe that the Laplace functional of $\nu^{\ls}_\lambda$ provided by \eqref{F1}-\eqref{F2} 
coincides with the Laplace functional of $\nu_\lambda^{\st}$, computed in  Theorem 4.1 of \cite{ST}.

\emph{Supercritical case}. 
 Let $\phi:\R^d\to \R$ be a non-negative function with compact support with
  $K_1(0,0) \|e^{-\phi}-1\|_1<1$. This condition allows to expand 
  $(I+\cc K^1_\phi)^{-1} = \sum_{n=0}^\infty (-1)^n (\cc K^1_\phi)^n$ and replace in 
  \eqref{super11} to obtain the Laplace functional $\cL^{\ri}_\rho(\phi)$ computed in \eqref{laplace1}.  
\end{proof}

\begin{Remark}\rm
\label{iso2}
For $d\ge 3$ and $\rho>\rho_c$, Eisenbaum \cite{Eis} proves that  the supercritical boson point process $\nu^{\boson}_\rho=\nu^{\st}_1 * \nu^{\TI}_{\rho-\rho_c}$ is a Cox process with (random) intensity 
\begin{align*}
\tfrac12\bigl(\Phi_1+\sqrt{2(\rho-\rho_c)}\bigr)^2+\tfrac12 \Psi_1^2,
\end{align*}
where $\Phi_1$ and $\Psi_1$ are independent, centered Gaussian fields with covariance $\big(K_1(x,y),\, x,y \in \R^d\big)$.
\end{Remark}

\begin{Remark}\rm
In view of Remarks \ref{iso1} and \ref{iso2}, by Campbell's formula, given a bounded, compactly supported test function $f:\R^d\to \R$, we get
\begin{align}
&\int\nu_\rho(d\sX)\sum_{x\in \sX}f(x) =E\Big[\int_{\R^d} f(x) \big( \tfrac12 \Phi^2_\lambda(x)+\tfrac12 \Psi^2_\lambda(x) \big) dx \Big] \quad\ \ \text{if }\quad\ \  
\begin{cases} d\le 2,\, & \rho>0\\ d\ge 3,\, & 0\le \rho\le \rho_c, \end{cases} \label{iso1.1}\\[2mm]
&\int{\nu_\rho(d\sX)}\sum_{x\in \sX}f(x)=E\Big[\int_{\R^d} f(x)\Big( \tfrac12\big(\Phi_1(x)+\sqrt{2(\rho-\rho_c)}\big)^2+\tfrac12 \Psi_1^2(x)\Big)\,dx\Big]\quad \text{if}\quad d\ge 3, \,\rho>\rho_c. \label{iso2.1}
\end{align}
Identities \eqref{iso1.1} and \eqref{iso2.1} may be interpreted as versions, for the boson point process, of the occupation times formulas derived by Le Jan \cite{MR2675000} for the loop soup, and Sznitman \cite{MR2892408} for the random interlacements.
\end{Remark}

\section{Conclusions}
\label{Sconclusions}

We have shown that the Gaussian random permutation, consisting of the superposition of independent realizations of a Gaussian loop soup and a Gaussian random interlacements process, is Gibbs for the specification induced by the Feynman's representation of the free Bose gas. The random interlacements are present only for point densities $\rho>\rho_c$. We then used the fact that the two processes are independent Poisson processes to show that the point marginal of the Gaussian random permutation coincides with the boson point process, studied previously in the literature.  

The interacting Bose gas proposed by Feynman \cite{Feynman} can be seen as a measure on the set of continuous trajectories $\uB:=(B_1,\dots,B_N)$, $B_i=(B_i(s))_{s\in[0,\beta]}$, where $\beta=\frac1{2\alpha}$. The canonical measure $\EuScript M_\Lambda$ associated to a pair interaction potential $\cV(x,y)$ acts on a test function $g$ by
\begin{align}
  \label{interact}
\EuScript M_\Lambda g &:= \frac1{Z_{\Lambda,N}}\frac{1}{N!}\sum_{\sigma \in \cS_N} \int_{\Lambda^n} d\ux\; \int\, P^{\ux,\sigma}(d\uB)\,g(\uB)\, \exp\bigl[-\tfrac1{2\beta} H(\ux,\sigma)-\tsum_{i<j}\,\tint_0^\beta\, \cV\bigl(B_i(s),B_j(s)\bigr)\,ds\bigr] ,  \notag
\end{align}
where $P^{\ux,\sigma}$ denotes the law of $N$ independent Brownian bridges such that $B_i(0)=x_i$  and $B_i(\beta)=x_{\sigma(i)}$, for each $i$, and the partition function  $Z_{\Lambda,N}$ is given in \eqref{Zlambda}, see \cite{zbMATH05504000}. An example of interaction is the hard-core potential $\cV(x,y) = \infty$ if $\|x-y\|\le b$, and zero otherwise.

The challenge is to construct a Gibbs measure in $\R^d$ for the specification associated to $\EuM_\Lambda$. For the ideal Bose gas, $\cV\equiv0$, the Gibbs measure can be easily constructed from the spatial Gaussian permutation $(\sX,\sigma)$ with law \eqref{grp7}, by considering a collection of independent Brownian bridges $\bigl(B^x)_{x\in\sX}$, where $(B^x(s))_{s\in[0,\beta]}$ is a Brownian bridge between $x$ and $\sigma(x)$ at time $\beta$. The resulting measure $\EuM^{\free}$
is Gibbs for the specification associated to $\EuM_\Lambda$ with $\cV\equiv0$. Concatenating the bridges,  one obtains a Brownian loop soup where the duration of each loop is a multiple of $\beta$, superposed in the supercritical case with a Brownian interlacement process. One may conjecture that there are pair potentials $\cV$ (say, bounded below and vanishing when $\|x-y\|$ exceeds some value $b$) such that there is a corresponding Gibbs measure for $\EuM_\Lambda$ whose realizations are local perturbations of the realizations of  $\EuM^{\free}$.

\section{Acknowledgments} We thank the referees for their comments and suggestions.

\bibliographystyle{amsplain}

\def\cprime{$'$}
\providecommand{\bysame}{\leavevmode\hbox to3em{\hrulefill}\thinspace}
\providecommand{\MR}{\relax\ifhmode\unskip\space\fi MR }
\providecommand{\MRhref}[2]{%
  \href{http://www.ams.org/mathscinet-getitem?mr=#1}{#2}
}
\providecommand{\href}[2]{#2}

\end{document}